\numberwithin{equation}{section}
\newtheorem{theorem}{Theorem}
\newtheorem{meta-thm}[theorem]{Meta-Theorem}
\newtheorem{proposition}[theorem]{Proposition}
\newtheorem{remark}[theorem]{Remark}
\newtheorem{definition}[theorem]{Definition}
\DeclareMathAlphabet{\mathcalligra}{T1}{calligra}{m}{n}
\newcommand\beq[1]{ \begin{equation}\label{#1} }
	\newcommand{\eeq}{ \end{equation} }
\newcommand{\beqno}{ \[ }
\newcommand{\eeqno}{ \] }
\newcommand\beqa[1]{ \begin{eqnarray} \label{#1}}
	\newcommand{\eeqa}{ \end{eqnarray} }
\newcommand{\beqano}{ \begin{eqnarray*} }
	\newcommand{\eeqano}{ \end{eqnarray*} }
\newcommand\equ[1]{{\rm (\ref{#1})}}
\def\P{{\mathcal P}}
\def\E{{\mathcal E}}
\def\G{{\mathcal G}}
\def\H{{\mathcal H}}
\def\P{{\mathcal P}}
\def\integer{{\mathbb Z}}
\def\real{{\mathbb R}}
\def\torus{{\mathbb T}}
\begin{document}

	\title[Stability and bifurcations of resonances in ring's dynamics]
	{Stability and bifurcations of resonances in ring's dynamics}
	
	\author[A. Celletti]{Alessandra Celletti}
	\address{
		Department of Mathematics, University of Roma Tor Vergata, Via
		della Ricerca Scientifica 1, 00133 Roma (Italy)}
	\email{celletti@mat.uniroma2.it}
	\author[I. De Blasi]{Irene De Blasi}
	\address{
		Department of Mathematics, University of Torino, via Carlo Alberto 10, 10123 Torino (Italy)}
	\email{irene.deblasi@unito.it}
	\author[S. Di Ruzza]{Sara Di Ruzza}
	\address{
Department of Mathematics and Informatics, University of Palermo, Via Archirafi 34, 90123 Palermo (Italy)}
	\email{sara.diruzza@unipa.it}
	
	\baselineskip=18pt              
	
	\begin{abstract}
We use perturbation theory and bifurcation theory to analyze the dynamical behavior of resonances, associated to a model describing a particle moving within a ring around a celestial object. The central body is modeled as a homogeneous triaxial ellipsoid, rotating about its shortest physical axis at a constant angular velocity. It is assumed that the massless ring particle moves within the equatorial plane of the ellipsoid. The dynamics of the particle is studied using epicyclic variables, that lead to a straightforward definition of corotation and Lindblad resonances. These resonances are associated to a Hamiltonian function with two degrees of freedom, for which we compute appropriate expansions for the normal form and the resonant Hamiltonian. Initially, the normal form is verified to be non--degenerate, thereby guaranteeing the existence of invariant KAM tori, providing the stability of the resonances, through their confinement in phase space. Subsequently, two exemplary test cases are examined: a nearly spherical ellipsoid and a highly aspherical ellipsoid. Furthermore, this study concentrates on three principal resonances: corotation, $1:2$, and $1:3$, for which we present  results concerning their dynamical behavior obtained analyzing the Hamiltonian formulation of the model and the resonant normal form. Specifically, we examine the phase space structure, the amplitude of libration around the resonances, and the occurrence of bifurcations. Remarkably, in none of the two studied test cases the $1:3$ resonance presents evidence of bifurcations for relevant values of the  eccentricity. Our dynamical study thus supports a greater probability of selecting the $1:3$ resonance in comparison to the other resonances. 
	\end{abstract}

\maketitle

	\noindent \bf Keywords. \rm Ring's dynamics, Perturbation theory, Stability, Bifurcation, Corotation resonances, Lindblad resonances.

	
	\section{Introduction}\label{sec:introduction}

The analytical study of planetary rings is a traditional subject, which has been investigated in great detail in several works; among the others, we quote \cite{Borderies1987,Goldreich1982,Murray1994,Sicardy2006,Tiscareno2013} and we refer the reader to the references therein. 

The recent discoveries of rings surrounding small bodies within the Solar system, for example Chariklo, Haumea and Quaoar (see \cite{Braga2014,Morgado2023,Ortiz2017,Pereira2023,Winter2019}), significantly expanded the field of research in rings dynamics (see, e.g., \cite{W2023,Marzari2020,Sicardy2019,Sicardy2020} ) and raised interesting questions from the dynamical systems point of view.

There is a fundamental difference between the dynamics of rings around planets and that around small celestial bodies. 
While planets have an almost spherical shape, small bodies of the Solar system, like dwarf planets or Centaurs, might have a marked irregular shape, whose non--axisymmetric gravity field generates a variety of dynamical behaviors. Furthermore, although space missions have yielded comprehensive insights into the structure of planets, the shapes of certain small celestial bodies continue to warrant exploration, and fundamental quantities such as the approximate dimensions of these bodies remain largely undisclosed. Consequently, the  dynamics of rings around small bodies needs the exploration of various models to accurately approximate their shapes. 
As frequently found in the literature, a small body can be approximated as a triaxial ellipsoid or alternatively as a sphere with relatively minor topographical features; the latter case is often called the \sl mass anomaly \rm model. A prolate ellipsoid and a mass anomaly model have been studied in \cite{Madeira2022} and \cite{Ribeiro2023}, using Poincar\'e surfaces of section with the aim to locate the resonances or to estimate the size of regular and chaotic regions.

The model considered in this work is that of a homogeneous body with the shape of a triaxial ellipsoid; we assume that the ellipsoid rotates with constant angular velocity around its shortest physical axis. To account for different shapes, from planets to small bodies, we consider two sample cases: an almost spherical body (with small values of oblateness and elongation) and a highly aspherical body (with high values of oblateness and elongation). We analyze the dynamics of a single particle, that we assume to belong to a ring system around the triaxial body. We constrain the particle to move on the equatorial plane. With these assumptions, we derive the potential in terms of the polar coordinates, namely, the radius and the true longitude. Afterwards, we split the potential into the symmetric and non--symmetric parts, and we introduce the two (classical) frequencies, called synodic and epicyclic frequencies. Corotation and Lindblad  resonances are defined in terms of these two frequencies. In the corotation case the orbital mean motion of the particle is equal to the frequency of rotation of the central body. A Lindblad resonance of order $p:q$ (with $p$, $q$ integers) has the following meaning: within a reference frame rotating with the particle precession rate, the particle makes $p$ orbits, while the body completes $q$ rotations. Next, we introduce a system of coordinates which is well suited to study the dynamics, namely the epicyclic coordinates. We provide a 2 DOF Hamiltonian function, which is conveniently written in terms of action-angle variables associated to the epicyclic coordinates; in particular, the angle variables have frequency equal to the epicyclic and synodic frequencies. 

Our first task consists in checking Kolmogorov's non--degeneracy condition of the normal form (namely, the integrable part of the Hamiltonian under different approximations), to ensure the existence of KAM invariant tori (for more details, see \cite{ArnoldKAM,kolmogorov1954conservation,  moser1962invariant}); in a 2 DOF system, like the one under study, such tori provide stability results on fixed energy levels, since they yield a confinement of the resonant motion, ensuring its stability for infinite times. 

Using the averaging method, we proceed to introduce the resonant Hamiltonian, defined as the sum of the normal form (i.e., the Hamiltonian part that does not depend on the angles) and the resonant part (which depends on the resonant -- either Lindblad or corotation -- combination of the angles). We consider the two sample cases of almost spherical and highly aspherical body, and we compute the resonant Hamiltonian for three main case studies: corotation or $1:1$ resonance (which turns out to be located very close to the central body), $1:3$ resonance (which is far from the central body) and $1:2$ resonance (whose location is between corotation and $1:3$ resonance). For each resonance and for a constant value of the particle's eccentricity, orbits are shown in the phase space, and the amplitude of libration of the resonance is quantified, with consideration for the oblateness and elongation of the central body. The findings indicate that these shape factors significantly impact the amplitude of libration, thereby establishing a distinct disparity between celestial bodies such as planets, which are nearly spherical, and smaller bodies, which are highly aspherical. Notably, this behavior is predominantly influenced by the elongation factor.

The dynamical study proceeds with the computation of the equilibrium points, the linear stability analysis and the occurrence of bifurcations. We found that $1:1$ and $1:2$ resonances undergo  pitchfork or saddle-node bifurcations as the eccentricity varies. Quite surprisingly, the $1:3$ resonance does not present bifurcations within the considered range of eccentricity, thus making such resonance more likely than the other resonances. 

It is important to note that the model currently under review serves as merely an approximation of the actual physical scenario. In a more precise approach, alternative models might have been employed to more accurately depict the shape of the central body, or the gravitational effects of potential satellites of the central body could have been included. Regardless, our findings substantiate the conclusion that the central body's potential governs a distinct behavior of the resonances, independently of any additional factors.

\vskip.1in 

This work is organized as follows. In Section~\ref{sec:potential}, we describe the derivation of the potential function. In Section~\ref{sec:epicyclic} we introduce epicyclic variables, define the resonances, provide the Hamiltonian function and its series expansion, discuss the non--degeneracy condition of the resulting Hamiltonian. In Section~\ref{sec:test}, we present the results for the two test cases of an almost spherical and a highly aspherical body, and the three resonances given by $1:1$, $1:2$ and $1:3$ resonances.

\section{Potential of a homogeneous triaxial ellipsoid}\label{sec:potential}
	
Let $\E$ be a body of mass $M_P$, whose shape is assumed to be given by an ellipsoid with semi--axes $a>b>c>0$.
Let $O$ be the center of mass of the ellipsoid $\E$. Assume that $\E$ rotates around the shortest physical axis with constant angular velocity $\Omega_P=2\pi/T_{rot}$,
where $T_{rot}$ is the rotation period. Following \cite{Sicardy2020},
assume that the mass distribution is symmetric with respect to the equatorial plane and that it has a symmetry plane containing the rotation axis.
	
Consider a massless particle $\P$ moving in the equatorial plane and let $(r,L)$ be its polar coordinates in an inertial frame with center in $O$, say $(O,x,y,z)$ (see Figure~\ref{fig1}); $r$ denotes its distance from $O$ and $L$ is the true longitude.	
	\begin{figure}[htbp]
		\centering
		\includegraphics[scale=0.5]{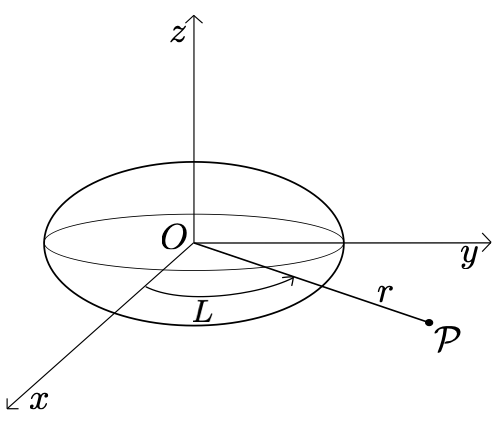}
		\caption{A particle $\P$ moving in the equatorial plane of the ellipsoid with center $O$ in an inertial frame $(O,x,y,z)$; the coordinates of $\P$ are denoted as $(r,L)$.}
		\label{fig1}
	\end{figure}
In a quasi-inertial frame rotating with angular velocity $\Omega_P$, we introduce
	the angle $\theta$ defined as
	$$
	\theta=L-\Omega_P t\ .
	$$
Following \cite{Balmino} and \cite{SicardyNature}, it is possible to derive the three-dimensional gravitational potential generated by the above rotating body. 

\begin{proposition}\label{pro:potential}
The gravitational potential generated by $\mathcal E$, moving around it minor axis with angular velocity $\Omega_P$, is given by 
\beq{U2}
	U(r,\theta,\phi)=-{{\G M_P}\over r}\ \sum_{p=0}^\infty \cos(2p\theta)\ \sum_{\ell=p}^{\infty} \left({R\over r}\right)^{2\ell} C_{2\ell,2p}
	P_{2\ell,2p}(\sin\phi)\ ,
	\eeq
where $\phi$ denotes the latitude of the particle, $\G$ is the gravitational constant, $R$ is the reference length, introduced
	to make the coefficients dimensionless, that we can define as\footnote{Notice that in \cite{SicardyNature}, the reference length is defined as
		$$
		{3\over R^2}={1\over a^2}+{1\over b^2}+{1\over c^2}\ .
		$$
	}
\beq{refradius}
R=\Big({3\over {a^{-2}+b^{-2}+c^{-2}}}\Big)^{1\over 2}\ ,
\eeq
$P_{\ell,p}$ is the associated Legendre polynomial defined as
$$
	P_{\ell,p}(x)=(-1)^p\ {{(1-x^2)^{p\over 2}}\over {2^\ell \ell!}}\ {{d^{\ell+p}}\over {dx^{\ell+p}}}((x^2-1)^\ell)\ ,
$$
while the coefficients $C_{2\ell,2p}$ are given by the following expression (compare with \cite{Boyce}):
	\beq{coeffC}
	C_{2\ell,2p}={3\over R^{2\ell}}\ {{\ell!(2\ell-2p)!}\over {2^{2p}(2\ell+3)(2\ell+1)!}}(2-\delta_{0,p})\
	\sum_{k=0}^{[{{\ell-p}\over 2}]}\
	{{(a^2-b^2)^{p+2k}(c^2-{1\over 2}(a^2+b^2))^{(\ell-p-2 k)}}\over {16^k(\ell-p-2k)!(p+k)!k!}}\ .
	\eeq
\end{proposition}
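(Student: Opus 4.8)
The plan is to obtain \equ{U2} from the Newtonian volume integral by combining the Laplace multipole expansion with the three symmetries of a homogeneous ellipsoid, and then to extract the closed form \equ{coeffC} for the Stokes coefficients by reducing the relevant solid-harmonic integrals to the elementary Cartesian moments of the ellipsoid. Throughout I would work in body-frame coordinates, so that the steady rotation enters only through $\theta=L-\Omega_P t$.

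First I would write the potential at an exterior point $(r,\theta,\phi)$ as
\[
U=-\G\int_{\E}\frac{\rho}{|\mathbf r-\mathbf r'|}\,dV',\qquad \rho=\frac{M_P}{\tfrac43\pi abc},
\]
and insert the real-form addition theorem for $1/|\mathbf r-\mathbf r'|$, valid for $r>r'$,
\begin{multline*}
\frac{1}{|\mathbf r-\mathbf r'|}=\frac1r\sum_{\ell=0}^\infty\sum_{m=0}^\ell(2-\delta_{0,m})\,\frac{(\ell-m)!}{(\ell+m)!}\\
\times\Big(\frac{r'}{r}\Big)^\ell P_{\ell,m}(\sin\phi)\,P_{\ell,m}(\sin\phi')\,\cos\!\big(m(\theta-\lambda')\big),
\end{multline*}
where $\lambda'$ is the body-frame longitude of the source point. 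Introducing the normalizing length $R$ of \equ{refradius} to make the coefficients dimensionless then yields the standard Stokes form, with cosine and sine coefficients given by integrals of $r'^{\ell}P_{\ell,m}(\sin\phi')$ against $\cos(m\lambda')$ and $\sin(m\lambda')$ over $\E$.

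Second, I would invoke the symmetry hypotheses to collapse the double sum. Reflection across the equatorial plane ($z'\mapsto-z'$) forces every odd-degree contribution to vanish, leaving only $\ell=2\ell'$; the symmetry plane containing the rotation axis kills all sine terms and all odd-order terms, leaving only $m=2p$, so the angular dependence reduces to $\cos(2p\theta)$. Since $P_{2\ell,2p}\equiv0$ whenever $2p>2\ell$, the inner sum starts at $\ell=p$, exactly as in \equ{U2}, and the factor $(2-\delta_{0,p})$ is the residue of the $m=0$ bookkeeping in the expansion above.

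Finally — and this is where the real work lies — I would evaluate the surviving coefficients in closed form. The building block $r'^{2\ell}P_{2\ell,2p}(\sin\phi')\cos(2p\lambda')$ is a homogeneous harmonic polynomial of degree $2\ell$ in the Cartesian coordinates $(x',y',z')$; expanding it into monomials and integrating term by term over $\E$ reduces everything to the even moments $\int_{\E}x'^{2i}y'^{2j}z'^{2k}\,dV'$. Under the affine map to the unit ball ($x'=a\xi$, etc.) these moments factor into a Dirichlet/Beta-function integral times a product of semi-axis powers, which makes $a^2-b^2$ and $c^2-\tfrac12(a^2+b^2)$ emerge as the natural variables. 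The main obstacle is the combinatorial resummation of this double expansion into the single floor-indexed sum of \equ{coeffC}: one must verify the full prefactor — the ratio of factorials, the powers of two and the $16^k$, and the denominator $(2\ell+3)(2\ell+1)!$. I would pin this down by matching against the known low-order gravitational moments of a homogeneous ellipsoid (the $C_{2,0}$ and $C_{2,2}$ terms in particular) and cross-checking the general formula with \cite{Balmino,Boyce}.
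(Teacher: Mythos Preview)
Your outline is sound and follows the standard derivation: the multipole expansion of the Newtonian kernel, the three reflection symmetries of the ellipsoid to cut down to even degree, even order, and cosine terms only, and the reduction of the surviving solid-harmonic integrals to Cartesian moments over the ellipsoid, with the final combinatorial identification against \equ{coeffC}. You are, however, doing considerably more than the paper does: the paper supplies no argument at all for this proposition, stating only that the expansion is classical and referring the reader to \cite{Balmino} and \cite{SicardyNature} for the derivation. So there is no ``paper's own proof'' to compare against beyond those citations, and your sketch is essentially a fleshed-out version of what one would find there. One small sharpening: to kill both the sine terms and the odd-$m$ cosine terms you need \emph{both} vertical symmetry planes of the ellipsoid (one gives $\lambda'\mapsto-\lambda'$, the other $\lambda'\mapsto\pi-\lambda'$); the paper's hypothesis mentions only one such plane, but for the actual homogeneous triaxial ellipsoid both are present, which is what your reduction implicitly uses.
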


The expansion \equ{U2} of the gravitational potential is a classical topic of Celestial Mechanics; the proof of Proposition \ref{pro:potential} can be found, e.g., following the results contained in \cite{Balmino} and \cite{SicardyNature}.

    \begin{definition}
We introduce the quantities 
\beq{oblat}
\mathcal O_b =\frac{a^2+b^2-2c^2}{4R^2} \, ,
\eeq
	and
	\beq{elong}
	\E_\ell=\frac{a^2-b^2}{2R^2}
	\eeq
that measure, respectively, the oblateness and the elongation of the body.
\end{definition}

We notice that the coefficients $C_{2\ell,2p}$ depend on oblateness and elongation.

When the particle moves in the equatorial plane, then $\phi=0$; hence, the potential acting on $\P$ can be written as 
\beq{U1}
U(r,\theta)=\sum_{p=0}^\infty U_{2p}(r)\ \cos(2p\theta)\ ,
\eeq
which contains only even terms, since the ellipsoid is symmetric and the potential is invariant
under a rotation of angle $\pi$. 

From Eq.~\equ{U2} with $\phi=0$, we obtain the following expression for the potential and, hence, for the coefficient $U_{2p}(r)$ in Eq.~\equ{U1}: 
\beq{U3}
U(r,\theta)=-{{\G M_P}\over r}\
\sum_{p=0}^\infty \cos(2p\theta)\ \sum_{\ell=p}^\infty \left({R\over
r}\right)^{2\ell} C_{2\ell,2p} P_{2\ell,2p}(0)\ ,
\eeq
where, using the binomial formula for $(x^2-1)^\ell$ (see \cite{SicardyNature}), the associated
Legendre polynomials take the form
\beq{coeffP}
P_{2\ell,2p}(0)=(-1)^{\ell-p}\ {{(2\ell+2p)!}\over {2^{2\ell}(\ell+p)!(\ell-p)!}}\ ,\qquad \ell \ge p\ . 
\eeq 


\section{A Hamiltonian model in epicyclic formulation}\label{sec:epicyclic}

In this Section, we introduce the epicyclic formulation of the equations of motion for a ring's particle lying in the equatorial plane, specifying how the resonances are defined in such framework; we also introduce the \emph{resonant Hamiltonian}, which will allow us to study the particle's dynamics in the vicinity of the resonances. 

\begin{proposition}
Let the momenta $p_r$ and $p_L$, conjugated respectively to $r$ and $L$, be 
$$
p_r=\dot r\ ,\qquad p_L=r^2\dot L\ .
$$
Then, the Hamiltonian describing the dynamics of the particle is given by 
\begin{equation}\label{eq:Hinit}
\H(p_r,p_L,r,L-\Omega_P t)={{p_r^2}\over 2}+{{p_L^2}\over {2r^2}}+U(r,L-\Omega_P t)\ .
\end{equation}
\end{proposition}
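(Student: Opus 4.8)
The plan is to obtain $\H$ by a standard Legendre transform, since the quantities $p_r$ and $p_L$ in the statement are precisely the canonical momenta one reads off from the kinetic energy in planar polar coordinates. First I would describe the motion of the test particle $\P$ in the equatorial plane in the polar coordinates $(r,L)$ centered at $O$. For a massless particle one works per unit mass, so the kinetic energy is
\[
T=\tfrac{1}{2}\big(\dot r^2+r^2\dot L^2\big)\ ,
\]
and the Lagrangian is $\mathcal L=T-U(r,L-\Omega_P t)$, with $U$ the gravitational potential of Section~\ref{sec:potential} evaluated on the equatorial plane ($\phi=0$), i.e.\ the expression in \equ{U1} with $\theta=L-\Omega_P t$.

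Next I would compute the conjugate momenta directly: $p_r=\partial\mathcal L/\partial\dot r=\dot r$ and $p_L=\partial\mathcal L/\partial\dot L=r^2\dot L$. This reproduces the definitions in the statement, and the real content of the step is precisely this identification—showing that the $p_r,p_L$ written down in the proposition are genuinely the canonical momenta and not an ad hoc choice. These relations invert at once to $\dot r=p_r$ and $\dot L=p_L/r^2$.

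The Hamiltonian then follows from $\H=p_r\dot r+p_L\dot L-\mathcal L$. Substituting the inverted velocities, the $p\dot q$ terms contribute $p_r^2+p_L^2/r^2$, from which one subtracts the kinetic part $\tfrac12 p_r^2+\tfrac12 p_L^2/r^2$ of $\mathcal L$; the potential $U$ survives with a plus sign. The result is
\[
\H=\tfrac12 p_r^2+\tfrac{1}{2r^2}p_L^2+U(r,L-\Omega_P t)\ ,
\]
which is exactly \equ{eq:Hinit}.

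The only point requiring any care—and it is not a genuine obstacle—is the explicit time dependence entering through $\theta=L-\Omega_P t$, which makes $\mathcal L$, and hence $\H$, non-autonomous. Since the time variable appears only inside $U$ and never in the kinetic energy, the coordinate-to-momentum change of variables is time-independent and the Legendre transform goes through verbatim, with $\Omega_P t$ simply carried along inside $U$. One may afterwards remark that passing to the corotating angle $\theta$ by a time-dependent canonical transformation trades this non-autonomous $\H$ for an autonomous one, at the cost of the additional rotation term $-\Omega_P p_L$; this is the natural bridge to the epicyclic formulation developed in the remainder of Section~\ref{sec:epicyclic}.
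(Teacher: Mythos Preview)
Your argument is correct and is exactly the classical derivation the paper has in mind: the paper's own proof simply states that \equ{eq:Hinit} ``follows from classical arguments of Hamiltonian and Keplerian dynamics, recalling that $L$ is an angular coordinate and that the kinetic energy associated to it is given by $p_L^2/(2r^2)$.'' Your Legendre-transform computation spells this out explicitly, and your closing remark on the passage to the corotating angle $\theta$ anticipates precisely the step the paper takes immediately afterwards to obtain \equ{H_pr_ptheta_r_theta}.
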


The proof of $\mathcal H(p_r,p_L,r,L-\Omega_P t)$ in Eq.~\eqref{eq:Hinit} follows from classical arguments of Hamiltonian and Keplerian dynamics, recalling that $L$ is an angular coordinate and that the kinetic energy associated to it is given by $p_L/2r^2$.

Recalling that $\theta=L-\Omega_P t$ and defining the momentum $p_\theta\equiv p_L$, we obtain the Hamiltonian\footnote{In fact, we have 
	$$
	\dot\theta={{\partial\H}\over {\partial p_\theta}}={{p_\theta}\over {r^2}}-\Omega_P={{p_L}\over {r^2}}-\Omega_P=\dot L-\Omega_P\ . 
	$$}
\beq{H_pr_ptheta_r_theta}
\H(p_r,p_\theta,r,\theta)={{p_r^2}\over 2}+{{p_\theta^2}\over {2r^2}}-\Omega_P p_\theta+U(r,\theta)\ .
\eeq

The gravitational potential \equ{U3} can be written as the sum of an axisymmetric part $U_s(r)$ which is obtained taking the term with $p=0$ in Eq.~\equ{U3}, plus a non--axisymmetric part $U_{ns}(r,\theta)$. The axisymmetric part is given by 
\beq{U0}
U_s(r)=-{{\G M_P}\over r} \sum_{\ell=0}^\infty \Big({R\over
r} \Big)^{2\ell} C_{2\ell,0} P_{2\ell,0}(0) \ .
\eeq
We remark that $U_s(r)$ can be written as the sum of the Keplerian potential $U_{Kep}(r)=-\G M_P/r$ (corresponding to $\ell=0$) and a potential $\widetilde{U}_s=\widetilde{U}_s(r)$ depending only on the radius. 
From Eq.~\equ{U3}, the non--axisymmetric part $U_{ns}(r,\theta)$ is given by 
\beq{ns_potential}
U_{ns}(r,\theta)=-{{\G M_P}\over r}\
\sum_{\ell=1}^\infty \Big({R\over
r}\Big)^{2\ell}\ \sum_{p=1}^\ell \cos(2p\theta) C_{2\ell,2p} P_{2\ell,2p}(0)\ .
\eeq

We now introduce two key concepts to describe the motion of the particle around the rotating body $\mathcal E$: first of all, we  define the principal \emph{frequencies} of the particle's motion, which will be used to characterize the  \emph{resonances}, representing the main subject of this work.
\begin{definition}
Given the axisymmetric potential $U_s$ as in Eq.~\eqref{U0}, we define
the mean motion $n$ and the epicyclic frequency $\kappa$ as (see \cite{Chandrasekhar1942}))
\begin{equation}\label{eq:n2U0}
n(r) = \Big({1\over r}\ {{dU_s(r)}\over {dr}}\Big)^{1\over 2}\ ,
\end{equation}
and 
\begin{equation}\label{k2U0}
\kappa(r) = \Big(\frac{3}{r}{{dU_s(r)}\over {dr}} + {{d^2U_s(r)}\over {dr^2}}\Big)^{1\over 2}\ .
\end{equation} 
We define the synodic frequency of the particle as $n-\Omega_P$.
\end{definition}

The synodic frequency denotes the frequency of the particle's passages through a fixed point relative to the central body, while the epicyclic frequency describes the frequency of passages at pericenter (see also \cite{Goldreich1982,MurrayDermottbook}. Notice that the square of the quantity in Eq.~\equ{k2U0} can be written also as 
$$
\kappa^2(r) =  \frac{1}{r^3} \frac{d(r^4 n^2(r))}{dr}\ .       
$$

Expanding $U_s$ as in Eq.~\eqref{U0}, we can write the squares of the frequencies as 
\beqa{nkappa_2}
n^2(r) & = & {{\G M_P}\over r^3}\ \Big(1+ \sum_{\ell=1}^\infty (2 \ell +1) \Big({R\over r}\Big)^{2\ell} C_{2\ell,0} P_{2\ell,0}(0) \Big)\nonumber\\
& = & {{\G M_P}\over r^3}\ \Big(1-{3\over 2} C_{2,0} \Big({R\over r}\Big)^2+{{15}\over 8}C_{4,0} \Big ({R\over r} \Big )^4+...\Big)\nonumber\\
\kappa^2(r) & = &{{\G M_P}\over r^3}\ \Big(1-\sum_{\ell=1}^\infty (2 \ell +1)\Big ({R\over r}\Big )^{2\ell} C_{2\ell,0} P_{2\ell,0}(0)\Big)\nonumber\\
& = & {{\G M_P}\over r^3}\ \Big(1+{3\over 2}C_{2,0} \Big ({R\over r}\Big)^2-{{45}\over 8}C_{4,0} \Big ({R\over r}\Big)^4+...\Big) \, .
\eeqa
We notice that, to the lowest order (i.e., in the spherical case), $n$ and $\kappa$ reduce to the Keplerian mean motion, namely $n^2(r)=\kappa^2(r)=\G M_P/r^3$.

\begin{definition} \label{def:LinRes}
We introduce a Lindblad resonance for $j,m\in\integer\backslash\{0\}$, whenever the following condition is satisfied: 
\beq{linres}
j \kappa(r) = m \big( n(r) - \Omega_P \big) \ .
\eeq
For given $j$, $m$, $\Omega_P$, we define the Lindblad resonant radius as the quantity $r_{lin}(\Omega_P; m,j)$ which solves the equation \equ{linres}
(see also Section~\ref{sec:rradius} for the computation of the resonant radius).
\end{definition}

\begin{definition}\label{def:CorRes}
We define a corotation resonance whenever, in Eq.~\eqref{linres}, $j=0$ (that is, $n(r)=\Omega_P$). In such case, the correspondent solution is called corotation radius and is denoted with $r_{cor}(\Omega_P)$. 
\end{definition}
\begin{remark}
Using $\kappa=n-\dot\omega$, for a Lindblad resonance \equ{linres} we have 
\beqno
\frac{n -\dot\omega }{\Omega_P -\dot\omega} = \frac{m}{m-j} \, ,
\eeqno
meaning that in a frame rotating at the particle precession rate $\dot\omega$, the particle completes $m$ revolutions, while the central body completes $m-j$ rotations. For this reason, we will also refer to this kind of resonance as a $m:(m-j)$ Lindblad resonance. 
\end{remark}
Since our final aim is to study the local dynamics of the ring's particle in the resonant regime, it is convenient to introduce a new set of coordinates, centered at the resonant radius, which highlights the role of the synodic and epicyclic frequencies.

\begin{proposition}
    Let $r_*$ denote the value of the radius at the resonance, either $r_{lin}$ or $r_{cor}$, and define $\kappa_*= \kappa(r_*)$ and $n_* = n(r_*)$. Then, there exists a set of canonical action--angle coordinates $(J, I, \varphi, \theta)$, such that the particle's Hamiltonian, locally around $r_*$, can be expressed as 
\begin{equation}\label{eq:HIJ}
\begin{aligned}
\H(J,I,\varphi,\theta)  &= |\kappa_*| J + \big(n_*-\Omega_P \big)I 
\\ & + \sum_{|i|=0}^{\infty}\sum_{j=0}^{\infty} \big( \alpha_{2i,j}(J, I)\cos(2 i \theta +j \varphi)+ \beta_{2i,j}(J, I)\sin(2 i \theta +j \varphi)\big) \, ,
\end{aligned}
\end{equation}
where $\alpha_{2i,j}$ and $\beta_{2i, j}$ are suitable polynomials in $J, I$. 
\end{proposition}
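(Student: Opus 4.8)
The plan is to construct the canonical transformation in two stages: first pass to epicyclic action--angle variables for the axisymmetric (integrable) part of the Hamiltonian, and then Taylor--expand the full Hamiltonian around the resonant radius $r_*$, reorganizing the angular dependence into the stated Fourier series. Starting from the Hamiltonian \equ{H_pr_ptheta_r_theta}, I would split $U=U_s+U_{ns}$ as already done in the excerpt. The integrable skeleton is
\[
\H_0(p_r,p_\theta,r)={{p_r^2}\over 2}+{{p_\theta^2}\over {2r^2}}-\Omega_P p_\theta+U_s(r),
\]
which is axisymmetric: $\theta$ is cyclic, so $p_\theta$ is a first integral and $\theta$ will play the role of the second angle in the final coordinates. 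The synodic frequency $n_*-\Omega_P$ appearing in \eqref{eq:HIJ} is exactly $\partial\H_0/\partial p_\theta$ evaluated at the circular orbit $r=r_*$, $p_r=0$, $p_\theta=r_*^2 n_*$, which is why $I\equiv p_\theta$ (suitably shifted to vanish at resonance) is the natural action conjugate to $\theta$.

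Next I would handle the radial degree of freedom. For fixed $p_\theta$ the reduced one--dimensional system in $(p_r,r)$ has an effective potential $V_{\mathrm{eff}}(r;p_\theta)=p_\theta^2/(2r^2)+U_s(r)$ with a minimum at the circular radius; its linearized frequency about that minimum is precisely $\kappa_*=\kappa(r_*)$ by the definition \eqref{k2U0}, since $V_{\mathrm{eff}}''=3r^{-1}U_s'+U_s''$ on circular orbits. I would therefore introduce the epicyclic action $J$ as the area integral $\frac{1}{2\pi}\oint p_r\,dr$ over a radial oscillation and its conjugate angle $\varphi$, so that $\H_0=|\kappa_*|J+(n_*-\Omega_P)I+O(J^2)$ after shifting $I$. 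Expanding $V_{\mathrm{eff}}$ to higher order in the radial displacement and inverting the generating function order by order produces the polynomial corrections in $J,I$; this is the standard Birkhoff--type normalization of a one--dimensional oscillator and yields the leading two integrable terms displayed in \eqref{eq:HIJ}, with the remaining $J,I$--dependence absorbed into the $i=j=0$ coefficient $\alpha_{0,0}(J,I)$.

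Finally I would restore the non--axisymmetric perturbation $U_{ns}(r,\theta)$ from \eqref{ns_potential}, which carries the explicit $\cos(2p\theta)$ factors. Substituting the transformation $r=r(J,\varphi;I)$ — with the radial coordinate expressed through the epicyclic angle $\varphi$ — into each $\cos(2p\theta)$ term and Taylor--expanding the radial amplitude in powers of $\sqrt{J}$ generates harmonics $\cos(2i\theta+j\varphi)$ and $\sin(2i\theta+j\varphi)$: the $2i\theta$ comes directly from the $2p\theta$ dependence (so only even multiples of $\theta$ appear, consistent with the index $2i$), while the $j\varphi$ harmonics arise because each power of the radial displacement contributes a bounded Fourier polynomial in $\varphi$ through $r-r_*\sim\sqrt{J}\cos\varphi$. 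Collecting terms by harmonic gives the coefficients $\alpha_{2i,j}(J,I)$ and $\beta_{2i,j}(J,I)$ as polynomials in $J,I$. I expect the main obstacle to be bookkeeping rather than conceptual: one must verify that the generating function for the epicyclic transformation is well defined and analytic near the circular orbit (nondegeneracy of $V_{\mathrm{eff}}''=\kappa_*^2>0$), and that the double expansion in $\sqrt{J}$ and in $(R/r)^{2\ell}$ can be rearranged into the absolutely convergent Fourier form claimed; controlling convergence and confirming that each Fourier coefficient is genuinely polynomial (not merely a formal power series) in the actions is the delicate point.
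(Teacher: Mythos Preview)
Your outline is correct in spirit but takes a more elaborate route than the paper. The paper does \emph{not} use the exact radial action $J=\frac{1}{2\pi}\oint p_r\,dr$ or any Birkhoff normalization of the axisymmetric part. Instead it proceeds in the most elementary way: translate $(r,p_\theta)\mapsto(\rho,I)$ with $\rho=r-r_*$ and $I=p_\theta-n_*r_*^2$, Taylor--expand the full Hamiltonian in $(\rho,I)$ about the circular orbit (identifying the coefficient of $\rho^2/2$ as $\kappa_*^2$ via \eqref{k2U0}), and then apply the \emph{linearized} harmonic--oscillator change $\rho=\sqrt{2J/|\kappa_*|}\sin\varphi$, $p_\rho=\sqrt{2|\kappa_*|J}\cos\varphi$ directly. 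All cubic and higher terms in $\rho$ coming from the axisymmetric potential survive this substitution as $\sin^j\varphi$ factors and are simply dumped into the Fourier sum (as $i=0$ harmonics), not normalized away. With this choice the polynomiality of $\alpha_{2i,j},\beta_{2i,j}$ in $\sqrt{J}$ and $I$ is automatic, since $\rho^k\mapsto(2J/|\kappa_*|)^{k/2}\sin^k\varphi$ is a finite trigonometric polynomial in $\varphi$; the ``delicate point'' you flag about formal versus genuine polynomials does not arise. Your Birkhoff route would also reach the stated form and would yield a cleaner $\varphi$--independent integrable part, but at the cost of an order--by--order generating--function inversion that is unnecessary for the proposition as stated and that obscures the polynomial structure of the coefficients.
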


\begin{proof}
In a neighborhood of the resonance, we define the distance $\rho$ from the resonant radius $r_*$ as 
$$
\rho=r-r_* \ .
$$
Next, we introduce the angular momentum $p_*$ at the resonant radius per unit mass in an orbit with axisymmetrical potential as 
\beq{pstar}
p_*=n_* \ r_*^2\ , 
\eeq
where $n_* = n(r_*)$. 
Let us introduce the momenta $p_{\rho}$ (conjugated to $\rho$) and $I$ (as the displacement from $p_*$) as 
$$
p_{\rho}= p_r \ ,\qquad 
I=p_\theta-p_*\ .
$$
Expressing the Hamiltonian~\equ{H_pr_ptheta_r_theta} in the translated variables $(\rho,p_\rho)$, $(\theta,I)$, one gets 
\beq{Hnew2}
	\mathcal H(p_\rho, I, \rho, \theta)=\frac{p_\rho^2}{2}+\frac{(I+p_*)^2}{2(r_*+\rho)^2}- (I+p_*)\Omega_P+U_s(\rho+r_*)+U_{ns}(\rho+r_*,\theta)\ . 
\eeq
Let us now take the Taylor expansion of $(I+p_*)^2/(2(\rho+r_*)^2)$ and of the axisymmetric potential $U_s(\rho+r_*)$ around $I=0$ and $\rho=0$; ignoring  constant terms and grouping together terms of the same order, the Hamiltonian~\equ{Hnew2} takes the form 
\begin{equation}\label{acca}
	\begin{aligned}
	\mathcal H(p_\rho, I, \rho, \theta)&=\frac{p_\rho^2}{2}+\left[\frac{p_*}{r_*^2}-\Omega_P\right]I+\frac{I^2}{2r_*^2}+\left[\frac{dU_s}{dr}(r_*)-\frac{p_*^2}{r_*^3}\right]\rho-2\frac{p_*}{r_*^3}I \rho-\frac{1}{r_*^3}I^2\rho\\
	&+\frac{1}{2}\left[\frac{d^2U_s}{dr^2}(r_*)+3\frac{p_*^2}{r_*^4}\right]\rho^2+\frac{3p_*}{r_*^4}I \rho^2+\frac{3}{2r_*^4}I^2\rho^2+U_{ns}(\rho+r_*, \theta) + F(I,\rho)\ , 
	\end{aligned}
\end{equation}
where $F(I,\rho)$ is a polynomial in $\rho$ and $I$ of order at least three in $\rho$ and with terms of order $0, 1$ and $2$ in $I$. \\
Using Eq.~\equ{pstar}, recalling Eq.~\equ{eq:n2U0}, and omitting constant terms, the Hamiltonian~\equ{acca} can be written as
\begin{equation}
	\begin{aligned}
		\mathcal H(p_\rho, I, \rho, \theta)=& \,\,\, \frac{p_\rho^2}{2}+\left(n_*-\Omega_P\right)I+\frac{I^2}{2r_*^2}-2\frac{n_*}{r_*}I \rho-\frac{1}{r_*^3}I^2\rho \\
		& +\frac{1}{2}\left(\frac{d^2U_s}{dr^2}(r_*)+\frac{3}{r_*}\frac{dU_s}{dr}(r_*)\right)\rho^2+\frac{3n_*}{r_*^2}I \rho^2+\frac{3}{2r_*^4}I^2\rho^2 \\
		& +U_{ns}(\rho+r_*, \theta) + F(I,\rho)\ .\nonumber
	\end{aligned}
\end{equation}
Keeping the explicit form of just the terms of second order and 
putting together all terms of total degree at least three, one arrives at the expression 
\beqa{HE}
\H(p_{\rho},I,\rho,\theta) & = & {{p_{\rho}^2}\over 2} + (n_*-\Omega_P) I+ \frac{1}{2} \Bigg(\frac{d^2 U_s}{d r^2}(r_*)+\frac{3}{r_*} \frac{d U_s}{dr}(r_*) \Bigg) \ \rho^2 \nonumber \\
&& -\frac{2 n_*}{r_*} I\, \rho   + \frac{I^2}{2 r_*^2} + F_0(I,\rho) + F_1(\rho, \theta)  \ ,
\eeqa
where $F_0$ is given by 
\beq{F0_expansion}
F_0(I,\rho) = \sum_{i=0}^2 \ \sum_{j=3-i}^{\infty} \ c_{i,j} \, I^i \rho^j\ ,
\eeq
which is a polynomial in $I$ and $\rho$ with total degree greater or equal than 3, where $I$ varies from zero to second degree and $c_{i,j}$ are suitable coefficients depending on $C_{i,j}$ (in Appendix~\ref{app:appA}, we report the $c_{i,j}$ coefficients for $i=0,1,2$ and for $j$ up to 4). The function $F_1$ is a polynomial in $\rho$ of any degree and trigonometric in $\cos(2\, p\, \theta)$; it is the contribution of the non--axysimmetric part given by $U_{ns}$, namely, $F_1(\rho,\theta)=U_{ns}(\rho+r_*, \theta)$, and it can be written as 
\beq{Vji}
F_1(\rho,\theta)= \sum_{j=0}^{\infty} \sum_{i=1}^{\infty} V_{j,i}\ \rho^j \cos(2 i \theta)
\eeq
with $V_{j,i}$ suitable coefficients (in Appendix~\ref{app:appA}, we report the $V_{j,i}$ coefficients for $j$ up to 4 and for $i$ up to 3).

Let $\kappa_*=\kappa(r_*)$; we note that the coefficient of $\rho^2$ in Eq.~\equ{HE} turns out to be (see Eq.~\equ{k2U0})
$$
\frac{d^2 U_s}{d r^2}(r_*)+\frac{3}{r_*} \frac{d U_s}{dr}(r_*)= \kappa_*^2 \ . 
$$
Thus, the Hamiltonian~\equ{HE} becomes
\beq{Hnew}
\H(p_{\rho},I,\rho,\theta) = {{p_{\rho}^2}\over 2}+ \frac{1}{2}  \kappa_*^2\ \rho^2  + (n_*-\Omega_P) I- \frac{2 n_*}{r_*} I\, \rho +  \frac{I^2}{2 r_*^2}+  F_0(I,\rho) + F_1(\rho, \theta)  \, .
\eeq
Let us introduce the epicyclic action--angle variables $(J,\varphi)$ related to $(\rho, p_{\rho})$ by the canonical transformations:  
\beq{epivar}
\left\{
	\begin{aligned}
\rho &= \sqrt{{{2 J}\over |\kappa_*|}}\ \sin\varphi\\
p_\rho &= \sqrt{2|\kappa_*|J}\ \cos\varphi\ . 
\end{aligned}	
	\right.
\eeq
Then, the Hamiltonian~\equ{Hnew} can be written in terms of the epicyclic variables as 
\beq{Hnewnew}
\H(J,I,\varphi,\theta)  = |\kappa_*| J + \big(n_*-\Omega_P \big)I 
+ \widetilde F_s(J,I, \varphi) +  \widetilde F_{ns}(J,\varphi, \theta) \ ,
\eeq
where $\widetilde F_s$ is due to the axisymmetrical part of the potential and can be written as
\begin{equation}\label{F_tilde}
	\begin{aligned}
\widetilde F_s(J,I, \varphi) &=  -2\frac{n_*}{r_*} I\sqrt{\frac{2 J}{|\kappa_*|}}\sin\varphi +\frac{I^2}{2r_*^2}+F_0\left(I,\sqrt{\frac{2 J}{|\kappa_*|}}\sin\varphi\right)\\
&= -2\frac{n_*}{r_*} I\sqrt{\frac{2 J}{|\kappa_*|}}\sin\varphi +\frac{I^2}{2r_*^2} + \sum_{i= 0}^2 \ \sum_{j=3-i}^{\infty} d_{i,j}  \, I^i \ J^{j/2} \ \sin^j \varphi \ ,
\end{aligned}
\end{equation}
where 
\beq{dij}
d_{i,j} = c_{i,j}\left( \frac{2}{|\kappa_*|} \right)^{j/2}  \, ,
\eeq
 while  $\widetilde F_{ns}$ is due to the non--axisymmetrical part of the potential and can be written as
\beqa{F1_tilde_product}
		\widetilde F_{ns}(J, \varphi, \theta) & = & U_{ns}\left(\sqrt{\frac{2 J}{|k_*|}}\sin\varphi+r_*, \theta\right) \nonumber \\
		 & = &\sum_{j=0}^{\infty} \sum_{i=1}^{\infty} \widetilde V_{j,i}\ J^{j/2} \cos(2 i \theta) \sin^j \varphi\, , 
\eeqa
where
$$
\widetilde V_{j,i}= V_{j,i} \left( \frac{2}{|\kappa_*|} \right)^{j/2} \ .
$$
Let us rewrite the sum of Eq.~\eqref{F_tilde} and of Eq.~\eqref{F1_tilde_product} as a linear combination of the angles $\theta$, $\varphi$:
\beq{F_tilde_plus_F1_tilde}
\widetilde F_s(J,I, \varphi)+\widetilde F_{ns}(J, \varphi, \theta) = \sum_{|i|=0}^{\infty}\sum_{j=0}^{\infty} \big( \alpha_{2i,j}\cos(2 i \theta +j \varphi)+ \beta_{2i,j}\sin(2 i \theta +j \varphi)\big) \, ,
\eeq
where $\alpha_{2i,j},\beta_{2i,j}$ are polynomial functions in $I$, $J$ depending on $\G$, $M_P$ and on the coefficients $P_{2\ell,2p},C_{2\ell,2p}$ (in Appendix~\ref{app:appA}, we report the $\alpha_{2i,j},\beta_{2i,j}$ coefficients for $|i|$ up to 3 and  $j$ from 0 to 4). Once again, we remark that the function~\equ{F_tilde_plus_F1_tilde} depends on multiples of $2\theta$, due to the symmetry of the central body $\E$, which is assumed to be an ellipsoid. Casting together \equ{Hnewnew}, \equ{F1_tilde_product}, \equ{F_tilde_plus_F1_tilde}, we obtain \equ{eq:HIJ}.
\end{proof}

As consequence of the definition of the epicyclic action--angle variables $(J,\varphi)$ given in Eqs.~\eqref{epivar}, it is possible to relate the action $J$ to the eccentricity of the orbit of the particle revolving about the central body. Thus, we have the following result. 

\begin{proposition}\label{prop:Jecc}
	Let $r_*$ denote the value of the radius at the resonance and define $\kappa_*= \kappa(r_*)$. Then, locally around $r_*$, the action variable $J$ can be expressed as 
	\beqno
	J = \frac{1}{2}\,|\kappa_*|\, r_*^2 \, e^2 \, ,
	\eeqno
	where $e$ is the eccentricity of the orbit of the particle $\P$  around $\E$.
	\end{proposition}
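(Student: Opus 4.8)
The plan is to connect the epicyclic action $J$, defined through the canonical transformation \equ{epivar}, to the classical orbital eccentricity by comparing the geometric meaning of the radial excursion $\rho = r - r_*$ in the two descriptions. First I would recall that, in the epicyclic approximation, the variable $\rho$ oscillates about zero with amplitude controlled by $J$: from \equ{epivar} we have $\rho = \sqrt{2J/|\kappa_*|}\,\sin\varphi$, so the radial displacement ranges over $\pm\sqrt{2J/|\kappa_*|}$, i.e.\ the maximal radial excursion is $\rho_{\max} = \sqrt{2J/|\kappa_*|}$. The entire computation reduces to identifying this amplitude with the corresponding eccentric excursion of a (perturbed) Keplerian orbit whose guiding radius is $r_*$.

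Next I would bring in the orbital interpretation. For a nearly circular orbit with semimajor axis $\approx r_*$ and small eccentricity $e$, the radius satisfies $r = r_*(1 + e\cos(\text{epicyclic phase}) + O(e^2))$ to leading order, so the radial displacement from the guiding center is $\rho = r - r_* \approx r_* e \cos(\cdot)$, with amplitude $\rho_{\max} = r_* e$. This is the standard epicyclic picture in which the epicyclic frequency $\kappa_*$ governs the radial oscillation and the eccentricity $e$ is precisely the relative amplitude of that oscillation. Equating the two expressions for $\rho_{\max}$ gives $\sqrt{2J/|\kappa_*|} = r_* e$, and squaring and solving for $J$ yields
\beqno
J = \frac{1}{2}\,|\kappa_*|\, r_*^2\, e^2 \, ,
\eeqno
which is exactly the claimed identity.

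I expect the main obstacle to be making the second step precise rather than heuristic, namely justifying that $\rho_{\max} = r_* e$ holds to the order required for the \emph{local} statement around $r_*$. The subtlety is that $r_*$ is not the semimajor axis of an unperturbed Kepler ellipse but a resonant radius determined by the full axisymmetric potential $U_s$ of the triaxial body, so one must argue that the relation $J = \tfrac12|\kappa_*|\rho_{\max}^2$ from \equ{epivar} together with the epicyclic identification of $e$ as the fractional radial amplitude survives in the presence of the oblateness and elongation corrections. The cleanest route is to work within the quadratic (harmonic) part of the Hamiltonian \equ{Hnew}, $H_2 = p_\rho^2/2 + \tfrac12\kappa_*^2\rho^2$, whose action is $J$ by construction of \equ{epivar}, and to note that the eccentricity is \emph{defined} in this setting through the epicyclic amplitude $\rho_{\max}/r_* = e$; higher-order terms $F_0$ and $F_1$ affect only the $O(e^3)$ corrections and do not enter the leading-order relation. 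Thus the identity is exact at the level of the epicyclic variables and I would present it as the defining relation between $J$ and $e$, consistent with the standard guiding-center reduction.
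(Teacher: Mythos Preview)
Your argument is correct and coincides with the paper's own proof: both identify the epicyclic amplitude $\sqrt{2J/|\kappa_*|}$ of $\rho$ from \equ{epivar} with the Keplerian radial excursion $r_* e$ (the paper writes this via $r = r_*(1 - e\cos u)$ and the phase match $u=\varphi+\pi/2$, you via $\rho_{\max}$), and then square to obtain $J=\tfrac12|\kappa_*|r_*^2 e^2$. Your additional remarks on higher-order corrections are sound but go beyond what the paper invokes.
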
	
	\begin{proof}
From Kepler's relations, taking as semimajor axis $a=r_*$, we have $r=r_*(1-e\, \cos u)$, where $u$ is the eccentric anomaly. Recalling that $\rho = r -r_*$ and using the first of Eqs.~\eqref{epivar}, we obtain  
$$
\sqrt{\frac{2 J}{|\kappa_*|}} \sin \varphi = -r_* \, e \, \cos u \, .
$$ 
From the last relation, we can set $u = \varphi + \pi/2$ and $\sqrt{\frac{2 J}{|\kappa_*|}} = r_* e$, which implies 
\beqano
J = \frac{1}{2}\,|\kappa_*|\, r_*^2 \, e^2 \,.
\eeqano
	\end{proof}	

\subsection{KAM non--degeneracy of the axisymmetric part}
 
In this Section, we check that the Hamiltonian~\equ{Hnewnew}, under different approximations, satisfies the so-called \sl Kolmogorov non--degeneracy condition, \rm {amounting to require that the Hessian matrix of the integrable part has determinant different from zero. The formal definition is given below.}

\begin{definition}\label{def:ND}
Given an $n$ DOF nearly--integrable Hamiltonian  $\H_0(\underline{Q},\underline{\eta})=h_0(\underline{Q})+ {f_0(\underline{Q},\underline{\eta})}$ with action--angle variables $(\underline{Q},\underline{\eta})\in\real^n\times\torus^n$, we say that $\H_0$ satisfies Kolmogorov non--degeneracy condition, if the Hessian matrix of the integrable part has determinant different from zero: 
$$
\det(\partial^2_{\underline Q} h_0(\underline{Q}))\not=0\ . 
$$
\end{definition}

Kolmogorov non--degeneracy condition is essential for the application of Kolmogorov-Arnold-Moser (KAM) theory, which ensures the persistence of invariant tori under small perturbations of the integrable part. The phase space associated to the Hamiltonian~\equ{Hnewnew} (shown in Section~\ref{sec:test}) is 4D; hence, on a (3D) fixed energy level, the 2D KAM rotational invariant tori provide stability for infinite times (see, e.g., \cite{Celletti1990II,Celletti1990I,CC2007} for stability results through trapping KAM tori in model problems of Celestial Mechanics). 

\vskip.1in 

{
In Proposition \ref{prop:nondeg} below, we prove that the Hamiltonian \equ{Hnewnew}, upon suitable truncations and averaging, is non--degenerate in the sense of Kolmogorov.
\begin{definition}\label{def:expansion}
	With reference to the Hamiltonian \equ{Hnewnew}, we define the axisymmetric Hamiltonian as 	
$$
	\H^{axi}(J,I,\varphi)  = |\kappa_*| J + \big(n_*-\Omega_P \big)I 
	+ \widetilde F_s(J,I, \varphi)\ ,
$$
which can be expanded in powers of $I$ and $\sqrt{J}$ as  
\beq{Haxiexp}
\H^{axi}(J,I,\varphi) =  |\kappa_*| J + \big(n_*-\Omega_P \big)I  -2\frac{n_*}{r_*} I\sqrt{\frac{2 J}{|\kappa_*|}}\sin\varphi +\frac{I^2}{2r_*^2} + \sum_{i= 0}^2 \ \sum_{j=3-i}^{\infty} d_{i,j}  \, I^i \ J^{j/2} \ \sin^j \varphi
\eeq
with $d_{i,j}$ as in \equ{dij}.
\end{definition}
}

{
For our purposes, we will identify the Hamiltonian \equ{Hnewnew} with $\mathcal H_0(\underline Q, \underline \eta)$ in Definition \ref{def:ND} by setting $\underline Q=(J, I)$, $\underline \eta=(\varphi, \theta)$ and 
\beqa{Haxi}
h_0(J, I)&=&|\kappa_*| J + \big(n_*-\Omega_P \big)I +\langle \widetilde F_s(J,I, \varphi) \rangle\nonumber\\
f_0(J, I, \varphi, \theta)&=& \widetilde F_s(J,I, \varphi)-\langle \widetilde F_s(J,I, \varphi) \rangle+\widetilde F_{ns}(J, \varphi, \theta)\ , 
\eeqa
where $\langle\cdot\rangle$ denotes the average with respect to $\varphi$. We notice that $h_0$ coincides with the average of $\H^{axi}$ in Definition \ref{def:expansion}: 
\beq{h0Haxi}
h_0(J, I)=\langle \H^{axi}(J,I,\varphi) \rangle\ .
\eeq
Then, we set $\mathcal H_0$ as 
\beq{H0new}
\mathcal H_0(J,I,\varphi,\theta)=h_0(J, I)+f_0(J, I, \varphi, \theta)\ . 
\eeq
}

{
	\begin{proposition}\label{prop:nondeg}
Consider the Hamiltonian $\mathcal H_0$ as in \equ{H0new} with $h_0$ defined in \equ{h0Haxi} expanded in the action variable $J$ up to first or second order (see \equ{Haxiexp}). 
Then, the Kolmogorov non--degeneracy condition of Definition \ref{def:ND} is satisfied in the following cases. 
		\begin{itemize}
\item[(i)] Consider the Hamiltonian $\H^{axi}$ expanded up to first order in $J$: 
			\beqa{H_rho2}
			\H^{axi}(J,I,\varphi) & = &|\kappa_*| J + \big(n_*-\Omega_P \big)I  -2\frac{n_*}{r_*}\sqrt{\frac{2 }{|\kappa_*|}} I J^{1/2}\sin\varphi \nonumber \\ && +\frac{I^2}{2r_*^2} +\frac{6 \, n_*}{|\kappa_*| r_*^2 }\, I\, J \sin^2 \varphi + O_{5/2} \, ,
			\eeqa
			where $O_{5/2}$ are terms of order greater or equal than $5/2$ in the variable $IJ$. Denoting by $h_0(J, I)$ its average with respect to the angle as in \equ{h0Haxi}, then the Hamiltonian $\mathcal H_0$ is non--degenerate provided $n_*$ is different from zero.
\item[(ii)] Consider the Hamiltonian $\H^{axi}$ expanded up to second order in $J$: 
			\beqa{H_rho4}
			\H^{axi}(J,I,\varphi) & = &|\kappa_*| J + \big(n_*-\Omega_P \big)I  -2\frac{n_*}{r_*}\sqrt{\frac{2 }{|\kappa_*|}} I J^{1/2}\sin\varphi+\frac{I^2}{2r_*^2}  \nonumber \\ && +\frac{6 \, n_*}{|\kappa_*| r_*^2 }\, I\, J \sin^2 \varphi + d_{0,3} J^{3/2} \sin^3 \varphi + d_{0,4} J^2 \sin^4 \varphi + O_{5/2} \, ,
			\eeqa
			where $O_{5/2}$ are terms of order greater or equal than $5/2$ in the variable $IJ$, and $d_{0,3}$ (truncated to $O({R^7\over {r^{11}}})$), $d_{0,4}$ (truncated to $O({R^7\over {r^{12}}})$) are given by
			\beqa{d03}
			d_{0,3} & = & -\frac{2 \sqrt{2}\,  \G M_P \,P_{0,0} \,C_{0,0}  }{\kappa_*^{3/2} \, r_*^4} + \frac{8\sqrt{2} \,  \G M_P R^2 \, P_{2,0}\,C_{2,0}  }{\kappa_*^{3/2} \, r_*^6} \nonumber \\ 
			&& +\frac{50 \sqrt{2}\,  \G M_P R^4 \,P_{4,0}\,C_{4,0}  }{\kappa_*^{3/2} \, r_*^8}  +\frac{140\sqrt{2} \,  \G M_P R^6 \,P_{6,0}\,C_{6,0}  }{\kappa_*^{3/2} \, r_*^{10}} \  ,
			\eeqa
			\beqa{d04}
			d_{0,4} & = & \frac{6\,  \G M_P \,P_{0,0} \,C_{0,0}  }{\kappa_*^2 \, r_*^5} - \frac{30\,  \G M_P R^2 \, P_{2,0}\,C_{2,0}  }{\kappa_*^2 \, r_*^7} \nonumber \\ 
			&& -\frac{230\,  \G M_P R^4 \,P_{4,0}\,C_{4,0}  }{\kappa_*^2 \, r_*^9}  -\frac{770 \,  \G M_P R^6 \,P_{6,0}\,C_{6,0}  }{\kappa_*^2 \, r_*^{11}} \  ,
			\eeqa
			being $C_{i,j}$, $P_{i,j}$ as in Eqs.~\equ{coeffC}, \equ{coeffP}. Denoting by $h_0(J, I)$ its average with respect to the angle as in \equ{h0Haxi}, then the Hamiltonian $\mathcal H_0$ is non--degenerate provided 
			\beq{condii}
			\frac{3 d_{0,4}}{4 r_*^2}-\left(\frac{3 \, n_*}{|\kappa_*| r_*^2 }\right)^2 \neq 0\ .
			\eeq
		\end{itemize}
	\end{proposition}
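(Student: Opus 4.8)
The plan is to reduce the non-degeneracy check to a direct computation of a $2\times 2$ Hessian, exploiting the fact that by \equ{h0Haxi} the integrable part $h_0(J,I)$ is nothing but the $\varphi$-average of the truncated axisymmetric Hamiltonian $\H^{axi}$. Since the only $\varphi$-dependence enters through powers of $\sin\varphi$, the whole argument rests on the elementary averages $\langle\sin\varphi\rangle=\langle\sin^3\varphi\rangle=0$, $\langle\sin^2\varphi\rangle=1/2$ and $\langle\sin^4\varphi\rangle=3/8$, where $\langle\cdot\rangle$ denotes the mean over $\varphi\in\torus$. First I would perform this average termwise, observing that every term linear in $\sin\varphi$ (in particular the mixed term $-2\frac{n_*}{r_*}\sqrt{2/|\kappa_*|}\,IJ^{1/2}\sin\varphi$ and, in case (ii), the term $d_{0,3}J^{3/2}\sin^3\varphi$) drops out, so that $h_0$ reduces to an explicit polynomial in the actions $\underline Q=(J,I)$.

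For case (i), averaging \equ{H_rho2} leaves
\beqno
h_0(J,I) = |\kappa_*|J + (n_*-\Omega_P)I + \frac{I^2}{2r_*^2} + \frac{3n_*}{|\kappa_*|r_*^2}\,IJ\ ,
\eeqno
which is quadratic in $(J,I)$, so its Hessian is constant. The decisive feature is that there is no pure $J^2$ term at this order, hence $\partial_J^2 h_0=0$; together with $\partial_I^2 h_0=1/r_*^2$ and the cross derivative $\partial_J\partial_I h_0=3n_*/(|\kappa_*|r_*^2)$, the determinant collapses to $-\big(3n_*/(|\kappa_*|r_*^2)\big)^2$, which is nonzero exactly when $n_*\neq 0$, as claimed.

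For case (ii), the additional quartic term $d_{0,4}J^2\sin^4\varphi$ in \equ{H_rho4} survives the average with weight $3/8$, producing the extra contribution $\frac38 d_{0,4}J^2$ in $h_0$ and hence $\partial_J^2 h_0=\frac34 d_{0,4}$, while $\partial_I^2 h_0$ and $\partial_J\partial_I h_0$ are unchanged from case (i). The determinant of the Hessian is therefore $\frac{3d_{0,4}}{4r_*^2}-\big(3n_*/(|\kappa_*|r_*^2)\big)^2$, so non-degeneracy is equivalent precisely to condition \equ{condii}. The explicit formulas \equ{d03}, \equ{d04} for $d_{0,3},d_{0,4}$ are needed only to make \equ{condii} checkable in the concrete test cases of Section~\ref{sec:test}; note that $d_{0,3}$ never enters the Hessian, since the cubic power of $\sin\varphi$ averages to zero.

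The computation is otherwise routine, so the only point requiring care, and the main obstacle I anticipate, is the bookkeeping of the truncation order: I would verify that within the stated expansions no neglected $O_{5/2}$ term can feed into the entries $\partial_J^2 h_0$, $\partial_I^2 h_0$, $\partial_J\partial_I h_0$ at the retained order, so that the truncated $h_0$ is genuinely quadratic and its Hessian exactly constant. Equivalently, one must confirm that the coefficients displayed in \equ{H_rho2} and \equ{H_rho4} capture all monomials $I^iJ^{j/2}$ contributing to $\langle\H^{axi}\rangle$ up to the prescribed order in $J$. Once this is granted, the determinant is independent of $(J,I)$ and the conclusion holds on the whole truncated phase space rather than at a single torus.
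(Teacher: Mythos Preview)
Your proposal is correct and follows essentially the same route as the paper: average $\H^{axi}$ over $\varphi$ using $\langle\sin^2\varphi\rangle=1/2$ and $\langle\sin^4\varphi\rangle=3/8$, then compute the $2\times 2$ Hessian of the resulting quadratic $h_0(J,I)$ to obtain $-A^2$ in case (i) and $\frac{3d_{0,4}}{4r_*^2}-A^2$ in case (ii), with $A=3n_*/(|\kappa_*|r_*^2)$. Your additional remarks on the truncation bookkeeping and on why $d_{0,3}$ does not enter are sound but go slightly beyond what the paper actually spells out.
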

}

{
	\begin{proof}
		(i) Denoting by  
		\beq{frequencies}
		\omega_1 =|\kappa_*| \, , \quad \omega_2 =  n_*-\Omega_P \, ,
		\eeq
		averaging $\mathcal H^{axi}$ with respect to $\varphi$ and omitting the higher order terms, the average Hamiltonian becomes
		\beq{H_rho2_ave}
		h_0(J,I)  = \omega_1 J + \omega_2 I +\frac{I^2}{2r_*^2} +A\, I J \, ,
		\eeq
		where
		\beq{defA}
		A = \frac{3 \, n_*}{|\kappa_*| r_*^2 } \, .
		\eeq
		Let us compute the determinant of the Hessian of the Hamiltonian~\eqref{H_rho2_ave}:
		\beqno
		\det \Big( \partial^2_{JI} h_0(J,I) \Big) = \det
		\begin{pmatrix}
			0 & A \\
			A & \frac{1}{r_*^2} 
		\end{pmatrix}
		= -A^2 \ ;
		\eeqno
		thus, the Hamiltonian~\eqref{H_rho2_ave} is not degenerate, provided that $A\not=0$, which amounts to require $n_*\not=                                                                                                   0$. \\	
		(ii) Averaging Eq.~\equ{H_rho4} with respect to $\varphi$, recalling Eq.~\eqref{frequencies}, omitting the zero--average terms in $\varphi$ and the terms of higher orders, one gets that the average Hamiltonian $h_0$ is given by 
		\beq{H_rho4_ave}
		h_0(J,I) =  \omega_1 J + \omega_2 I  + \frac{I^2}{2r_*^2} +A\, I J + \frac{3}{8} d_{0,4} \, J^2\ ,
		\eeq
		where $A$ is as in Eq.~\equ{defA}. \\
		The determinant of the Hessian of the Hamiltonian~\eqref{H_rho4_ave} is
		\beq{det}
		\displaystyle
		\det \Big( \partial^2_{JI} h_0(J,I) \Big) = \det
		\begin{pmatrix}
			\frac{3}{4}d_{0,4} & A \\
			A & \frac{1}{r_*^2} 
		\end{pmatrix}
		= \frac{3 d_{0,4}}{4 r_*^2}-A^2   \, ,
		\eeq
		which is different from zero under \equ{condii}, ensuring that the Hamiltonian~\eqref{H_rho4_ave} is not degenerate. \\	
\end{proof}}

\subsection{Resonances in the epicyclic variables}\label{Sec:resepi}
Let us now proceed to the definition of the resonances using the epicyclic formulation: from Eq.~\equ{eq:HIJ} we have that 
$$
\dot\varphi={{\partial\H}\over {\partial J}}\ ,\qquad 
\dot\theta={{\partial\H}\over {\partial I}}
$$
and at lowest order it is 
$$
\dot\varphi=|\kappa_*|\ ,\qquad \dot\theta=n_*-\Omega_P\ ;
$$
Moreover, we observe that the angles $(\theta,\varphi)$ appear as combinations of the form $2i\theta+j\varphi$ with $i,j\in\integer$. Following Definitions~\ref{def:LinRes} and \ref{def:CorRes}, we have that  
\begin{enumerate}
	\item[(i)] a corotation or $1:1$ resonance corresponds to $j=0$ and any $i$;  
	\item[(ii)] a Lindblad resonance of order $m:(m-j)$ occurs when $j\kappa_*=m(n_*-\Omega_P)$ and it corresponds to $2i=-m$ and any $j$.
\end{enumerate}

The resonant combination of the angles associated to a Lindblad resonance as in (ii) is $-m\theta+j\varphi$ with $m=-2i$. In particular, 

\vskip.1in 

\noindent 
(A) $m=-1$, $j=1$ gives $m:(m-j)=1:2$; the resonant angle is $\theta+\varphi$ and its multiples;

\noindent 
(B) $m=-1$, $j=2$ gives $m:(m-j)=1:3$; the resonant angle is $\theta+2\varphi$ and its multiples. 


\vskip.1in

\begin{remark}
	We note that, at the corotation resonance, the expression $n_*=\Omega_P  $ holds; the term of first order in $I$ in Hamiltonian~\equ{eq:HIJ} vanishes; this is consistent with the fact that $\dot \theta = \partial \H /\partial I = 0$, namely $\theta = L - \Omega_P t$ is constant, meaning that the particle $\P$ is revolving with the same angular velocity of the body $\E$. 
\end{remark}

\section{Analysis of three resonances in two test cases}\label{sec:test}

To analyze the differences among the resonances, we consider two test cases:  the first one representing an almost spherical (AS) body, the second one representing the opposite case of a highly aspherical (HA) body. We consider three resonances: corotation (also denoted as $1:1$), $1:2$ and $1:3$ resonances, whose choice is motivated by the fact that they represent, respectively, resonances at close, intermediate and far distances from the primary body (see Section~\ref{sec:rradius}, compare with \cite{CDD25} for applications to real test cases). In Table~\ref{table:test_values}, we report the astronomical values of the parameters of the two test cases, chosen to be comparable with the values of small bodies in the Solar system (like dwarf planets or Centaurs). Beside the physical parameters (mass, semi--axes, reference radius -- computed from Eq.~\equ{refradius}), the elongation and oblateness parameters (computed from Eqs.~\equ{oblat}, \equ{elong}), we provide the rotation period of the central body and the resonant radii (see Section~\ref{sec:rradius}) for corotation, $1:2$ and $1:3$ resonances. We notice that with the values in Table~\ref{table:test_values}, the coefficients $C_{2\ell,2 p}$ defining the potential can be computed through the expressions given in Eq.~\equ{coeffC}. 

We also notice that we have checked for AS and HA the KAM non--degeneracy condition as in Proposition~\ref{prop:nondeg} and we confirm that the Hamiltonian is non--degenerate in all resonances $1:1$, $1:2$, $1:3$ at order 1 and 2 in $J$.
	\begin{center}
\begin{table}[h]

	\begin{tabular}{|c|c|c|}
	\hline
 & AS & HA \\	\hline
		Rotation period $T_{rot}$ (h) & 8 & 8 \\
	\hline
	 Mass $M_P$ (kg) & $10^{21}$ & $10^{21}$\\
	 \hline
	 Semi--axes $a \times b \times c$ (km) &  $1000 \times 980 \times 960$ & $1000 \times 650 \times 400$\\
	 \hline
	 Reference radius $R$ (km)& 979 & 558 \\
	 \hline
	 Elongation parameter & 0.0305705 & 0.927371  \\
	 \hline
	 Oblateness parameter & 0.0206585 & 0.885218  \\
	 \hline
	 Corotation radius (km) & 1124.5 & 1176.82\\
	 \hline
	 $1:2$ resonance radius (km) & 1776.57 & 1771.35 \\
	 \hline
	 $1:3$ resonance radius (km) & 2327.16 & 2316.93 \\
	 \hline
	\end{tabular}

\vskip.1in 

\caption{Parameter values of the two selected test cases of an almost spherical body (AS) and a highly aspherical object (HA); the table provides also the location of the three selected resonances.}\label{table:test_values} 

\end{table}
	\end{center}

\subsection{The resonant radius}\label{sec:rradius}
As a first task, we compute the value of the radius corresponding to a  resonance of order $p:q$. We will use two different methods, described in the following Definition.  

\begin{definition} 
  The Keplerian resonant radius $r_{pq}^{kep}$ is defined as 
    \beq{r_kep}
r_{pq}^{kep} = (\G M_P T_{pq}^2)^{1/3} \, ,
\eeq
where $T_{pq}$ is the period of revolution of the particle $\P$ about $\E$, namely
\beqno
T_{pq} = \frac{q}{p} \, \frac{1}{\Omega_P} \, .
\eeqno
Moreover, from Eq.~\eqref{linres}, the resonant radius $r_{pq}$ is defined as a solution of the equation 
\beq{res_eq}
(p-q) \kappa(r_{pq}) =  p (n(r_{pq}) - \Omega_P) \ ,
\eeq
where $n(r)$, $\kappa(r)$ can be computed through Eqs.~\eqref{nkappa_2}. 
\end{definition}

While the definition of $r_{pq}^{kep}$ takes advantage of Kepler's third law, considering then a spherical approximation of the central body $\mathcal E$, the method used to find $r_{pq}$, although heavier from a computational point of view,  takes into account higher order harmonics of the gravitational potential of the ellipsoid $\E$. Note that $r_{pq}$ is the analogous of the Lindblad radius given in Definition~\ref{def:LinRes} or the corotation radius given in Definition~\ref{def:CorRes}. 
In Figure~\ref{fig:rel_diff_radii}, left panels, we plot  the relative difference $d$ between the radii computed by Kepler's law~\equ{r_kep} and by solving Eq.~\equ{res_eq}, namely, the quantity 
\begin{equation*}
	d= \frac{|r_{pq}^{kep}-r_{pq}|}{r_{pq}^{kep}}\ , 
\end{equation*}
considering several resonances ranging from $1:3$ to corotation. 

In Figure~\ref{fig:rel_diff_radii}, right panels, we plot the values of the resonant radii obtained from Eq.~\equ{res_eq} for several resonances from $1:3$ to corotation, compared with the reference radius $R$ and the biggest semi--axes $a$ in the two test cases (AS and HA). 

We notice that the relative difference of the radii is much smaller in the AS case than in the HA case, as expected since the latter differs more from a spherical body. Moreover, the highest difference is found for the corotation resonance. Interestingly, in both cases the corotation radius is very close to the body (namely, to the semi--axis $a$), while the $1:3$ resonant radius is the farthest. 

\begin{figure}[h]
	\centering
	\includegraphics[scale=0.3]{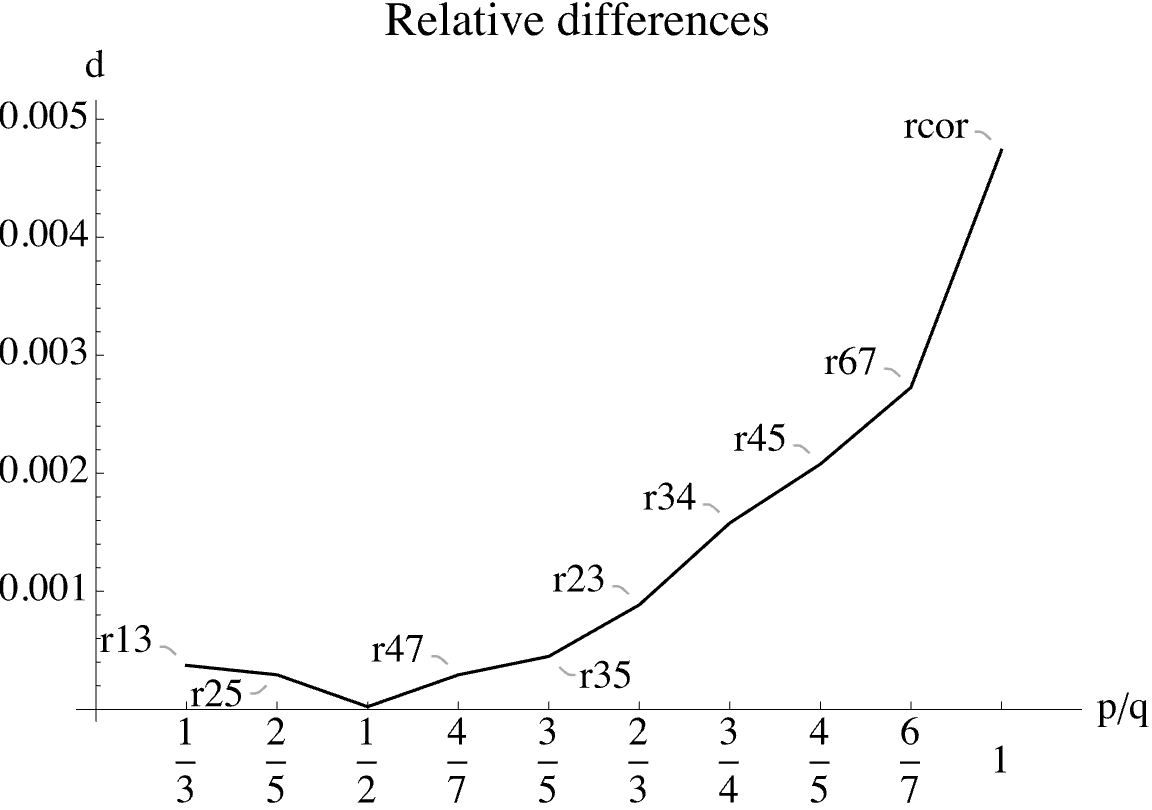}\hglue0.5cm
	\includegraphics[scale=0.3]{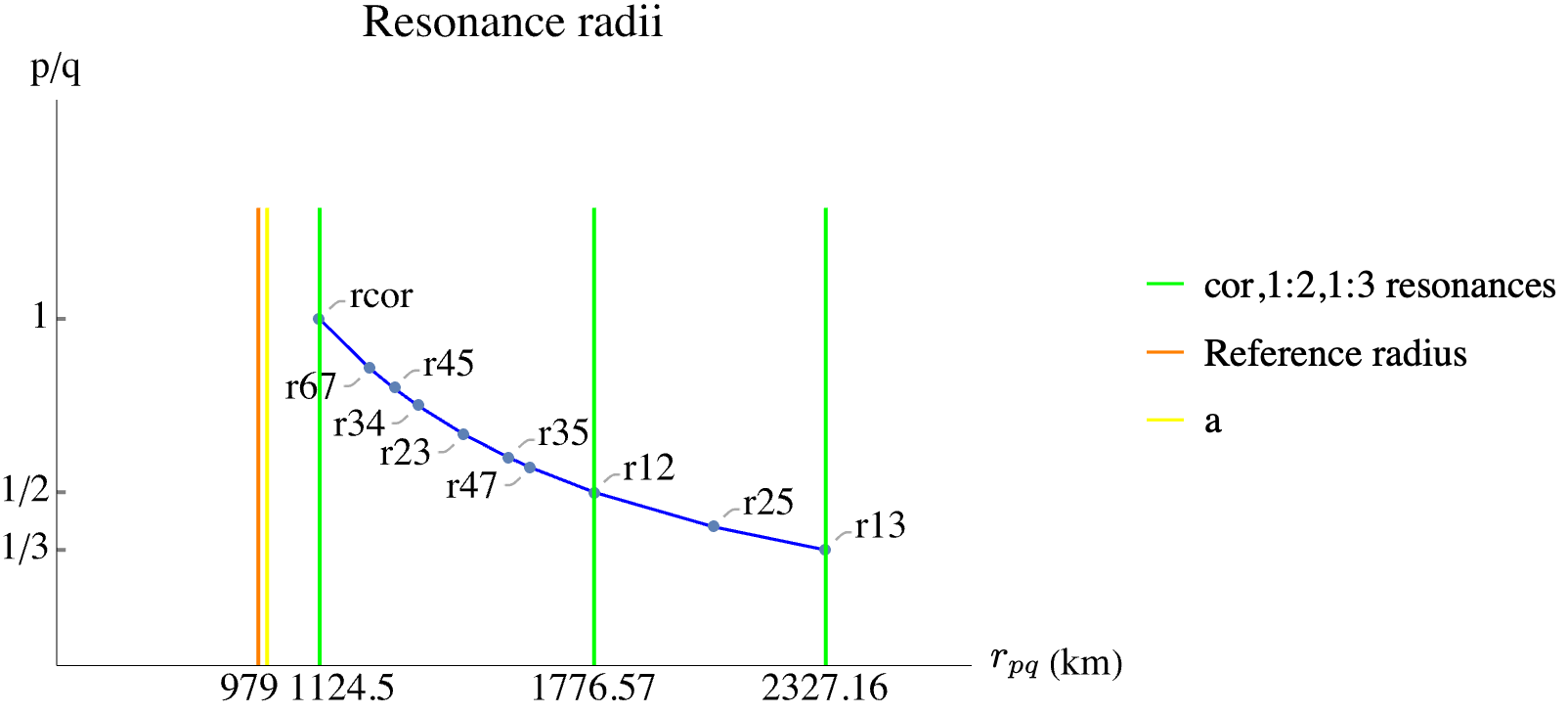}\\
	\vspace{1cm}
	\includegraphics[scale=0.3]{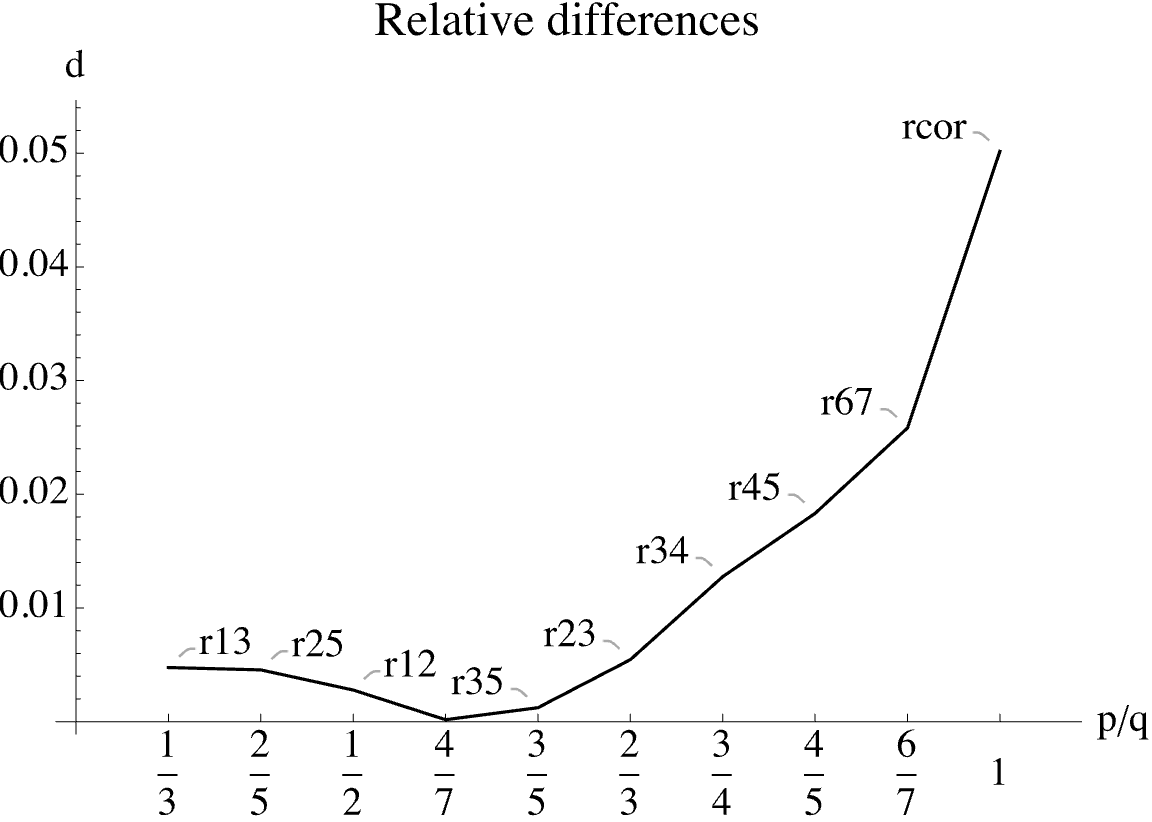}\hglue0.5cm
	\includegraphics[scale=0.3]{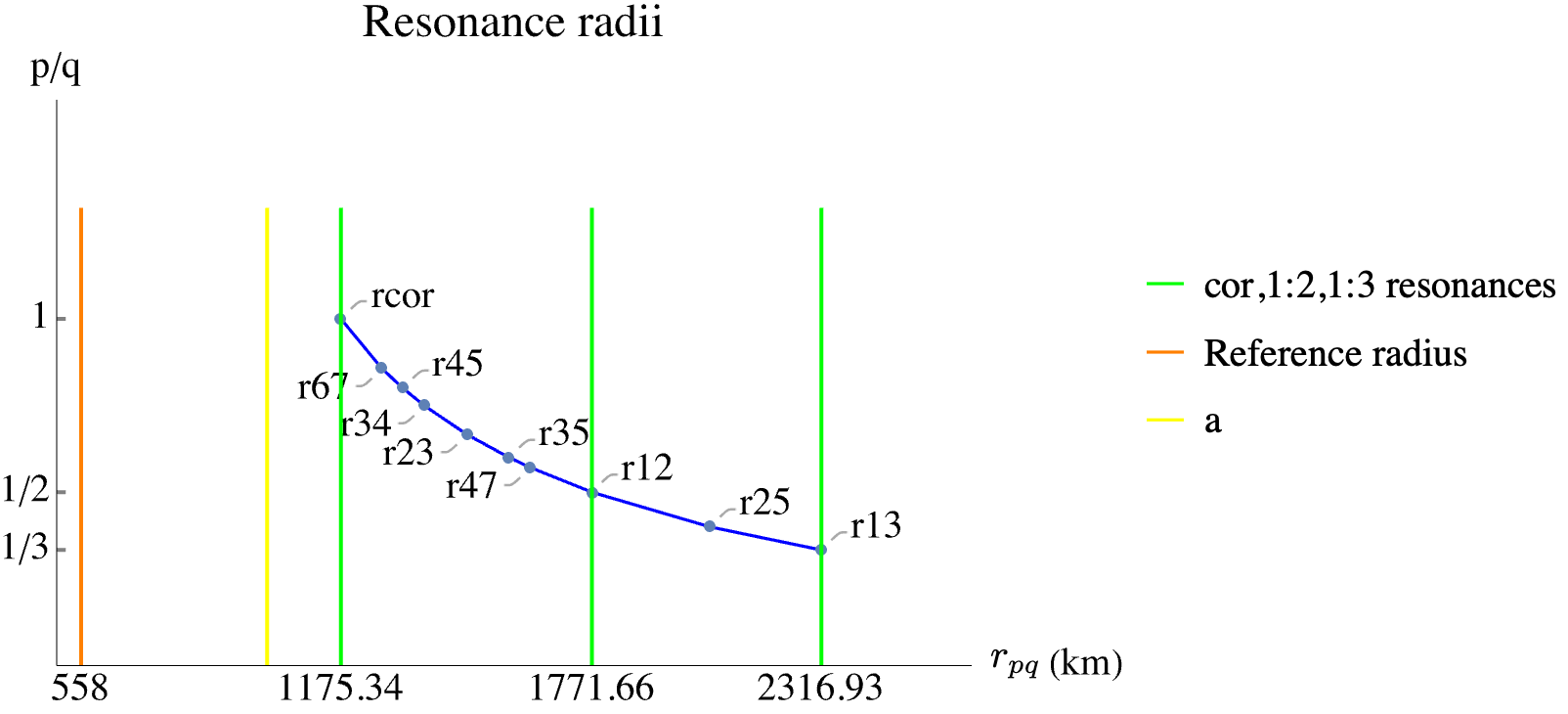}
	\caption{Top: almost spherical case. Bottom: highly aspherical case. Left: relative difference $d$ between the radii computed through the third Kepler's law~\equ{r_kep} and considering the complete resonance equation~\equ{res_eq} as the order of resonance $p/q$ varies. Right: lengths of the resonance radii $r_{pq}$ (in km) using Eq.~\equ{res_eq}, from corotation to $1:3$, compared with the reference radius $R$ (in orange) and the biggest semi--axis $a$ (in yellow).}
	\label{fig:rel_diff_radii}
\end{figure}

\subsection{Amplitudes of libration and occurrence of bifurcations}

As an indication of the strength of the resonances, we compute their amplitude of libration. To this end, we proceed to reduce the Hamiltonian to the lowest terms of the expansion around the resonance; this leads to a one--dimensional pendulum Hamiltonian for which we can compute, using standard  formulae (see, e.g., \cite{lichtenberg2013regular}), the amplitude around the stable equilibrium. We will limit the implementation of this computation to the corotation case. 

A more accurate estimate is obtained following the dynamics induced by the whole Hamiltonian through a numerical integration and in computing the distance between the separatrices around the elliptic equilibria. 

The results described in the following are obtained for the Hamiltonian~\eqref{eq:HIJ} truncated up to $i = 5$ and $j = 8$; this means truncating the axisymmetric potential $U_s(r)$ given in Eq.~\eqref{U0} and the non--axisymmetric part $U_{ns}(r,\theta)$ given by Eq.~\eqref{ns_potential} at order $\ell = 5$ and expanding the function $F_0(I,\rho)$, given in Eq.~\eqref{F0_expansion}, and the function $F_1(\rho, \theta)$, given in Eq.~\eqref{Vji}, in $\rho$ up to order 16. 

This choice has been done taking into account that the expansion in $\rho$ up to order 16 corresponds to an expansion of the Hamiltonian~\eqref{eq:HIJ} in $J$ up to the order 8 and in the eccentricity of the particle's orbit up to the order 16 as proved in Proposition~\ref{prop:Jecc}; in this case, the remainder of the complete Hamiltonian with respect to the truncated Hamiltonian turns out to be of the order of $10^{-6}$ for values of the eccentricities up to 0.5. Moreover, the truncation of the gravitational potential at order $\ell=5$ provides an error of the highest terms still of the order of $10^{-6}$. Notice that such errors vary according to the chosen resonance.

The first step done to study the local behavior of a particle in the vicinity of a resonance is to consider a suitable truncation of the Hamiltonian $\mathcal H(J, I, \varphi, \theta)$ as expressed in Eq.~\eqref{eq:HIJ}, which takes into account only the terms affected by the resonance itself. In general, this is done by taking $r_*$ as the resonant radius, and keeping only the trigonometric terms depending on the resonant angular combination. 
\begin{proposition}
Let us consider a resonance of order $m:(m-j)$, and consider the corresponding resonant radius $r_{m,m-j}$, which is, according to Definitions~\ref{def:LinRes} and \ref{def:CorRes}, solution of the equation~\eqref{linres}.
Let $r_*=r_{m, m-j}$; then, the resonant Hamiltonian associated to the resonance $m:(m-j)$ is given by 
\begin{equation}\label{eq:Ham_res}
\mathcal H_{res}^{m,m-j}(J, I, \varphi, \theta)=Z(I,J)+\sum_{i=1}^N f_i(I, J) cs\left(i(j \varphi-m\theta)\right),  
\end{equation}
where $Z(I, J)$, called normal part, collects all the non--trigonometric terms of $\mathcal H(J, I, \varphi, \theta)$, $f_i$ are suitable polynomials in the actions $I, J$, $cs$ stands for cosine or sine, depending on the parity of $i$ and $j$, $N$ depends on the maximal order of the expansion in $\theta$ and all the terms are evaluated at $r_*$.
\end{proposition}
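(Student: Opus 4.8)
\emph{Plan.} I would start from the Hamiltonian in the fully reduced Fourier form \eqref{eq:HIJ}, evaluated at the resonant radius $r_*=r_{m,m-j}$, where by Definition~\ref{def:LinRes} the resonance condition $j\,|\kappa_*|=m\,(n_*-\Omega_P)$ holds. The integrable part $|\kappa_*|J+(n_*-\Omega_P)I$ fixes the unperturbed frequencies $\dot\varphi=|\kappa_*|$ and $\dot\theta=n_*-\Omega_P$. The strategy is the averaging method already invoked in Section~\ref{Sec:resepi}: eliminate every harmonic whose phase rotates with nonzero speed at $r_*$ (the \emph{non-resonant} terms), retaining only the angle-independent terms together with those harmonics that are stationary along the resonance (the \emph{resonant} terms).

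\emph{Selection of the resonant harmonics.} A generic term of \eqref{eq:HIJ} carries the phase $2i'\theta+j'\varphi$ and, to leading order, the phase velocity $2i'(n_*-\Omega_P)+j'|\kappa_*|$. Substituting the resonance condition, this vanishes if and only if $2i'j+m\,j'=0$, i.e. exactly when $(2i',j')$ is an integer multiple $i\,(-m,j)$ of the resonant vector; these are precisely the harmonics of the form $i(j\varphi-m\theta)$. Averaging over the complementary fast angle then removes all remaining harmonics — equivalently, a near-identity Lie transformation pushes their contribution to higher order. The angle-independent part (the $(2i',j')=(0,0)$ coefficient, which already absorbs the constant parts $\langle\sin^b\varphi\rangle$ of the axisymmetric block $\widetilde F_s$, exactly as in the non-degeneracy computation \eqref{h0Haxi}) is collected into the normal part $Z(I,J)$, while the surviving trigonometric terms assemble into $\sum_{i}f_i(I,J)\,cs\bigl(i(j\varphi-m\theta)\bigr)$, with $N$ dictated by the order at which \eqref{ns_potential} is truncated in $\theta$.

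\emph{Parity, polynomial structure, and the main difficulty.} To decide whether each surviving harmonic is a cosine or a sine I would go back to \eqref{F1_tilde_product} and expand $\sin^b\varphi$ by power reduction: an even $b$ produces cosines of even multiples of $\varphi$, an odd $b$ produces sines of odd multiples, and multiplying by $\cos(2a\theta)$ through the product-to-sum formulas preserves this alternative. Since the resonant harmonic $i(j\varphi-m\theta)$ carries the $\varphi$-multiple $ij$, it descends from an even $b$ (hence a cosine) when $ij$ is even and from an odd $b$ (hence a sine) when $ij$ is odd — precisely the stated dependence on the parities of $i$ and $j$. The coefficient $f_i$ then gathers the $\widetilde V_{b,a}$ (together with the $d_{i,j}$ of \eqref{dij} when the axisymmetric block also contributes) subject to $2a=|im|$ and $b\ge|ij|$, $b\equiv ij\ (\mathrm{mod}\ 2)$; factoring out the common $J^{|ij|/2}$ one checks that $f_i$ is polynomial in $I$ (of degree at most two, inherited from the $I^2$ term and from $F_0$ in \eqref{F0_expansion}) and in $J$, in agreement with integer powers of the eccentricity via Proposition~\ref{prop:Jecc}. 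I expect the genuine obstacle to be the justification of the averaging step rather than this bookkeeping: one must verify that, within the adopted truncation, the non-resonant phase velocities $2i'(n_*-\Omega_P)+j'|\kappa_*|$ stay bounded away from zero, so that discarding them introduces no small divisors and the retained terms are indeed the only resonant ones.
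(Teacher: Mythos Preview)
Your proposal is correct and follows essentially the same route as the paper: start from the Fourier expansion \eqref{eq:HIJ} at $r_*$, retain only the harmonics whose phase is an integer multiple of the resonant combination $j\varphi-m\theta$, and identify $Z$ and the $f_i$ with the corresponding $\alpha_{2i',j'}$, $\beta_{2i',j'}$. The one minor difference is that the paper settles the cosine/sine alternative simply by inspection of the explicit coefficients listed in Appendix~\ref{app:appA}, whereas you derive it intrinsically from the power-reduction of $\sin^b\varphi$; your argument is more self-contained, and your worry about small divisors in the averaging step goes beyond what the paper attempts, since there the resonant Hamiltonian is effectively \emph{defined} by this truncation.
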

\begin{proof}
    The expression of $\mathcal H_{res}^{m,m-j}(J, I, \varphi, \theta)$ comes from Eq~\eqref{eq:HIJ}. In the resonant Hamiltonian, only the terms corresponding to the resonant angles associated to the $m:(m-j)$ resonance, that is, multiples of $j \varphi-m\theta$, are taken into account. The polynomials $f_i(J, I)$ are the analogous of $\alpha_{2i,j}, \beta_{2i,j}$ in Eq.~\eqref{F_tilde_plus_F1_tilde}, taking into account the change of notation in the indices. The relation between the parity of the coefficient of $\varphi$ and the presence of either a sine or a cosine in $\mathcal H_{res}^{m,m-j}$ can be deduced from the expressions of  $\alpha_{2i,j}, \beta_{2i,j}$ in Appendix~\ref{app:appA}: whenever such coefficient is even, we have a cosine in the sum; on the other hand, when it is odd a sine appears in Eq.~\eqref{eq:Ham_res}. 
\end{proof}
As we will see in the specific cases of the $1:2$ and $1:3$ resonances, with a suitable change of coordinates it is possible to reduce $\mathcal H_{res}^{m,m-j}$ to a 1 DOF Hamiltonian, depending only on the resonant angle and the conjugated action.
	
\subsubsection{Corotation}\label{subsec:cor}

For the corotation case, we write the Hamiltonian ~\eqref{eq:Ham_res} as the sum of the normal part (depending only on the actions) computed in $r_*= r_{cor}$ and the part depending on the resonant angle $\theta$. Evaluating the coefficients at $r_{cor}$ obtained by solving Eq.~\eqref{res_eq}, we get the Hamiltonian 
\beqa{ham_cor_num_J0_AS}
\H_{res}^{cor}(J,I,\varphi,\theta) & = & f_1  (J) \, I + f_2(J)\, I^2  + f_3(J) \,\cos 2 \, \theta + f_4(J)\, \cos 4 \, \theta \nonumber \\
&& +  f_5(J) \, \cos 6 \, \theta+  f_6(J) \, \cos 8 \, \theta+  f_7(J) \, \cos 10 \, \theta\ ,
\eeqa
where $f_i(J)$, $i=1,\ldots,7$, are polynomials in the $J$ variable of order 8. Since the variable $\varphi$ is cyclic, then $J$ is constant, say $J_0$. Thus, we get a Hamiltonian of the type 
$$ 
\H_{res}^{cor}(J,I,\varphi,\theta) =\H_{res}^{cor}(I,\theta; J_0) 
$$
with $f_i(J) = f_i(J_0)$ constant terms for $i=1,\ldots,7$.

After fixing $J_0$, we can integrate the equations of motion associated to the 1 DOF Hamiltonian~\equ{ham_cor_num_J0_AS}. To fix $J_0$, recall that from Eqs.~\equ{epivar} at corotation one has  
\beq{J_0}
J_0 = \frac{p_{\rho}^2}{2 |\kappa(r_{cor})|}+\frac{|\kappa(r_{cor})|\rho^2}{2}\ ,
\eeq
which depends on $\rho$ and $p_{\rho}$. From standard Kepler's relations (see for example \cite{Celletti2010}), the value of $J_0$ is determined by taking $\rho =e\, r_{cor}$ and $p_\rho=r_{cor}\,e\,n_*$ (where we used $a=r_{cor}$, $n=n_*$). 

Fixing, for example, $e=10^{-3}$, from Eq.~\eqref{J_0} we obtain $J_0$ which, in the two sample cases AS and HA, takes the values  
$$
J_{0,AS} = 2.7590\cdot 10^{-4}\ , \qquad J_{0,HA} = 3.1135\cdot 10^{-4}\, .
$$
At $J_0$ the Hamiltonian~\equ{ham_cor_num_J0_AS} has the form  
\beq{ham_cor_Itheta}
\H_{res}^{cor}(I,\theta)=\alpha_1 \, I + \alpha_2 \, I^2 + \alpha_3 \cos 2\, \theta +\alpha_4  \cos 4\, \theta + \alpha_5 \cos 6\, \theta + \alpha_6 \cos 8\, \theta + \alpha_7 \cos 10\, \theta 
\eeq
with the coefficients $\alpha_j$, $j=1,...,7$, as in Table~\ref{tab:coeffs_cor} for the test cases AS and HA when $e=10^{-3}$.
\begin{table}[h!]
	\begin{tabular}{|c|c|c|}
		\hline
Coefficients & $1:1$ AS & $1:1$ HA  \\
		\hline
		$\alpha_1$ & $6.6\cdot 10^{-10}$ & $8.6 \cdot 10^{-10}$ \\
		\hline
		$\alpha_2$ & $3.9\cdot10^{-7}$ & $3.6\cdot10^{-7}$  \\
		\hline
		$\alpha_3$ & $-5.6\cdot10^{-4}$ & $-8.4\cdot10^{-3}$ \\
		\hline
		$\alpha_4$ & $-5.5\cdot10^{-6}$ & $-1.1\cdot10^{-3}$ \\
		\hline
		$\alpha_5$ & $-8.0\cdot10^{-8}$ & $-2.3\cdot10^{-4}$ \\
		\hline
		$\alpha_6$ & $-1.3\cdot10^{-9}$ &  $-5.1\cdot10^{-5}$\\
		\hline
		$\alpha_7$ & $-2.6\cdot10^{-11}$ & $-1.0\cdot10^{-5}$ \\
		\hline
	\end{tabular}
	
	\vskip.1in 
	
\caption{Coefficients appearing in the corotation resonant Hamiltonian~\eqref{ham_cor_Itheta} for the sample cases AS and HA at $e=10^{-3}$.}\label{tab:coeffs_cor} 	
\end{table}

If we retain only the term in $\cos 2\theta$, we obtain the Hamiltonian 
\beqno
\H(I,\theta) = \alpha_1 I + \alpha_2 I^2 + \alpha_3 \, \cos 2\,\theta \, ;
\eeqno
it is straightforward to see that the semi--amplitude of the resonant island can be computed as (see  \cite{lichtenberg2013regular}) 
\beq{amplitude}
\Delta I = \sqrt{\frac{-2\alpha_3}{\alpha_2}}\ . 
\eeq
Using the values in Table~\ref{tab:coeffs_cor} for the two test cases, we obtain (the units of measure are $\text{km}^2/\text{s}$): 
\beq{amplI}
\Delta I_{AS} = 53.5516\ ,\qquad 
\Delta I_{HA} = 216.419\ .
\eeq
In Figure~\ref{fig:cor_amplitude}, we plot the variation of the semi--amplitude $\Delta I$ as the eccentricity varies according to Eq.~\eqref{amplitude}. 

\begin{figure}[h!]
	\centering
	\includegraphics[scale=0.38]{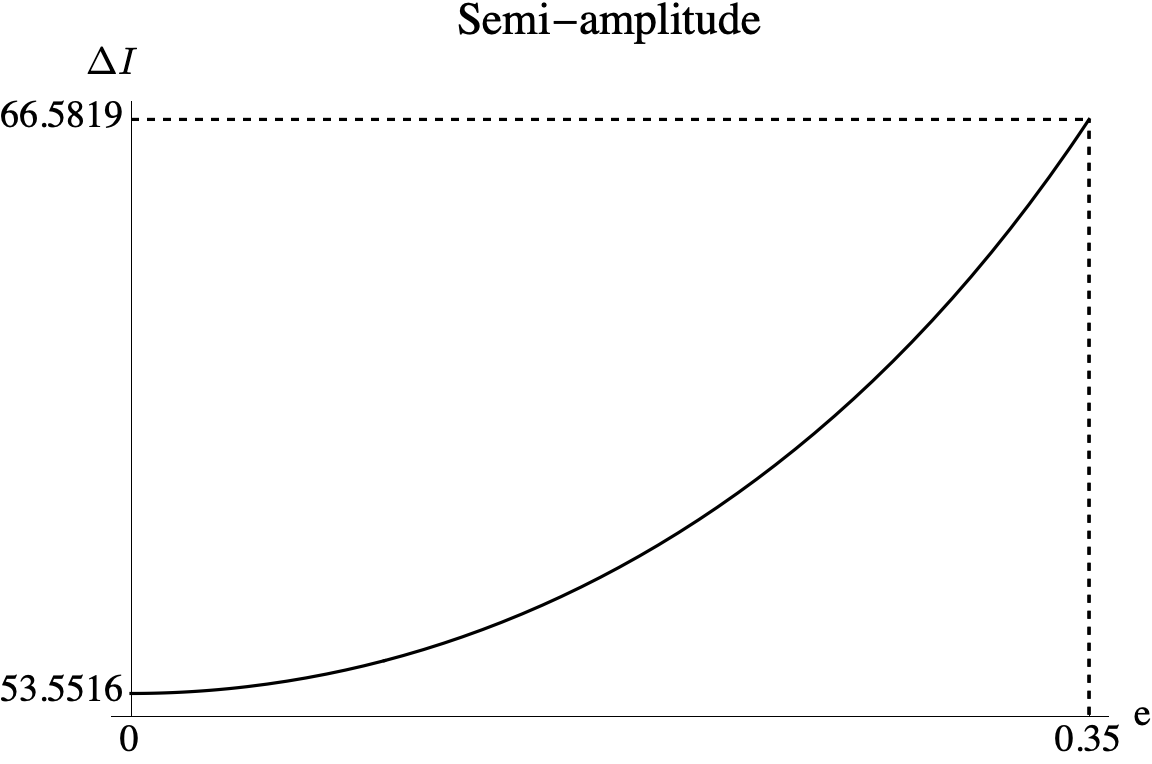}
	\includegraphics[scale=0.38]{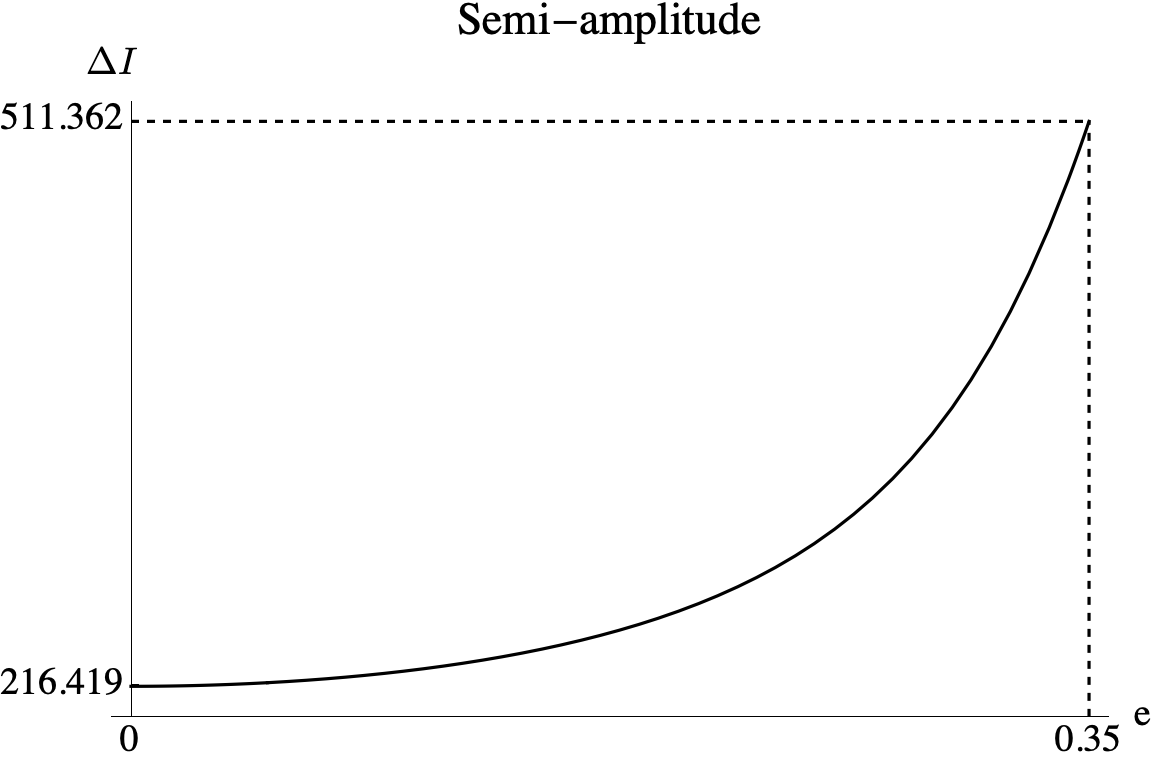}
	\caption{Corotation resonance. Left: almost spherical. Right: highly aspherical. Variation of the semi--amplitude $\Delta I$, according to Eq.~\eqref{amplitude}, as the eccentricity varies up to 0.5.} 
	\label{fig:cor_amplitude}
\end{figure}

A more accurate computation, based on a numerical analysis, is obtained integrating the equations of motions associated to the Hamiltonian~\eqref{ham_cor_Itheta}: 
\beqno
\left\{
	\begin{aligned}
\dot \theta & =  \frac{\partial \H_{res}^{cor}(I,\theta)}{\partial I} = \alpha_1 + 2 \, \alpha_2 \, I \\
\dot I & =  - \frac{\partial \H_{res}^{cor}(I,\theta)}{\partial \theta} =  2\,\alpha_3 \sin 2\, \theta + 4 \alpha_4  \, \sin 4\, \theta + 6\, \alpha_5 \sin 6\, \theta  \\
& \hspace{3.4cm} + \,8\, \alpha_6 \sin 8\, \theta+ 10\, \alpha_7 \sin 10\, \theta\ ;
	\end{aligned}	
	\right.
\eeqno
from the integration, we compute the distance between the separatrices and the elliptic equilibria, obtaining the values
\beqno
\Delta I_{AS} = 53.5554\ ,\qquad 
\Delta I_{HA} = 219.515\ ,
\eeqno
which are in good agreement with the values in~\equ{amplI}. In Figure~\ref{fig:cor_ph_por}, 
we show the phase portraits of the Hamiltonian~\eqref{ham_cor_Itheta}. The equilibria correspond to  $\theta=0$, $\pi/2$, $\pi$,  $3\, \pi/2$; the linear stability of the equilibria in $\theta=0$ and $\theta=\pi$ is of centre type, while the equilibria in $\theta=\pi/2, 3\,\pi/2$ are saddles. In Figure~\ref{fig:cor_ph_por}, we provide the phase portraits for a small eccentricity, say $e=10^{-3}$ (upper panels) and a relatively high eccentricity, say $e=0.35$ (bottom panels). We remark that the amplitudes of libration are greater in the HA case and as the eccentricity increases. For high values of the eccentricity, in the HA case (bottom right panel), we note the occurrence of a saddle--node bifurcation; in the same way, in the bifurcation diagrams given in Figure~\ref{fig:cor_bif}, at around $e=0.328$, we see the appearance of this kind of bifurcation in the HA case (right panel).
 
\begin{figure}[h!]
	\centering
	\includegraphics[scale=0.26]{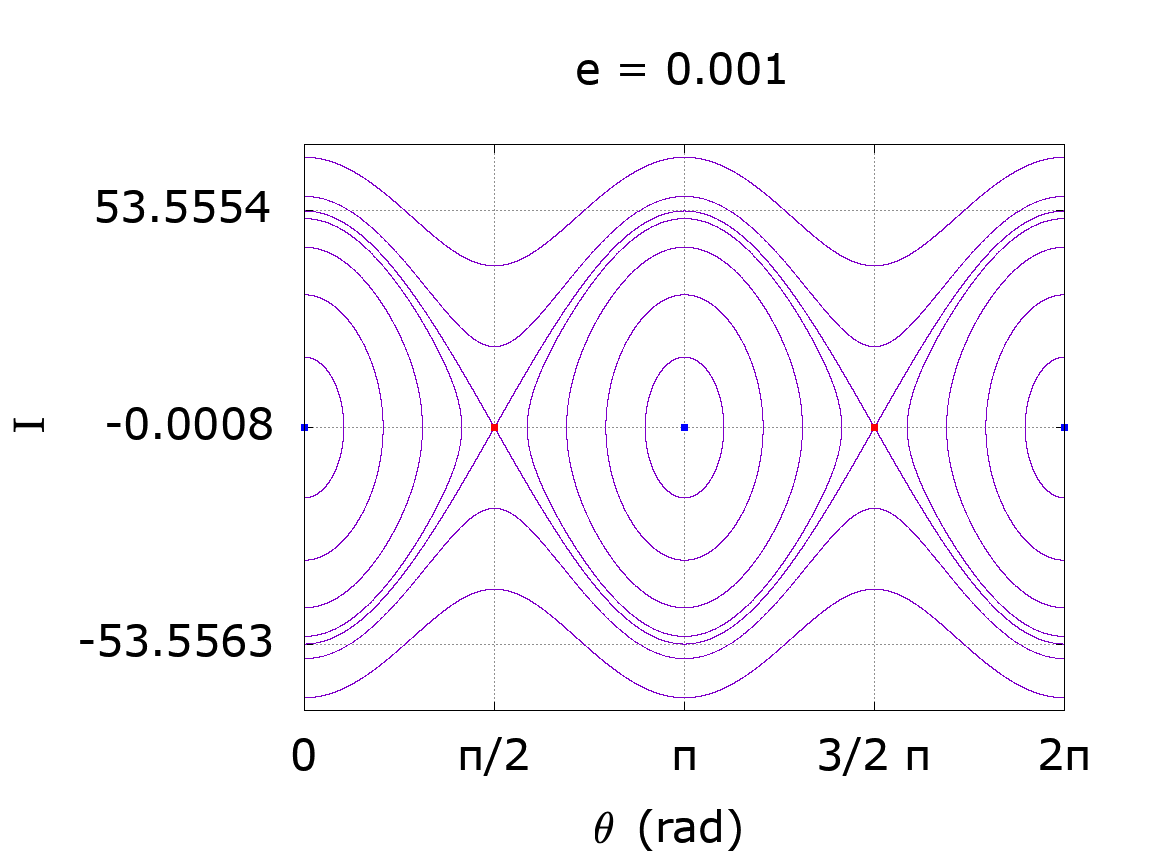}\hspace{-0.3cm}
	\includegraphics[scale=0.26]{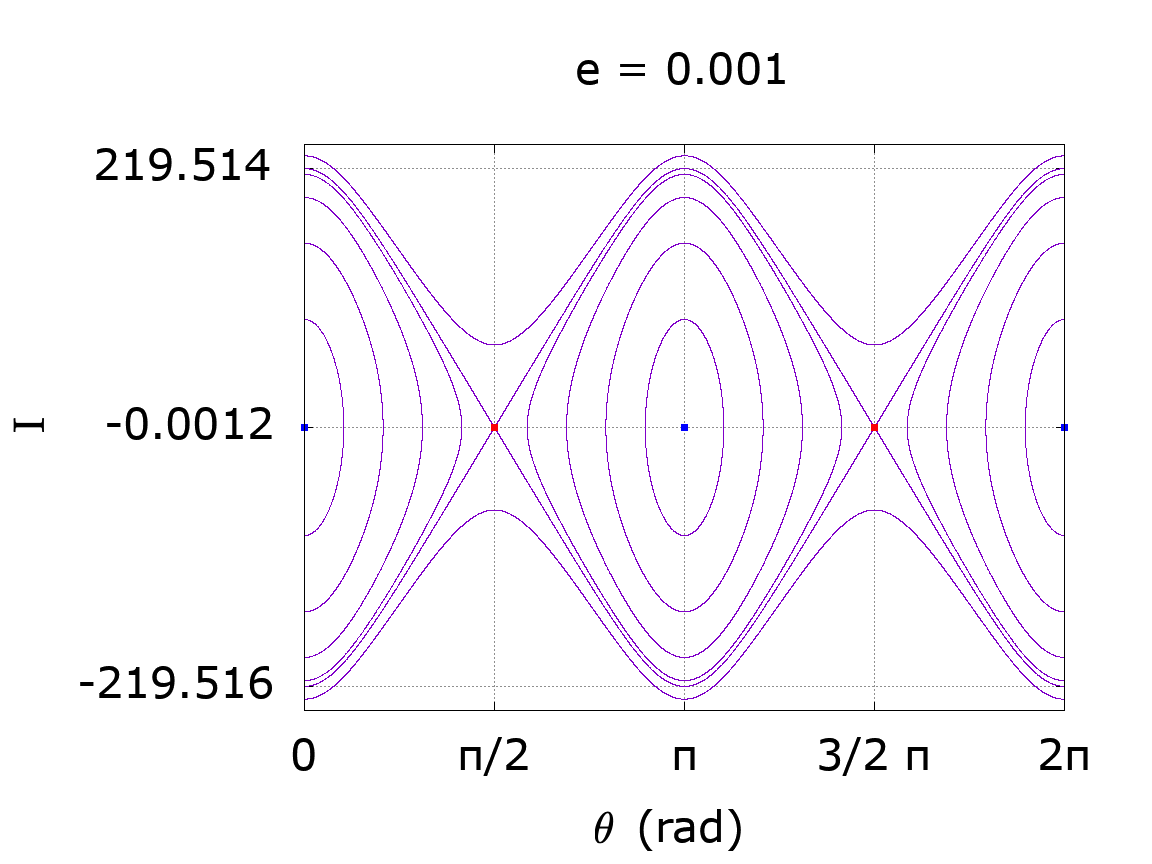}
	\includegraphics[scale=0.26]{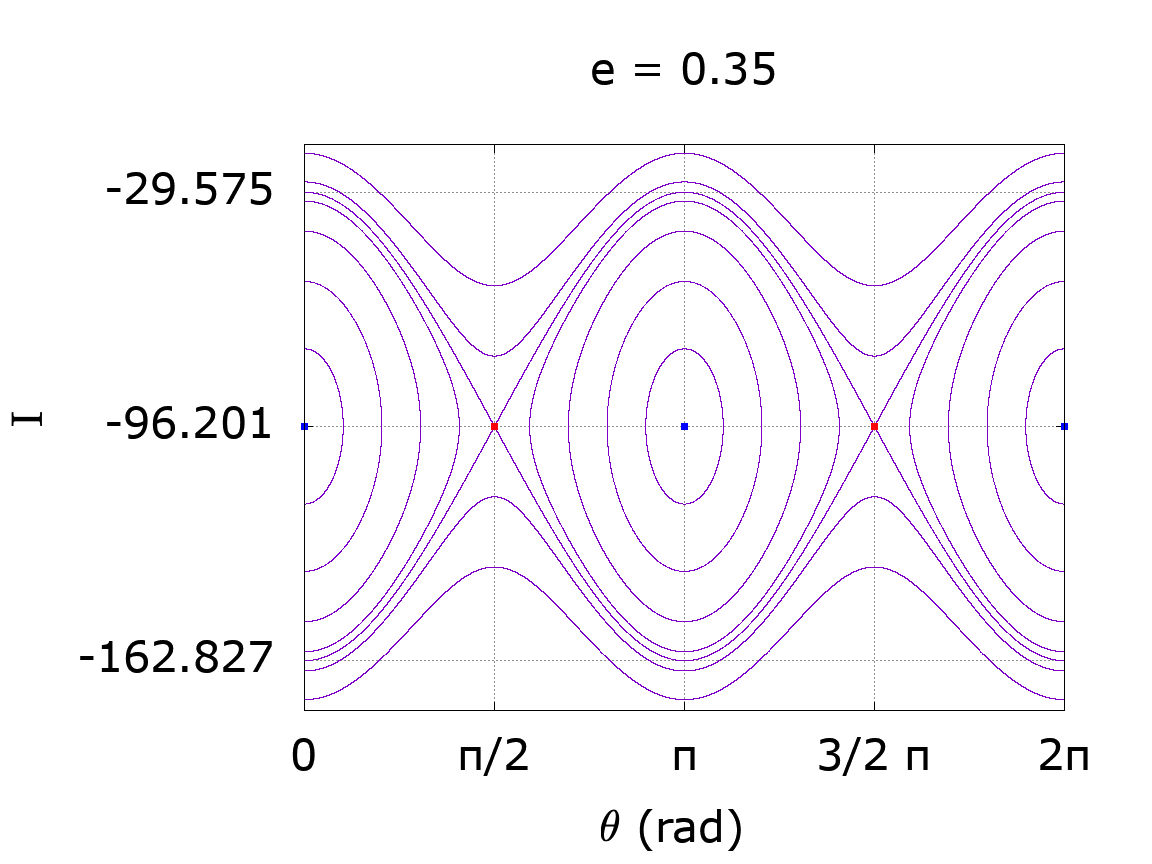}\hspace{-0.3cm}
	\includegraphics[scale=0.26]{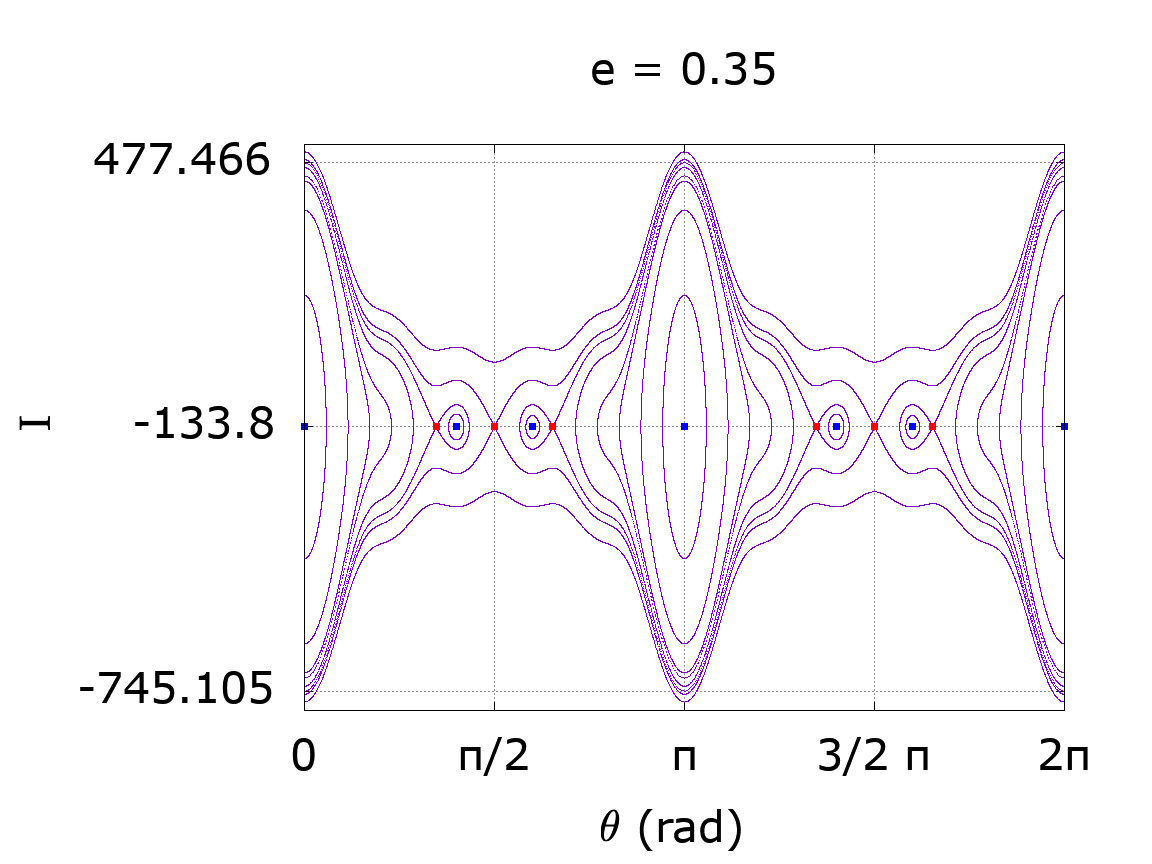}
	\caption{Corotation resonance. Left: almost spherical. Right: highly aspherical. Phase portraits at eccentricity $e=10^{-3}$ (top) and $e=0.35$ (bottom). Red points stand for saddle equilibria, blue points stand for centre equilibria.}
	\label{fig:cor_ph_por}
\end{figure}

\begin{figure}[h!]
	\centering
	\includegraphics[scale=0.37]{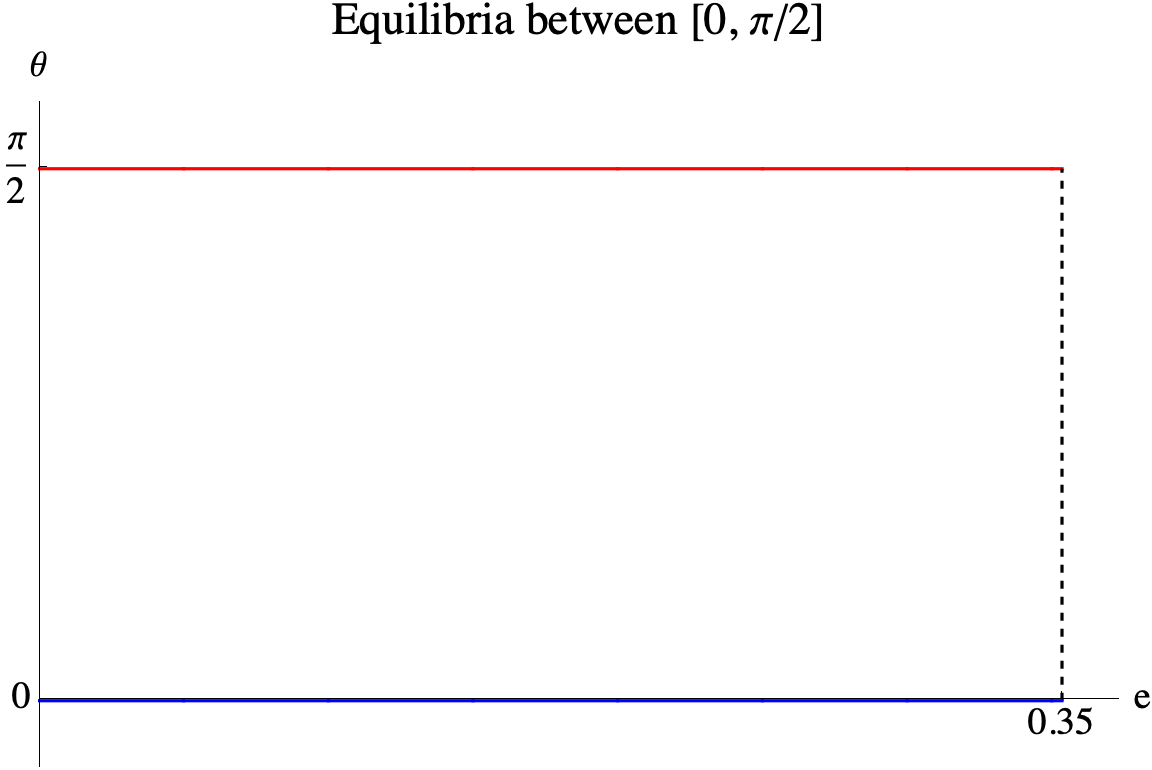}
	\includegraphics[scale=0.4]{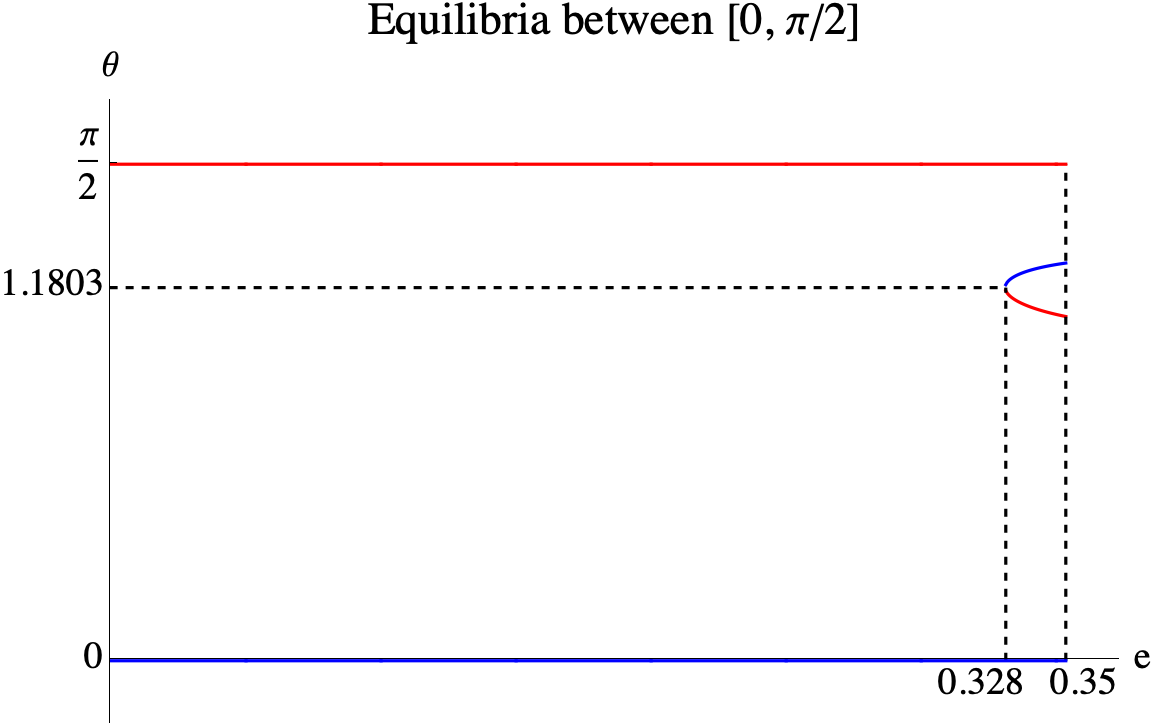}
	\caption{Corotation resonance. Left: almost spherical. Right: highly aspherical. Bifurcation diagrams in the $(e,\theta)$--plane as the eccentricity varies and for $\theta \in[0,\pi/2]$. Red stands for saddle equilibria while blue stands for centre ones.} 
	\label{fig:cor_bif}
\end{figure}

\subsubsection{Resonance $1:2$}

We consider the Hamiltonian~\eqref{eq:Ham_res} taking into account the normal part computed in $r_*= r_{12}$ plus the part depending on the angles associated to the $1:2$ resonance, which corresponds to take in Eq.~\eqref{eq:HIJ} only the angle $\theta+\varphi$ and its multiples, according to Section~\ref{Sec:resepi}; we notice that the Hamiltonian~\eqref{F_tilde_plus_F1_tilde} contains the resonant combination $2\theta+2\varphi$ (instead of $\theta+\varphi$) and its multiples. Hence, we obtain the Hamiltonian 
\beqa{res_ham_12_AS}
\H_{res}^{12}(J,I,\varphi,\theta) 
& = &   |\kappa_*| \, J + (n_*-\Omega_P) \, I + \frac{I^2}{2 \, r_*} +f_1(J) \, I + f_2(J) \,  I^2 \nonumber \\
&&  + \sum_{i=1}^5 \, f_{i+2}(J)\, \cos(2\,i\, \theta + 2\,i \, \varphi)  \, ,
\eeqa
where $\kappa_* = \kappa(r_{12})$, $n_* = n(r_{12})$ with $r_{12}$ computed through Eq.~\eqref{linres}. The terms $f_j(J)$, $j=1,\ldots,7$, are polynomials of order 8 in the variable $J$. To isolate the resonant angle, let us introduce the canonical transformation 
\beq{trans_12_bis}
\left\{
	\begin{aligned}
\psi & =  2 \, \varphi + 2 \, \theta    \\
\mu & =  2 \, \varphi +  \theta   \\
G & = -\frac{1}{2} \, J +  I \\
L &= J - \, I 
\end{aligned}	
	\right. \qquad .
\eeq
In these new variables, $\mu$ is cyclic, thus the action $L$ is a constant, say $L=L_0$, and the resonant Hamiltonian becomes
\beq{res_ham_12_new}
\H_{res}^{12}(G,\psi; L_0)  =  \sum_{i=1}^{10}\gamma_i(L_0) \, G^i + \sum_{i=1}^5 \gamma_{i+10}(G;L_0)\, \cos(i\, \psi)\, ,
\eeq
where the constants $\gamma_1,...,\gamma_{10}$ and the polynomials $\gamma_{11},...,\gamma_{15}$ can be computed through Eqs.~\equ{res_ham_12_AS}  and \equ{trans_12_bis}. 
Among the possible choices of the initial condition, we adopt the following procedure to define an integral level $L=L_0$: let us fix the constant $L_0 = J_0- I_0$ from the initial values $J_0$, $I_0$. As in the corotation case, we can compute $J_0$ by setting suitable values of $\rho$ and $p_{\rho}$; fixing, for example, $e= 10^{-3}$, we get $\rho = 10^{-3}\ r_{12}$ and hence, from Eq.~\eqref{J_0} (substituting the subscript {\it cor} by 12), we obtain 
$$
J_{0,AS} = 3.4527\cdot 10^{-4}\ , \qquad J_{0,HA} = 3.5465\cdot 10^{-4}\, .
$$
Moreover, by definition, we have $I = p_{\theta}-p_*$, where $p_{\theta}= n(r) \,r^2$ and $p_* = n_* \, r_*^2$ with $r_*=r_{12}$, $n_* = n(r_*)$. After  computing $p_{\theta}$ in $r = r_{12}+\rho$, we get;
$$
I_{0,AS} = 1.7064\cdot 10^{-1} \, , \qquad I_{0,HA} = 1.5451\cdot 10^{-1} \, ,
$$
from which we obtain 
$$
L_{0,AS} = -1.7029\cdot 10^{-1}\, , \qquad L_{0,HA} = -1.5416\cdot 10^{-1}\ , 
$$
that, substituted in Eq.~\eqref{res_ham_12_new} and omitting constant terms, give the 1 DOF Hamiltonian
\beqa{res_ham_12_Gpsi}
\H_{res}^{12}(G,\psi)  =  \sum_{i=1}^{10}\alpha_i \, G^i + \sum_{i=1}^5 \delta_{i}(G)\, \cos(i\, \psi)\, ,
\eeqa
with constants $\alpha_1,...,\alpha_{10}$ given in Table~\ref{tab:coeffs_12} and $\delta_{1},...,\delta_{5}$ polynomials of order 8 in $G$ given in Appendix~\ref{app:appB}. We stress that the coefficients of the trigonometric terms in Eq.~\equ{res_ham_12_Gpsi} depend on the action $G$. 

\begin{table}[h!]
	\begin{tabular}{|c|c|c|}
		\hline
Coefficients &  $1:2$ AS & $1:2$ HA  \\
		\hline
		$\alpha_1$ & $-2.0\cdot 10^{-6}$ & $-1.8\cdot 10^{-6}$  \\
		\hline
		$\alpha_2$  &$7.2\cdot 10^{-6}$ & $7.3\cdot10^{-6}$  \\
		\hline
		$\alpha_3$ &  $1.1 \cdot 10^{-7}$ & $1.1\cdot10^{-7}$ \\
		\hline
		$\alpha_4$  & $1.5\cdot 10^{-9}$ & $1.4\cdot10^{-9}$  \\
		\hline
		$\alpha_5$  & $2.0\cdot 10^{-11}$ & $1.5\cdot10^{-11}$\\
		\hline
		$\alpha_6$  & $2.6\cdot 10^{-13}$ & $1.1\cdot10^{-13}$\\
		\hline
		$\alpha_7$  & $3.3\cdot 10^{-15}$ & $-9.6\cdot10^{-16}$\\
		\hline
		$\alpha_8$  & $4.2\cdot 10^{-17}$ & $-7.2\cdot10^{-17}$\\
		\hline
		$\alpha_9$  & $3.0\cdot10^{-19}$ & $4.1\cdot10^{-19}$\\
		\hline
		$\alpha_{10}$  & $7.2\cdot10^{-22}$ & $9.7\cdot10^{-22}$\\
		\hline
	\end{tabular}
	
	\vskip.1in 
	
\caption{Coefficients appearing in the $1:2$ resonant Hamiltonian~\eqref{res_ham_12_Gpsi} for the sample cases AS and HA at $e = 10^{-3}$.}\label{tab:coeffs_12} 	
\end{table}

The equilibrium points associated to Eq.~\equ{res_ham_12_Gpsi}, depending on the eccentricity, are given by the solution of the system:
\beqno
	\left\{
	\begin{aligned}
\dot \psi & =  \frac{\partial \H_{res}^{12}(G,\psi)}{\partial G} = 0 \\
\dot G & =  - \frac{\partial \H_{res}^{12}(G,\psi)}{\partial \psi} = 0 
	\end{aligned}	
	\right. \qquad .
\eeqno
For $e=10^{-3}$, the equilibria of AS are:
\beqano
&& (\psi_1 = 0, \,G_{1} =  0.0549)\,,\quad  (\psi_2 = \pi , \, G_{2} = 0.2268)\, ,\\ 
&& (\psi_3 = 1.9231 , \,G_{3} = 0.1703) \,,\quad (\psi_4 = 4.3600 , \, G_{4} = 0.1703) \ 
\eeqano
and for HA they are given by  
\beqano
&& (\psi_1 = 0, \,G_{1} = -1.3113)\,,\quad  (\psi_2 = \pi , \, G_{2} = 1.8061)\, ,\\ 
&& (\psi_3 = 1.5891 , \,G_{3} = 0.1541) \,,\quad (\psi_4 = 4.6940 , \, G_{4} = 0.1541) \ .
\eeqano
The equilibria $(\psi_1, G_1)$, $(\psi_2, G_2)$ are elliptic points, while $(\psi_3, G_3)$ and $(\psi_4, G_4)$ are hyperbolic points. Computing the amplitude of the resonant islands (obtained by evaluating the distance between the separatrices), one gets
\beqano
\Delta G_{AS}^{(1)}&=& 2.3101 \cdot 10^{-1}  \ , \qquad 
\Delta G_{AS}^{(2)}= 1.1308\cdot 10^{-1} \, ,\nonumber\\
\Delta G_{HA}^{(1)}&=&2.9657 \ , \qquad \qquad \ \ 
\Delta G_{HA}^{(2)}= 3.2692 \, .
\eeqano
These values fully agree with the results of the numerical integration of Hamilton's equations associated to Eq.~\equ{res_ham_12_Gpsi} as shown in the phase portraits (top panels) of Figure~\ref{fig:12_ph_por}, on the left for the AS case and on the right for the HA case. We remark that in this case the computed values are not the semi--amplitudes as in the corotation case, but the amplitudes of the whole libration islands which are asymmetric with respect to the elliptic equilibria.

\begin{figure}
	\centering
	\includegraphics[scale=0.25]{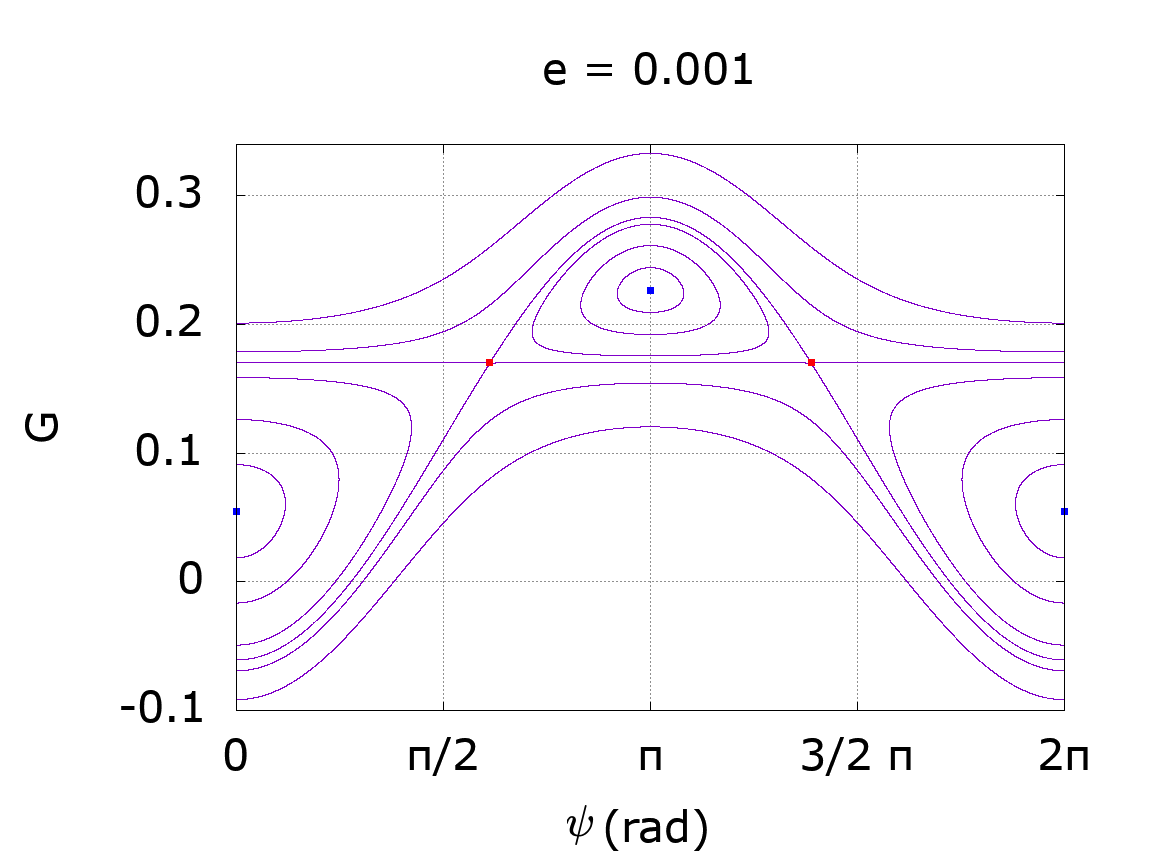}
	\includegraphics[scale=0.25]{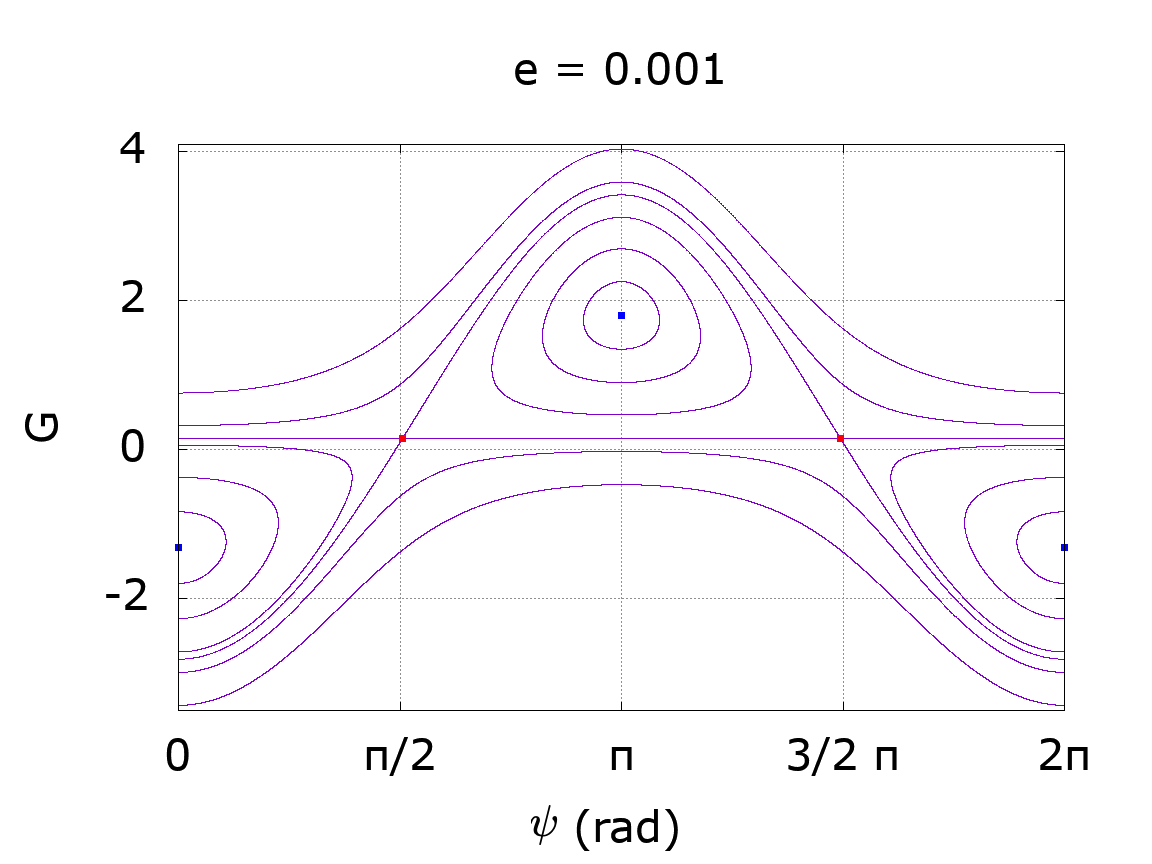}\\
	\includegraphics[scale=0.25]{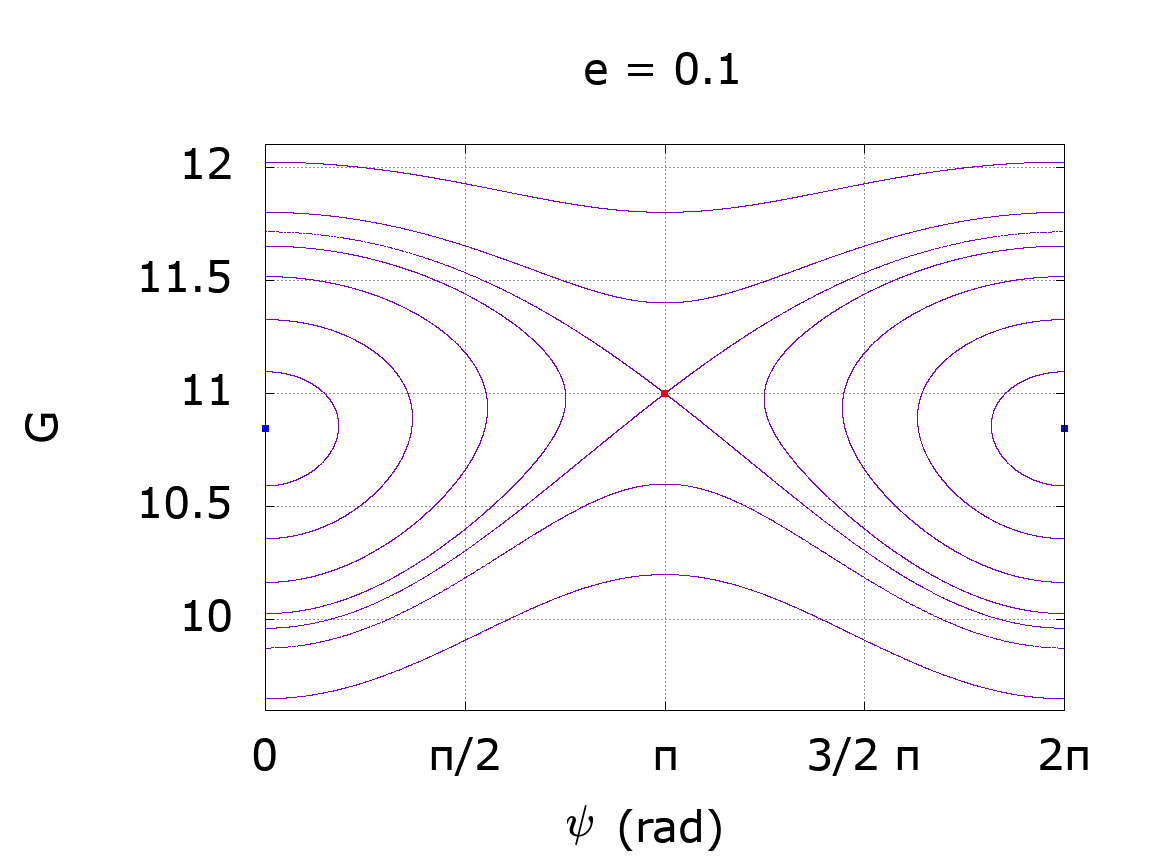}
	\includegraphics[scale=0.25]{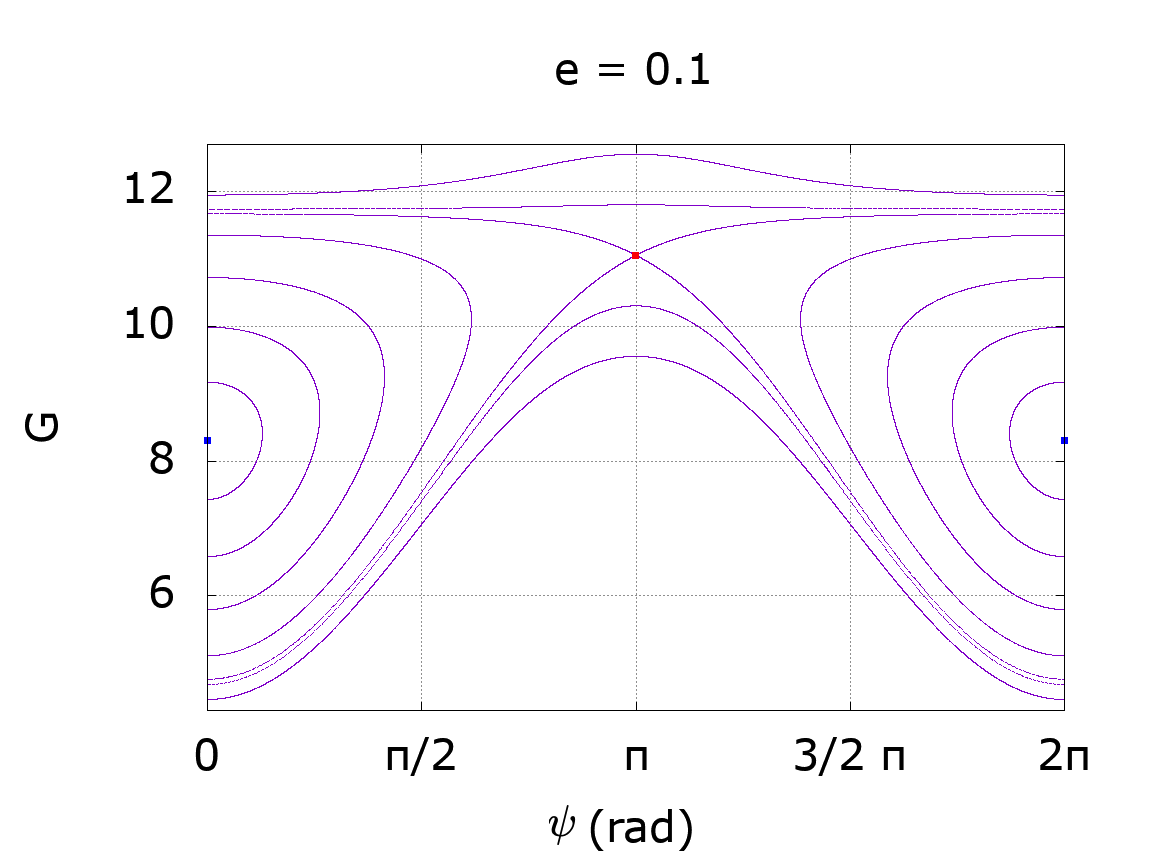}\\
	\includegraphics[scale=0.25]{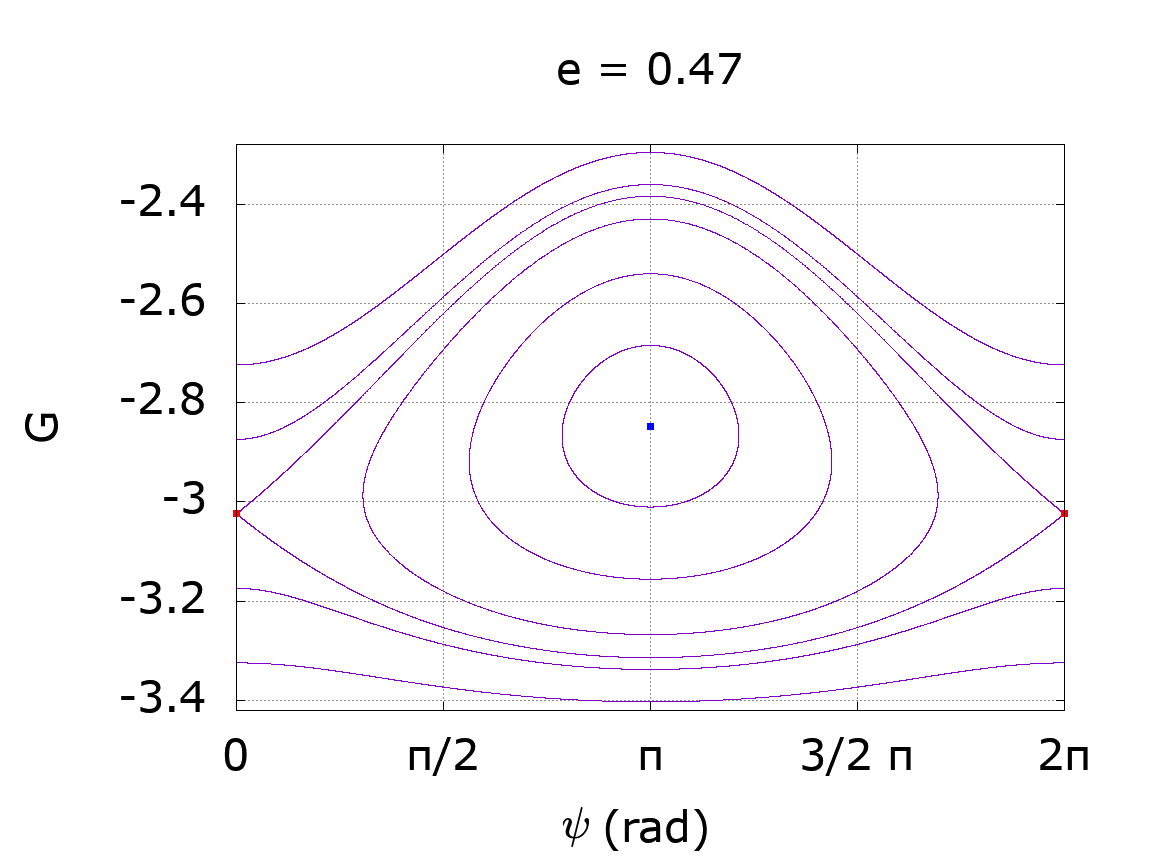}
	\includegraphics[scale=0.25]{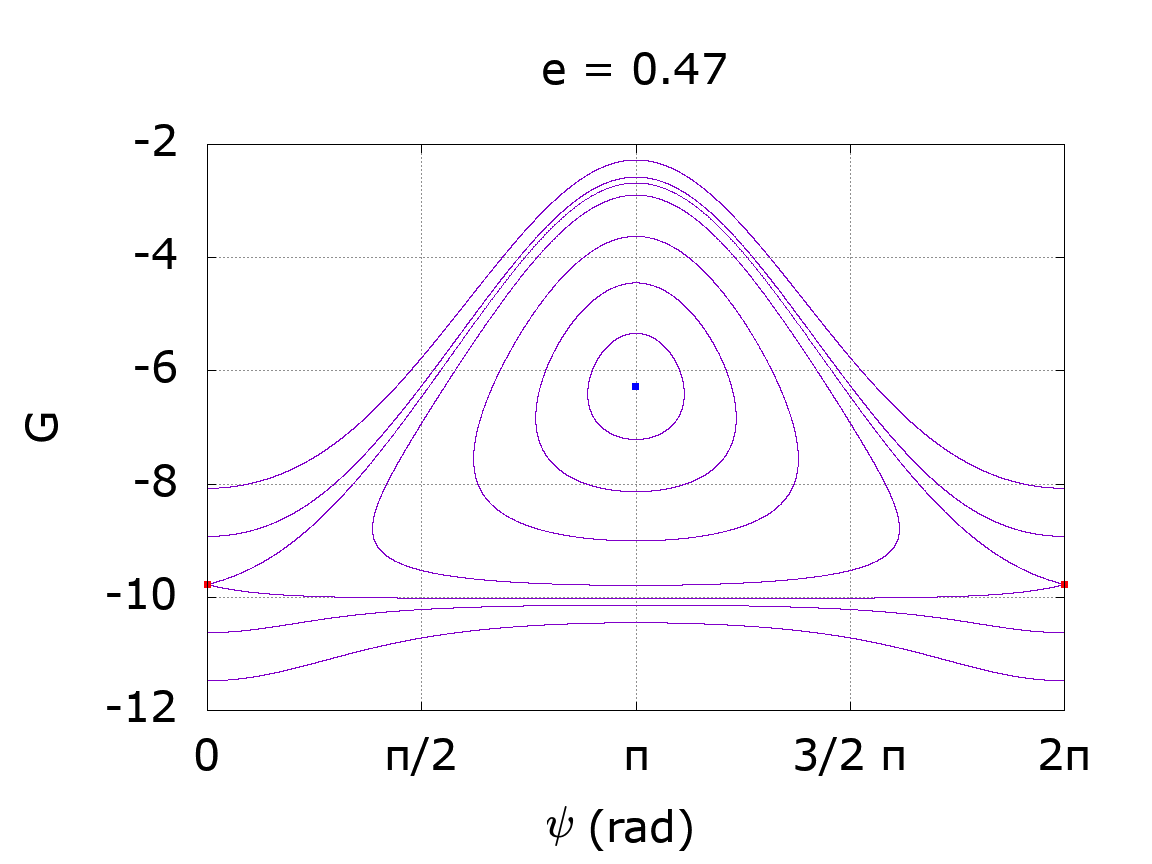}\\
	\caption{Phase portraits of the $1:2$ resonance for the almost spherical case on the left panels and for the highly aspherical case on the right panels, for three different values of the eccentricity. Red points stand for saddle equilibria, blue stand for centre ones.}
	\label{fig:12_ph_por}
\end{figure}

As already evident in Figure~\ref{fig:12_ph_por} where the phase portraits are represented for 3 different values of the eccentricity, the $1:2$ resonance presents a great variety of different situations as the eccentricity increases. Concerning the bifurcations, we observe that Hamilton's equations associated to Eq.~\eqref{res_ham_12_Gpsi} admit trivial equilibria for $\psi=0$ and $\psi=\pi$; Figure~\ref{fig:12_G1} summarises their location and stability in the AS and HA cases. In both cases, the equilibria change their stability as $e$ varies from 0 to 0.5. The equilibrium in $\psi =0$ is stable at $e=0$, and becomes unstable at $e = 0.4526$ for the AS case and $e = 0.4625$ in the HA case. For $\psi =\pi$, a double change of stability is observed. The equilibrium is stable at $e=0$, it becomes unstable at $e=0.0029$ ($e=0.0638$) for the AS case (HA case) and then, it becomes again stable at $e= 0.4467$ ($e=0.3491$) for the AS case (HA case).

\begin{figure}[h]
		\centering
		\includegraphics[scale=0.4]{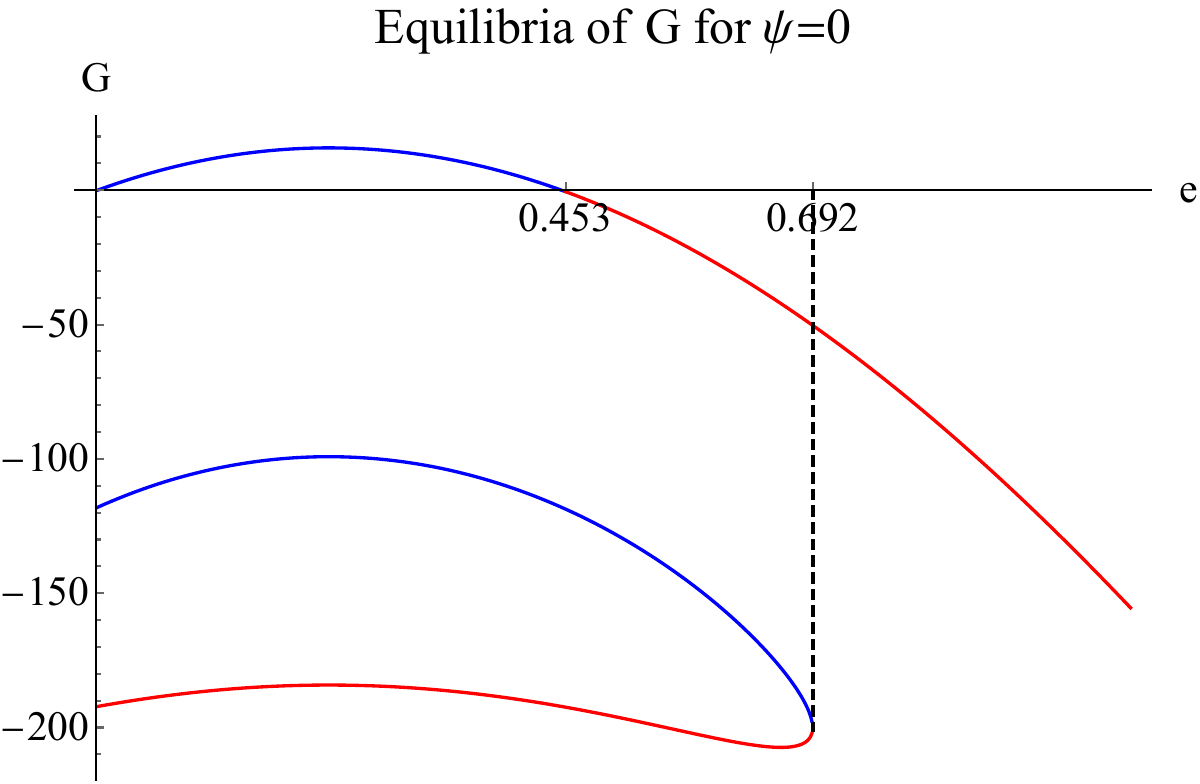}\hglue0.2cm
		\includegraphics[scale=0.4]{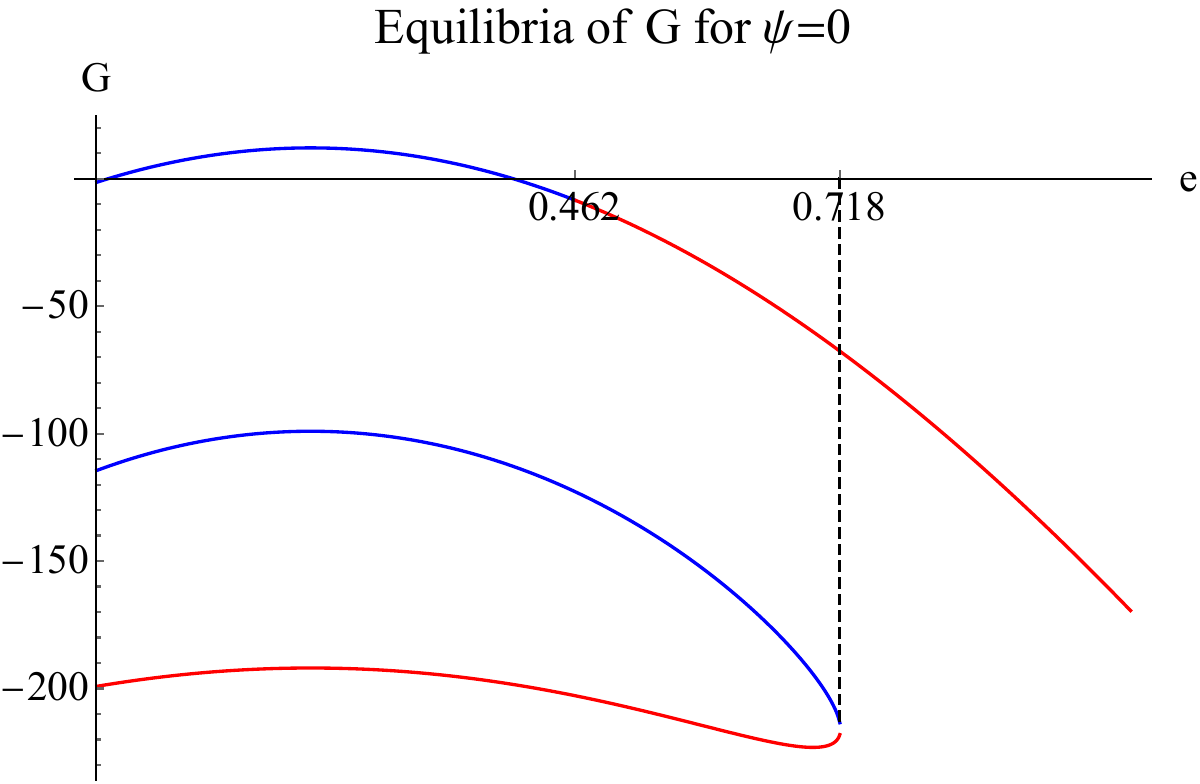}\\
		\includegraphics[scale=0.4]{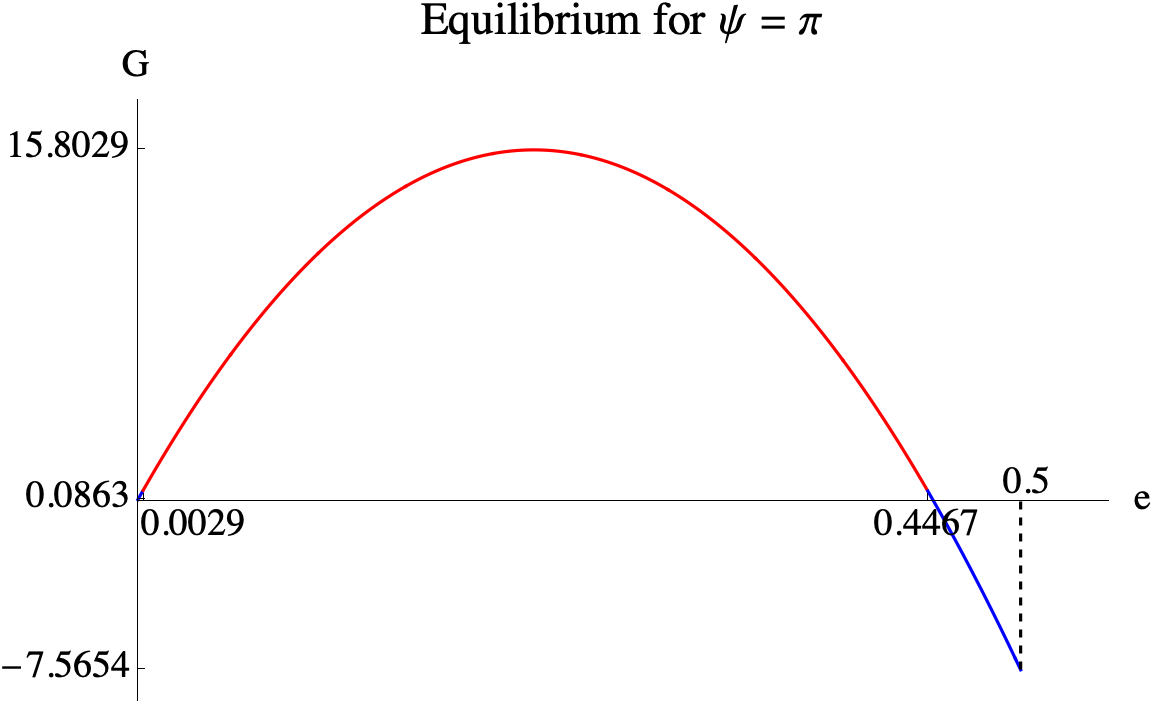}\hglue0.2cm
		\includegraphics[scale=0.4]{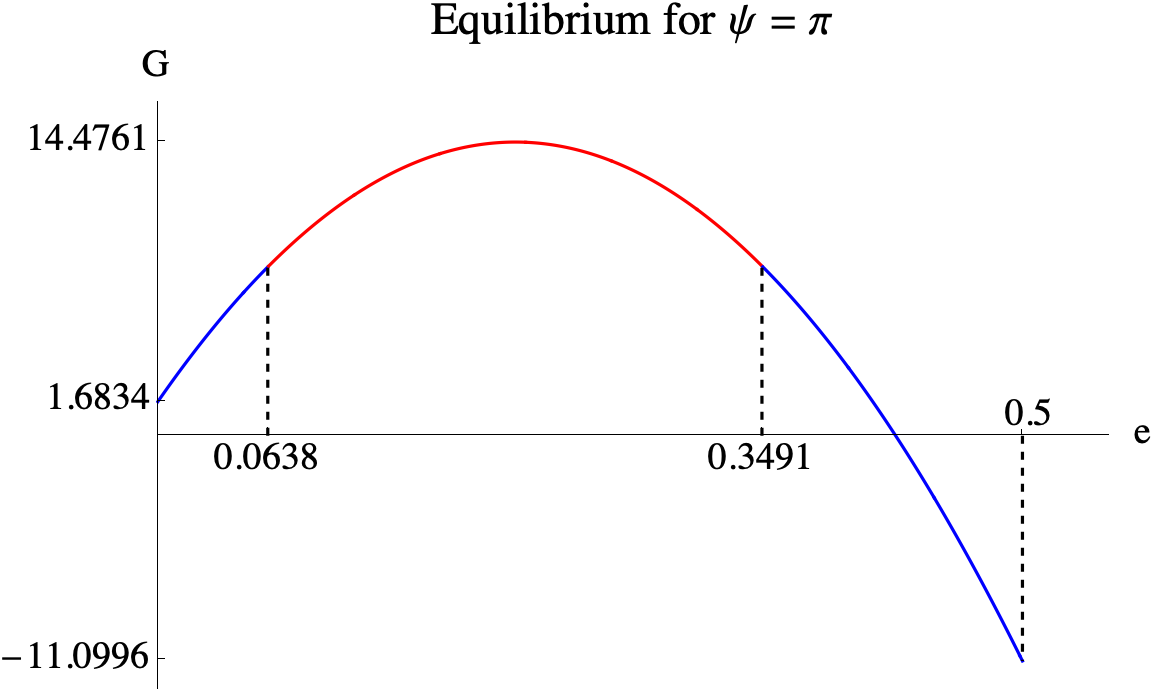}\\
		\caption{Resonance $1:2$. Left: almost spherical. Right: highly aspherical. Location and stability in the $(e, G)$--plane of the equilibria for $\psi =0$ (top) and $\psi=\pi$ (bottom) as the eccentricity varies. Blue stands for centre, red stands for saddle.  }
		\label{fig:12_G1}
	\end{figure}

Figure~\ref{fig:12_bif_G0} provides a finer analysis, showing the occurrence of pitchfork bifurcations for the $1:2$ resonance. For $\psi =0$ (mod. $2\pi$) and for $\psi =\pi$, the plots provide the changes in stability described in Figure~\ref{fig:12_G1}. In particular, for $\psi = \pi$ we have two pitchfork bifurcations at $e=0.0029$ and $e=0.4467$ for the AS case and at $e=0.0638$ and $e=0.3491$ for the HA case. For very small values of $e$, there are two secondary equilibria absorbed by $\psi=\pi$ as $e$ increases. For higher values of $e$, the secondary equilibria appear again, branching from $\psi=\pi$, and being absorbed by $\psi=0$ afterwards. \\

\begin{figure}[h]
	\centering
	\includegraphics[scale=0.4]{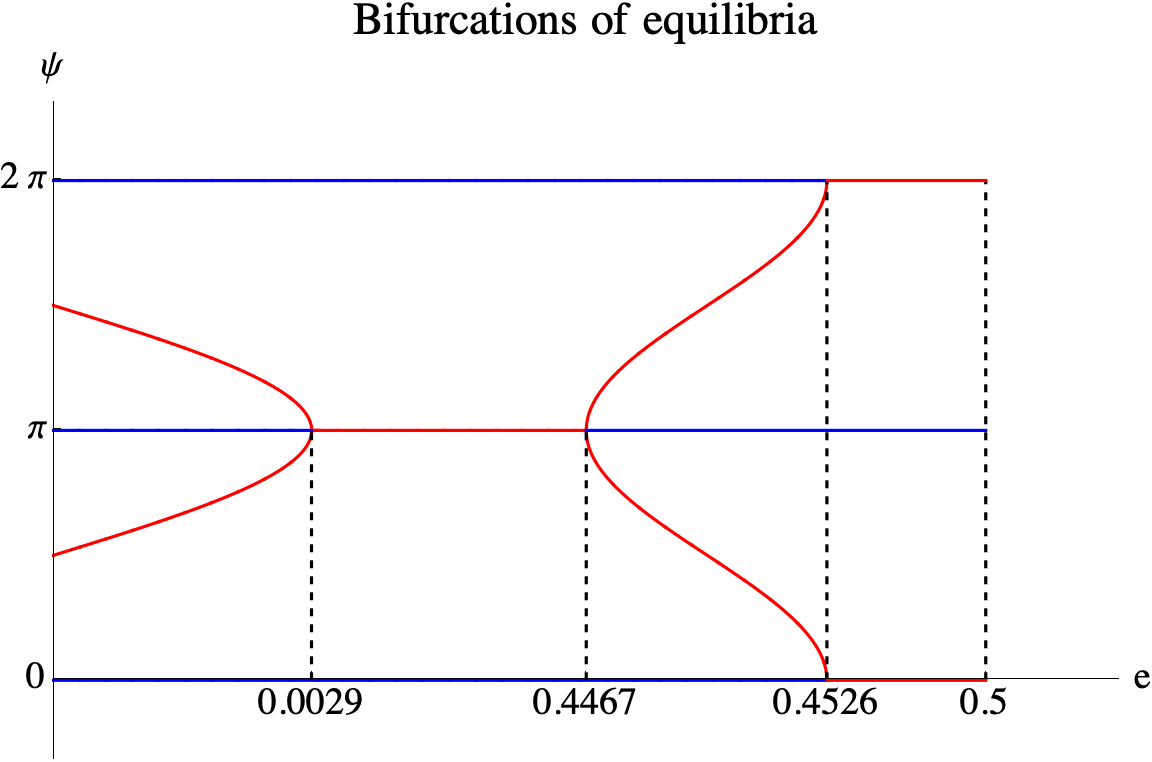}\hglue0.2cm
	\includegraphics[scale=0.4]{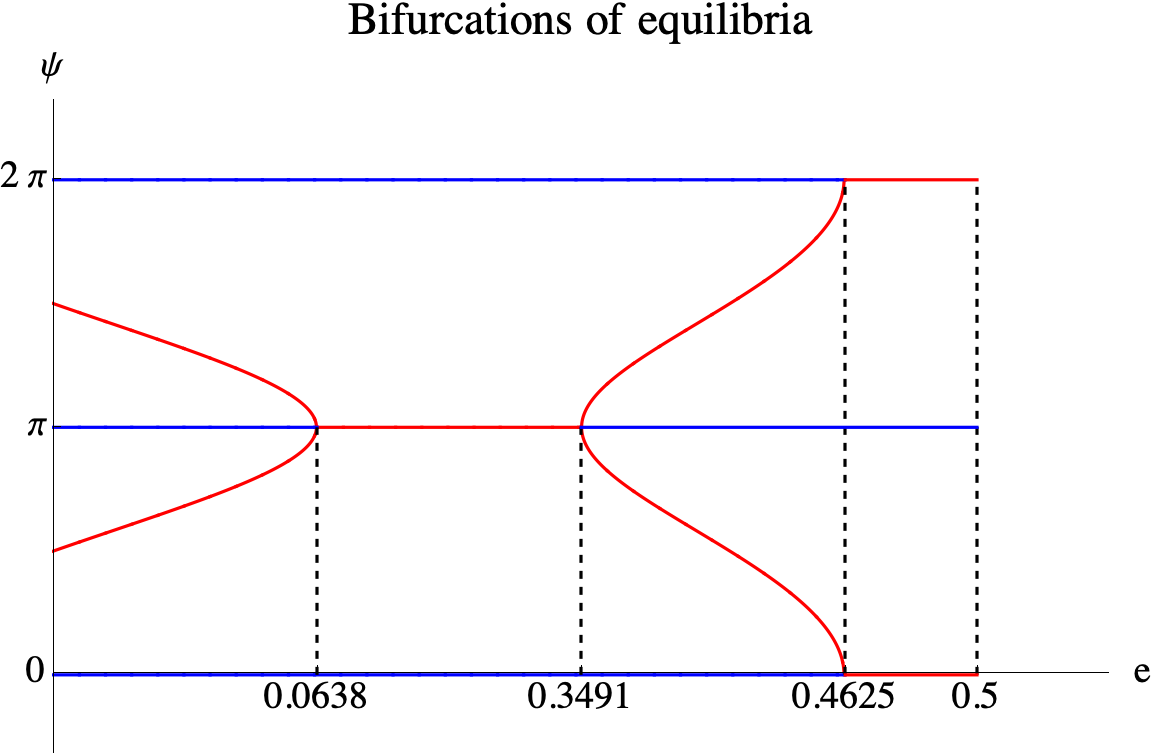}
	\caption{Resonance $1:2$. Left: almost spherical. Right: highly aspherical. Bifurcation diagrams in the $(e,\psi)$--plane of the equilibria in $\psi=0,\, \pi$. Blue stands for centre, red stands for saddle.}
	\label{fig:12_bif_G0}
\end{figure}

\subsubsection{Resonance $1:3$}

Let us consider the Hamiltonian~\eqref{eq:Ham_res} taking into account the normal part in $r_*= r_{13}$ and the part depending on the angles associated to the $1:3$ resonance, corresponding to take in Eq.~\eqref{eq:HIJ} the resonant combination $\theta+2\varphi$ and its multiples (see Section~\ref{Sec:resepi}): 
\beqa{Hres13_AS}
\H_{res}^{13}(J,I,\varphi,\theta) 
& = &   |\kappa_*| \, J + (n_*-\Omega_P) \, I + \frac{I^2}{2 \, r_*}   + f_1(J) \, I +f_2(J) \, I^2 \nonumber \\
&&  + \, \sum_{i=1}^4 f_{i+2}(J) \cos(2\,i \, \theta + 4\, i\, \varphi)    \, ,
\eeqa
where the terms $f_j(J)$, $j=1,\ldots,7$, are polynomials of order 8 in the variable $J$.

As in the case of the $1:2$ resonance, let us consider the canonical transformation 
\beq{trans_13}
	\left\{
	\begin{aligned}
\psi & =  4 \, \varphi + 2 \, \theta \\
\mu & =  \varphi +  \theta    \\
G & =\frac{1}{2} \, J - \frac{1}{2} \,  I \\
L & = - J + 2\, I
\end{aligned}	
	\right. \qquad .
\eeq
In the new variables, $\mu$ is cyclic, hence $L=L_0$ is constant, and the resonant Hamiltonian takes the form 
\beq{res_ham_13_new}
\H_{res}^{13}(G,\psi;L_0)  = \sum_{i=1}^{10} \gamma_i(L_0) \, G^i + \, \sum_{i=1}^4 \gamma_{i+10}(G;L_0)\, \cos(i\,\psi) 
\eeq
where the constants $\gamma_1,...,\gamma_{10}$ and the polynomials $\gamma_{11},...,\gamma_{14}$ can be computed through Eqs.~\equ{Hres13_AS} and \equ{trans_13}. Let us choose initial conditions to fix $L_0$ through $L_0 = -J_0 +2\,I_0$; we can compute $J_0$ by setting suitable values for $\rho$ and $p_{\rho}$, say $\rho = 10^{-3} \, r_{13}$ (corresponding to $e=10^{-3}$). Hence, from Eq.~\eqref{J_0} we obtain 
$$
J_{0,AS} = 3.9470\cdot 10^{-4}\ , \qquad J_{0,HA} = 3.9985\cdot 10^{-4}\, .
$$
Using the same procedure as in the $1:2$ resonance, we get
$$
I_{0,AS} = 1.9600\cdot 10^{-1} \, , \qquad I_{0,HA} = 1.8614\cdot 10^{-1}\, ,
$$
from which we obtain 
$$
L_{0,AS} = 3.9162\cdot 10^{-1}\, , \qquad L_{0,HA} = 3.7189\cdot 10^{-1} \, .
$$
Substituting $L_0$ in Eq.~\eqref{res_ham_13_new}, we arrive at the Hamiltonian
\beq{res_ham_13_Gpsi}
\H_{res}^{13}(G,\psi) =\sum_{i=1}^{10} \alpha_i \, G^i + \, \sum_{i=1}^4 \delta_{i}(G)\, \cos(i  \, \psi)
\eeq
with constants $\alpha_1,...,\alpha_{10}$ given in Table~\ref{tab:coeffs_13} and the polynomials $\delta_{1},\ldots, \delta_{4}$ of order 8 in $G$ given in Appendix~\ref{app:appB}; also in this case, the coefficients of the trigonometric terms depend on the action $G$.

\begin{table}[h!]
	\begin{tabular}{|c|c|c|}
		\hline
Coefficients & $1:3$ AS & $1:3$ HA \\
		\hline
		$\alpha_1$  &$2.7\cdot 10^{-6}$  & $2.6\cdot 10^{-6}$ \\
		\hline
		$\alpha_2$  &$1.1\cdot 10^{-5}$ & $1.1\cdot 10^{-5}$ \\
		\hline
		$\alpha_3$ &$3.1\cdot 10^{-7}$ & $3.1\cdot 10^{-7}$\\
		\hline
		$\alpha_4$ &$7.7\cdot 10^{-9}$ & $7.5\cdot 10^{-9}$ \\
		\hline
		$\alpha_5$ &$1.7\cdot 10^{-10}$ & $1.6\cdot 10^{-10}$\\
		\hline
		$\alpha_6$ &$4.0\cdot 10^{-12}$ & $3.5\cdot 10^{-12}$\\
		\hline
		$\alpha_7$ &$9.0\cdot 10^{-14}$ & $6.5\cdot 10^{-14}$\\
		\hline
		$\alpha_8$ &$1.9\cdot 10^{-15}$ & $9.9\cdot 10^{-16}$\\
		\hline
		$\alpha_9$ &$1.5\cdot 10^{-17}$ & $1.8\cdot 10^{-17}$\\
		\hline
		$\alpha_{10}$ &$3.6\cdot 10^{-20}$ & $4.2\cdot 10^{-20}$\\
		\hline
	\end{tabular}
	
	\vskip.1in 
	
\caption{Coefficients appearing in the $1:3$ resonant Hamiltonian~\eqref{res_ham_13_Gpsi} for the sample cases AS and HA, at $e=10^{-3}$.}\label{tab:coeffs_13} 	
\end{table}

In Figure~\ref{fig:13_ph_por}, we plot the phase portrait of the Hamiltonian~\eqref{res_ham_13_Gpsi} for three different values of the eccentricity. The equilibrium points are given by the solution of the system
\beqno
	\left\{
	\begin{aligned}
\dot \psi & =  \frac{\partial \H_{res}^{13}(G,\psi)}{\partial G} = 0 \\
\dot G & =  - \frac{\partial \H_{res}^{13}(G,\psi)}{\partial \psi} = 0\, 
	\end{aligned}	
	\right. \qquad .
\eeqno
The equilibrium positions, for $e=10^{-3}$, are:
\beqno
 (\psi_1 = 0, \,G_{1} = -0.1201)\,,\quad  (\psi_2 = \pi , \, G_{2} = -0.1200)
\eeqno
for AS and 
\beqno
 (\psi_1 = 0, \,G_{1} = -0.1151 )\,,\quad  (\psi_2 = \pi , \, G_{2} = -0.1143)
\eeqno
for HA. In both cases the first one is an elliptic point, while the second one is an hyperbolic point. For $e=10^{-3}$, we compute the amplitude of the resonant islands around the elliptic solution $(\psi_1,G_{1})$ which is given by 
\beqno
\Delta G_{AS}= 2.0656\cdot 10^{-3}\ ,\qquad \Delta G_{HA} = 8.7360
\cdot 10^{-3} 
\eeqno
(compare with the numerical integrations shown in Figure~\ref{fig:13_ph_por}, top panels).

\begin{figure}[h!]
	\centering
	\includegraphics[scale=0.25]{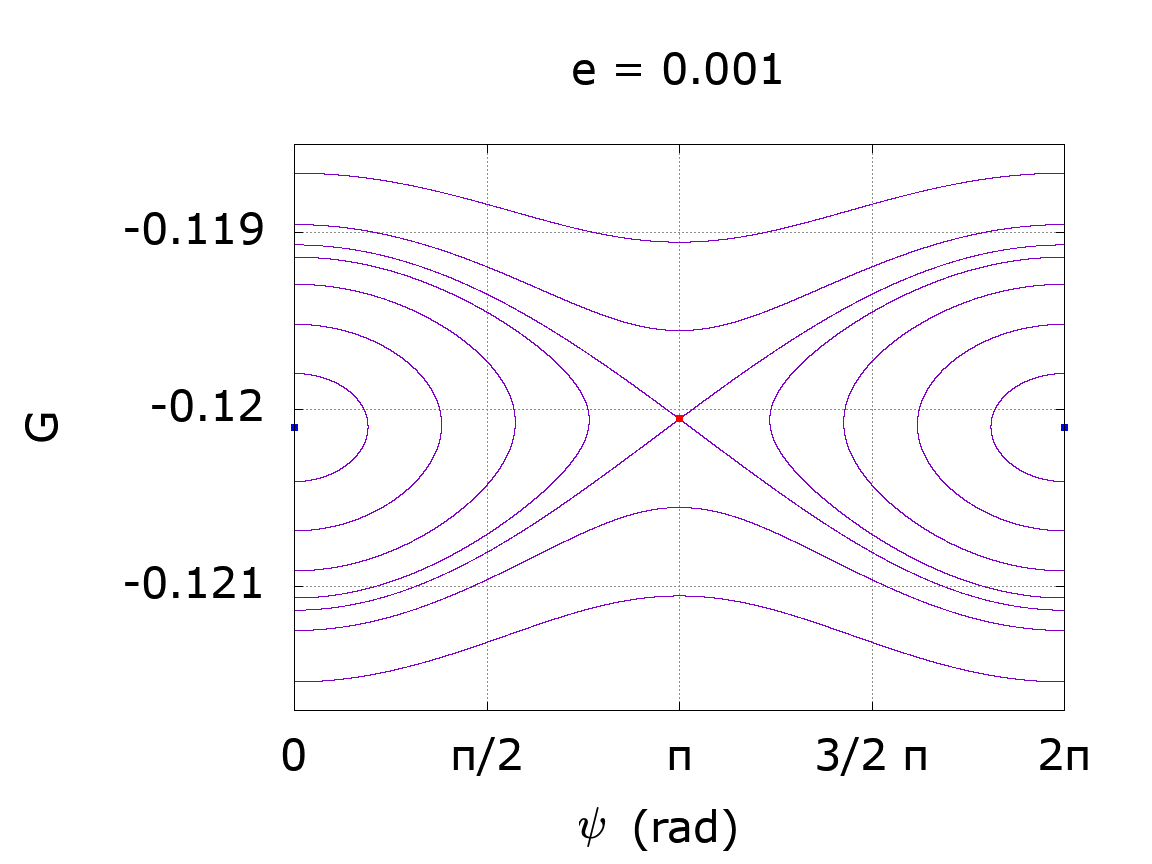}
	\includegraphics[scale=0.25]{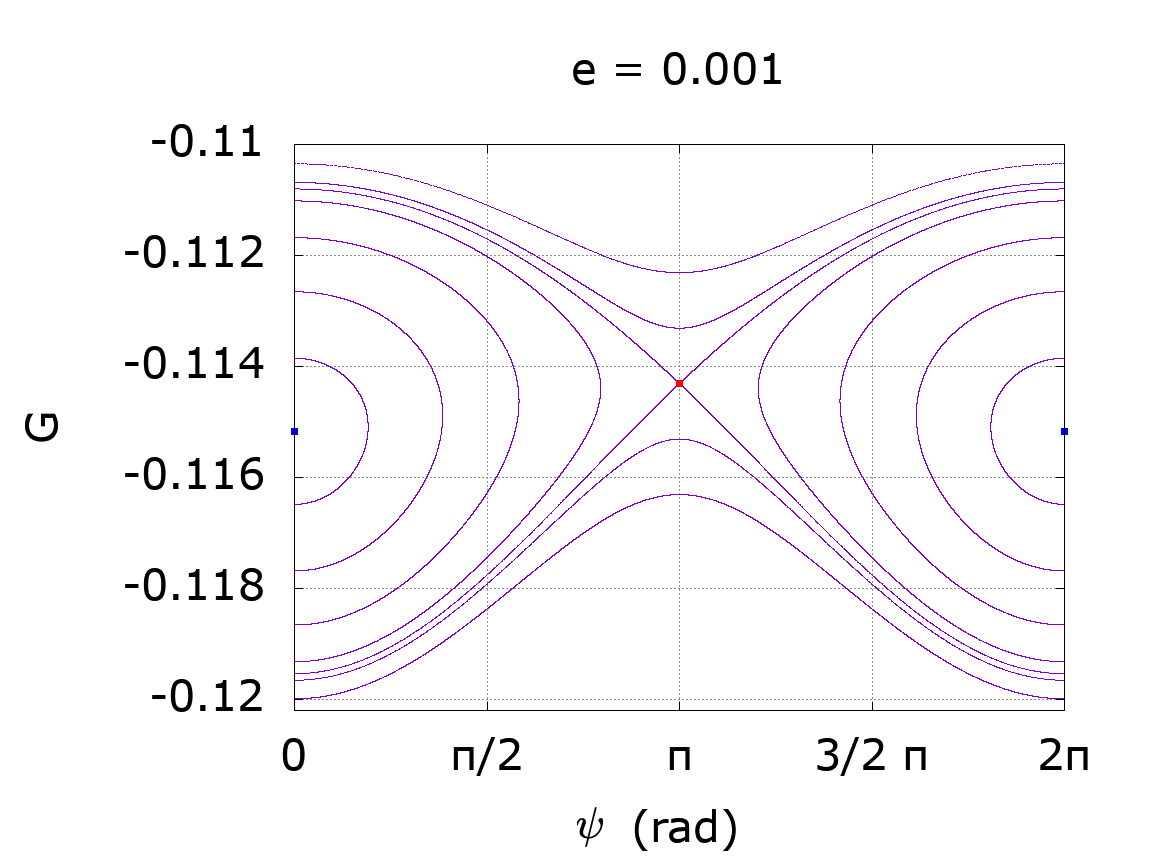}\\
	\includegraphics[scale=0.25]{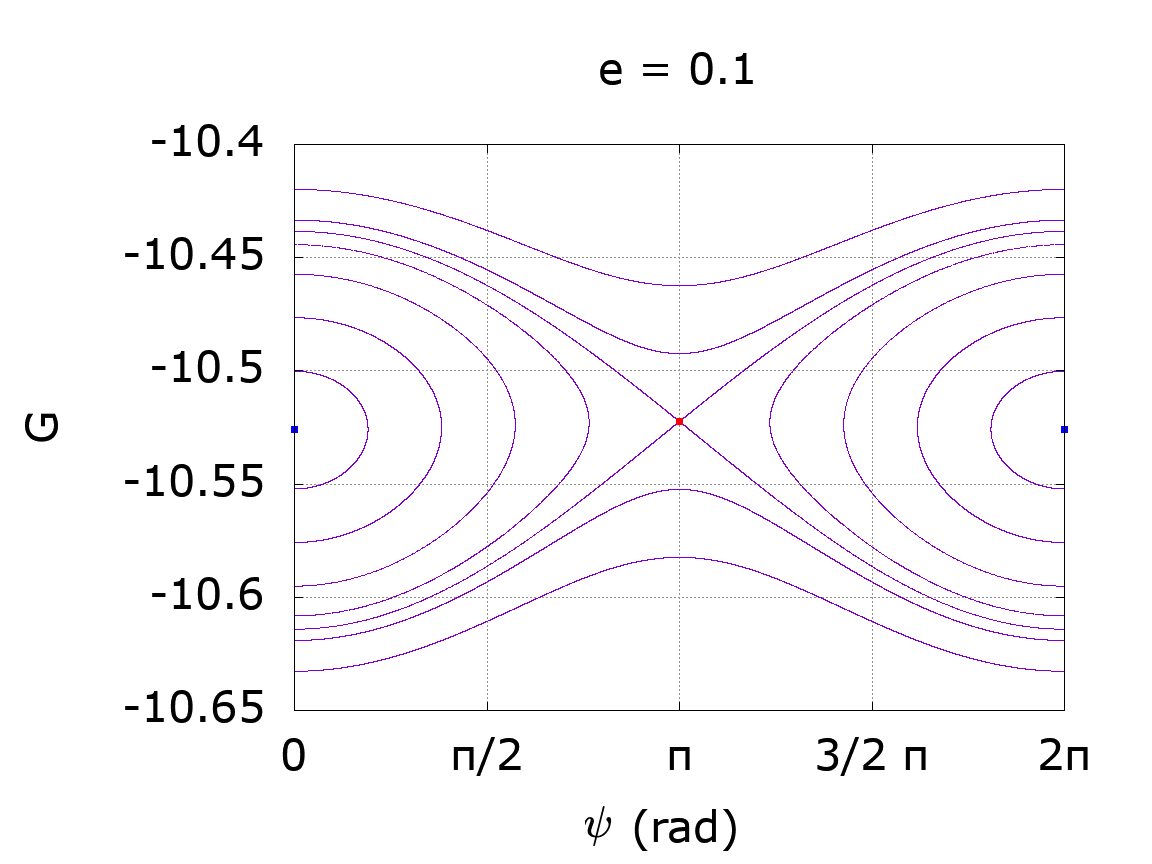}
	\includegraphics[scale=0.25]{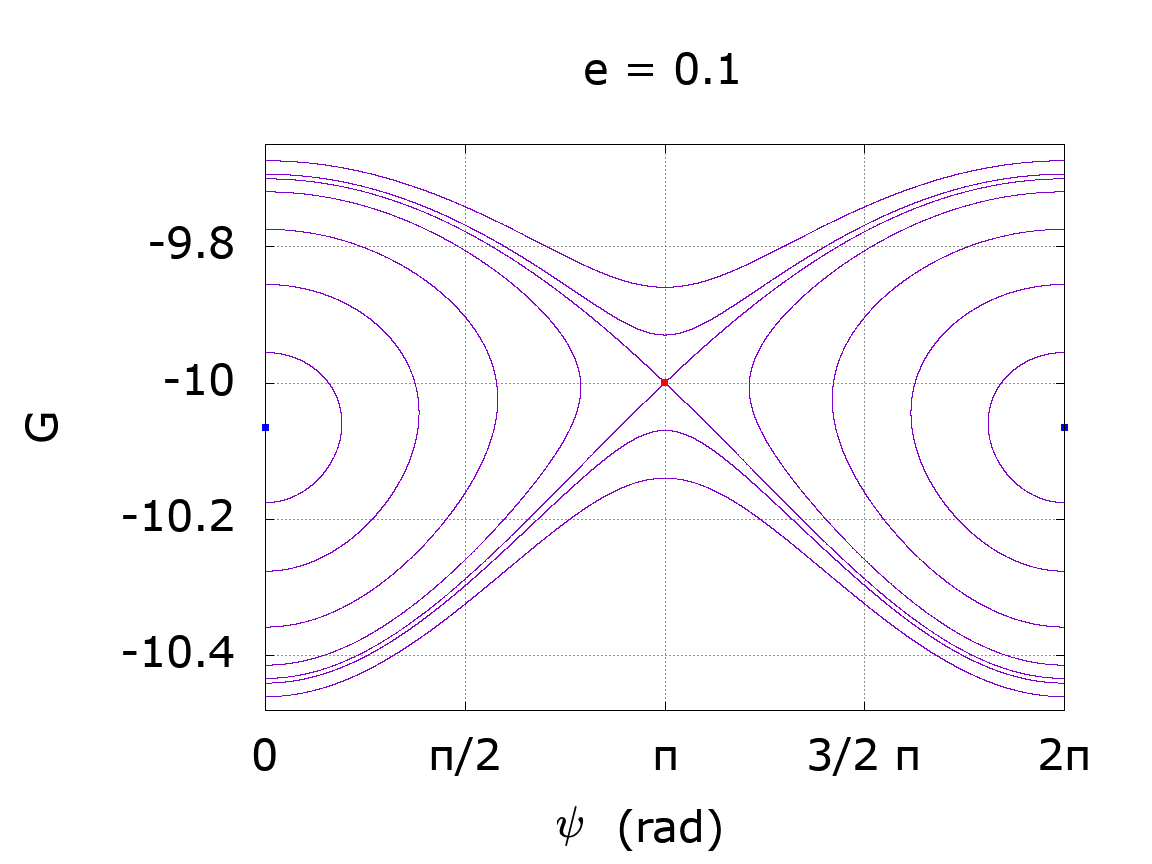}\\
	\includegraphics[scale=0.25]{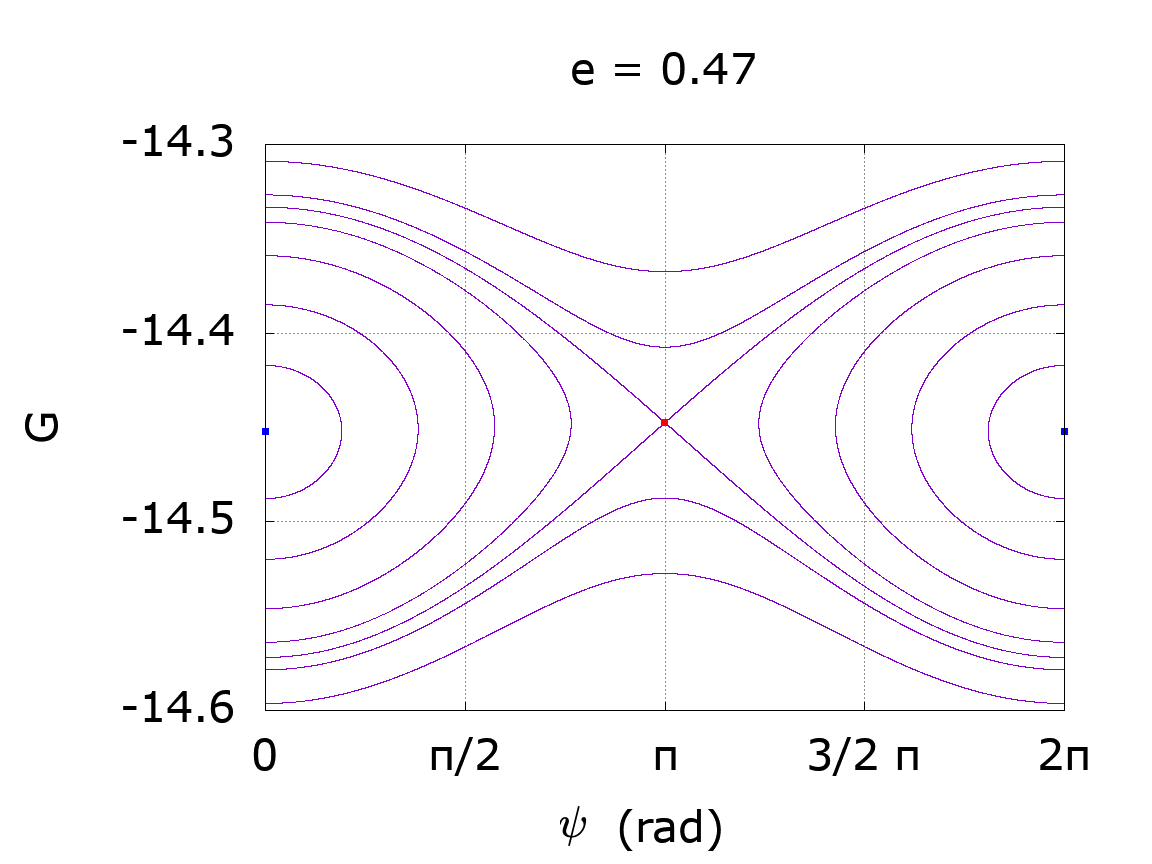}
	\includegraphics[scale=0.25]{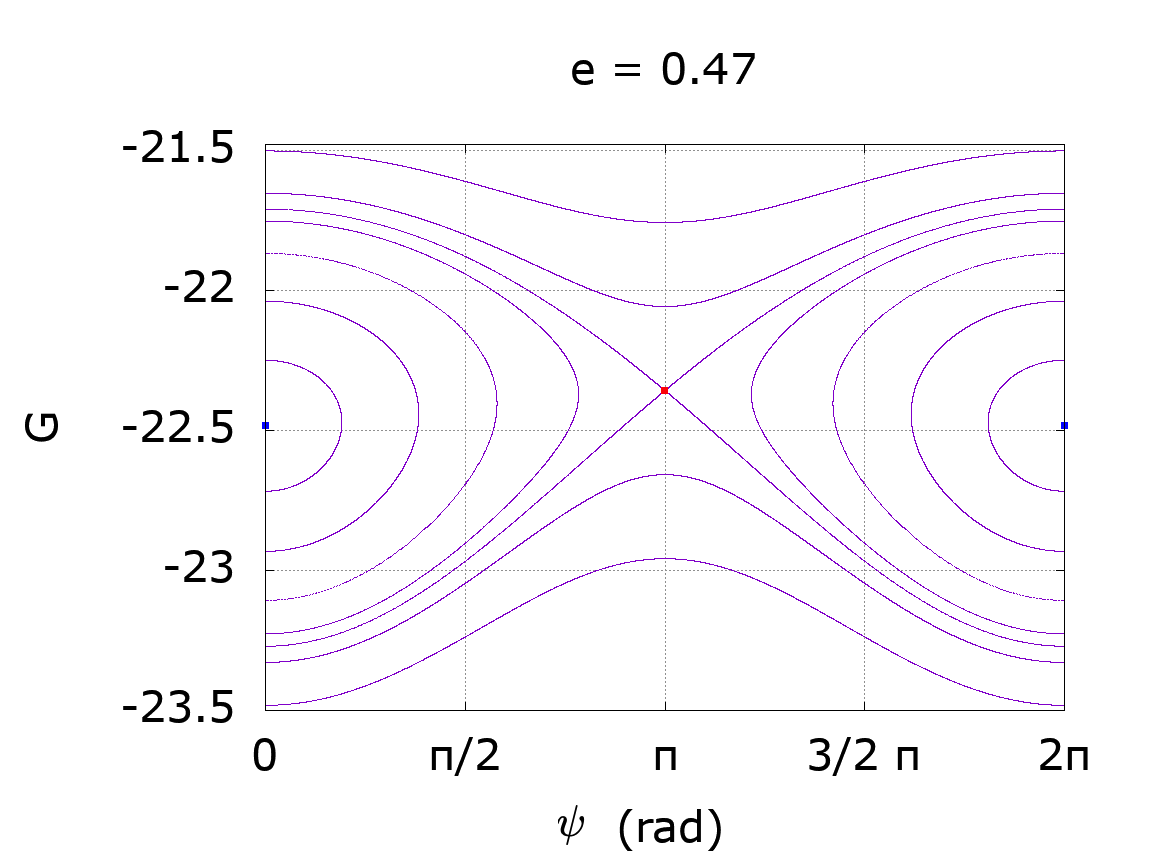}
	\caption{Phase portraits of the $1:3$ resonance for three different values of the eccentricity. Left: almost spherical. Right: highly aspherical. Red points stand for saddle equilibria, blue points stand for centre equilibria.}
	\label{fig:13_ph_por}
\end{figure}

As shown in Figure~\ref{fig:13_G}, for all values of the eccentricity, we find two equilibria associated to Eq.~\eqref{res_ham_13_new}, corresponding to $\psi=0$ and $\psi=\pi$, respectively centre and saddle. Remarkably, contrary to the corotation and $1:2$ resonance, the $1:3$ resonance does not experience bifurcations in the considered range of values of the eccentricity up to 0.5.

\begin{figure}[h!]
		\centering
		\includegraphics[scale=0.38]{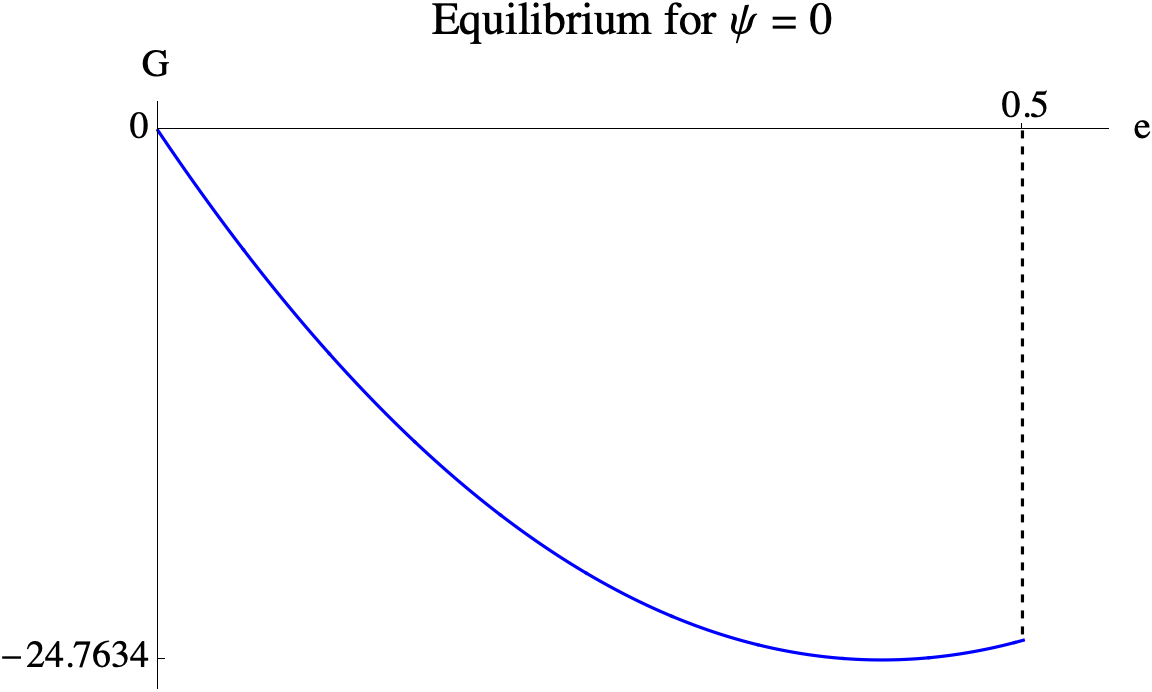}
		\includegraphics[scale=0.38]{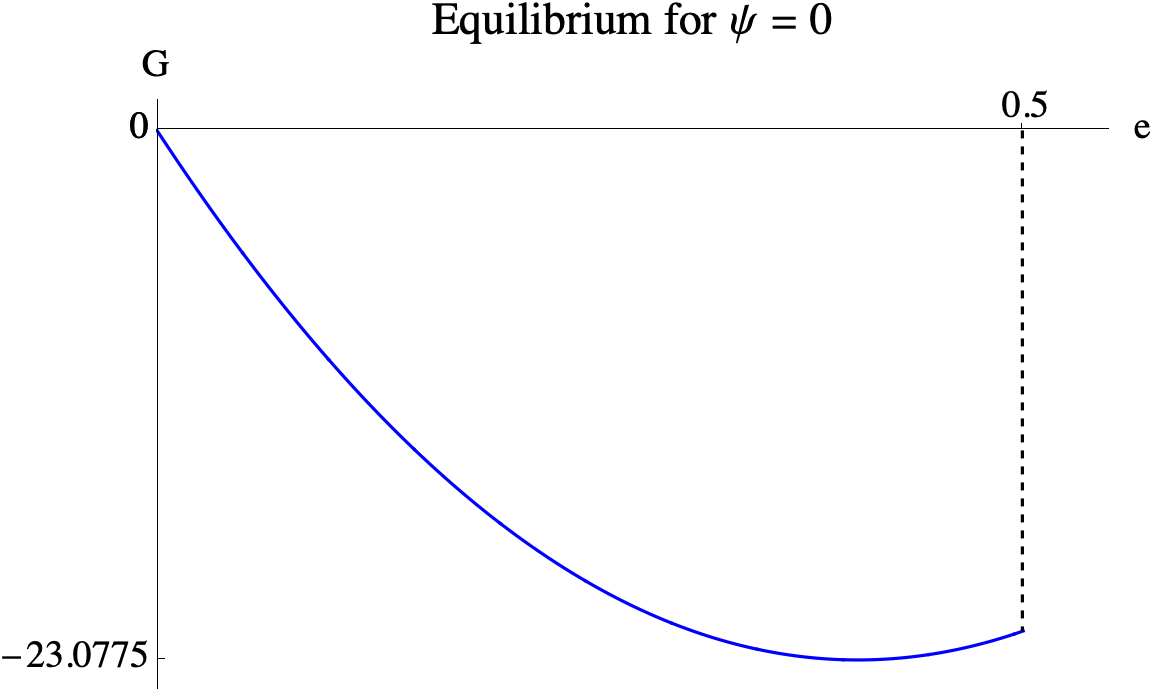}\\
		\includegraphics[scale=0.38]{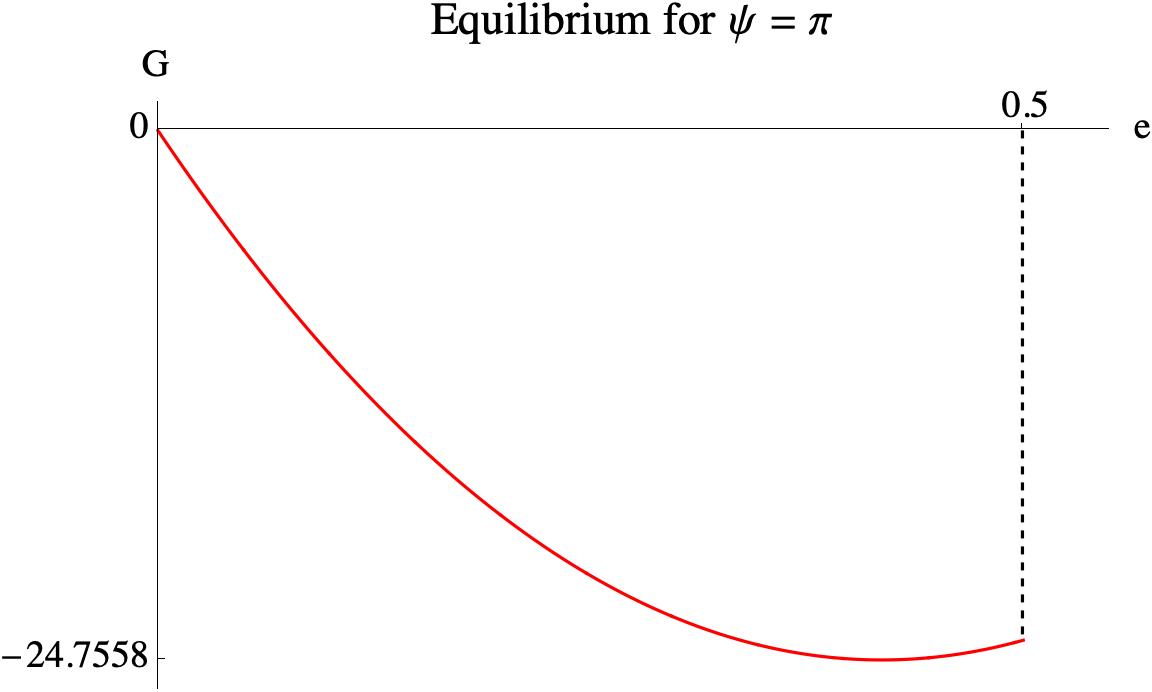}
		\includegraphics[scale=0.38]{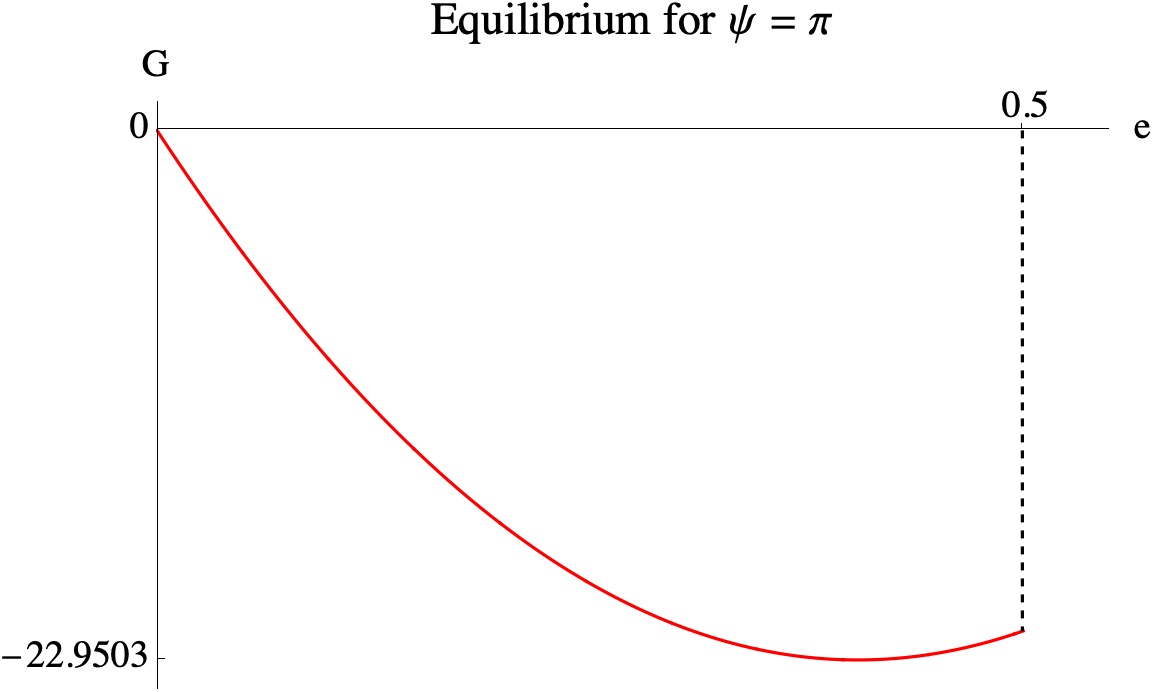}
		\caption{Resonance $1:3$. Left: almost spherical. Right: highly aspherical. Location and stability in the $(e, G)$--plane of the equilibrium for $\psi =0$ (top) and $\psi=\pi$ (bottom) as the eccentricity varies. Blue stands for centre, while red stands for saddle.  }
		\label{fig:13_G}
	\end{figure}

\section{Conclusions}
Once considered a classical topic, ring dynamics has experienced renewed interest due to the identification of ring systems surrounding minor celestial bodies. From a dynamical perspective, two aspects make the problem particularly interesting: first, the irregular shapes of minor bodies need a precise characterization of the resulting gravitational potential. Secondly, the frequent occurrence of ring systems in a $1:3$ resonance demands a thorough exploration of the underlying reasons for the selection of this specific resonance.
In this study, we present components essential for the examination of both inquiries. We introduce the gravitational potential, expressed in the epicyclic variables; we average the resulting Hamiltonian over the fast variables to retain only the terms corresponding to specific resonances. In this procedure, we control the error induced by the expansion in polynomial series of the action variables, bearing in mind that we retain only a finite number of terms. This analysis is performed for the principal resonances, specifically the corotation, as well as the $1:2$ and $1:3$ resonances.
Upon constructing the model, we undertake an analysis of the dynamical features of the resonances. Precisely, we determine the linear stability of the equilibria and we identify bifurcations as the parameters are varied. Specifically, bifurcation values are determined as the eccentricity of the particle within the ring varies over a reasonable range of values. This analysis enables us to draw conclusions regarding the different behavior of the resonances. The corotation resonance is situated in close proximity to the central body, thereby experiencing substantial influence from the gravitational potential, particularly in instances where the body has an irregular shape. Notably, the corotation resonance presents bifurcations at high values of the eccentricity.
The $1:2$ resonance is situated at a greater distance from the central body, and presents  numerous bifurcations, even at low eccentricity values. This could potentially be a reason for the non--selection of the $1:2$ resonance, as it experiences dynamical changes as the eccentricity varies, a scenario that may occur throughout the evolutionary history of the ring system. 
The $1:3$ resonance is located at a greater distance from the central body, thereby resulting in a diminished interaction with the gravitational potential of the irregular body. Furthermore, the $1:3$ resonance does not exhibit any bifurcation as the eccentricity varies, thereby providing enhanced stability relative to other resonances. Consequently, it is plausible to expect that the $1:3$ resonance has been predominantly adopted by ring systems around minor bodies. Specific case studies, precisely Chariklo, Haumea and Quaoar, are investigated in \cite{CDD25} on the basis of the findings of the present work.

\section*{Aknowledgement}
I.D.B. is supported by the Italian Research Center on High Performance Computing Big Data and Quantum Computing (ICSC), project funded by European Union - NextGenerationE and National Recovery and Resilience Plan (NRRP) - Mission
4 Component 2. Spoke 3, Astrophysics and Cosmos Observations.\\
I.D.B. and S.D.R. are partially supported, respectively, by INdAM groups G.N.A.M.P.A and G.N.F.M.



\appendix

\clearpage

\section{Explicit expression of the coefficients $c_{i,j}$, $V_{j,i}$, $\alpha_{i,j}$, $\beta_{i,j}$}\label{app:appA}

In this Appendix, we provide the expressions of the coefficients defined in Section~\ref{sec:epicyclic}. We start with the coefficients $c_{i,j}$ for $i = 0,1,2$ and for $j$ up to 4; we recall that the index $i$ corresponds to the order of the variable $I$ and it can assume only the values 0, 1, 2; the index $j$ corresponds to the order of the variable $\rho$ and it can assume any value from $3-i$ to $\infty$. The maximum value of $j$ appearing in the Hamiltonian~\eqref{HE} corresponds to the order of truncation of the expansion in $\rho$. The expressions of the coefficients $c_{i,j}$ are the following: 
\beqano
c_{0,3} & = & -\,\frac{P_{0,0}\, C_{0,0}\, \G M_P}{r_*^{4}} + \frac{4 \, P_{2,0}\, C_{2,0}\, \G M_P  R^2}{r_*^{6}}+\frac{25 \, P_{4,0}\, C_{4,0}\, \G M_P  R^4}{r_*^{8}}  \nonumber \\
 && + \, \frac{70 \, P_{6,0}\, C_{6,0}\, \G M_P  R^6}{r_*^{10}} \nonumber \\
c_{0,4} & = &  \frac{3 \,P_{0,0}\, C_{0,0}\, \G M_P}{2 \, r_*^{5}} - \frac{15 \, P_{2,0}\, C_{2,0}\, \G M_P  R^2}{2 \,r_*^{7}}-\frac{115 \, P_{4,0}\, C_{4,0}\, \G M_P  R^4}{2 \, r_*^{9}}  \nonumber \\
 && - \,\frac{385 \, P_{6,0}\, C_{6,0}\, \G M_P  R^6}{ 2\, r_*^{11}} \nonumber \\
c_{1,2} & = & \frac{3 \, n_*}{r_*^2}  \qquad 
c_{1,3}  =  - \,\frac{4 \, n_*}{r_*^3} \qquad 
c_{1,4} = \,\frac{5 \, n_*}{r_*^4} \nonumber \\
c_{2,1} & = & - \,\frac{1}{r_*^3} \qquad c_{2,2} =  \,\frac{3}{2 \, r_*^4}  \qquad \ \ 
c_{2,3} = - \,\frac{2}{r_*^5}  \qquad 
c_{2,4} =  \,\frac{5}{2 \, r_*^6}\ . 
\eeqano

Next, we provide the expressions of the coefficients $V_{j,i}$
appearing in Eq.~\eqref{Vji}, for $j$ up to 4 and for $i$ up to 3. We recall that the index $j$ represents the order of the variable $\rho$ thus, in principle, it varies from 0 to $\infty$. The index $i$ is half of the coefficient of the angle $\theta$ appearing in the expansion of the non--axisymmetric part $U_{ns}$. The expressions of the coefficients $V_{j,i}$ are the following: 
\beqano
V_{0,1} & = & -\, \frac{P_{2,2}\, C_{2,2}\, \G M_P  R^2}{r_*^{3}}-\frac{P_{4,2}\, C_{4,2}\, \G M_P  R^4}{r_*^{5}} - \, \frac{P_{6,2}\, C_{6,2}\, \G M_P  R^6}{r_*^{7}} \nonumber \\
 V_{0,2} & = & - \, \frac{P_{4,4}\, C_{4,4}\, \G M_P  R^4}{r_*^{5}} -\frac{P_{6,4}\, C_{6,4}\, \G M_P  R^6}{r_*^{7}} \nonumber \\
 V_{0,3} & = & - \, \frac{P_{6,6}\, C_{6,6}\, \G M_P  R^6}{r_*^{7}}
\eeqano
 
\beqano
 V_{1,1} & = &  \frac{3\,P_{2,2}\, C_{2,2}\, \G M_P  R^2}{r_*^{4}}+\frac{5\,P_{4,2}\, C_{4,2}\, \G M_P  R^4}{r_*^{6}} + \, \frac{7\, P_{6,2}\, C_{6,2}\, \G M_P  R^6}{r_*^{8}} \nonumber \\
 V_{1,2} & = & \frac{5\,P_{4,4}\, C_{4,4}\, \G M_P  R^4}{r_*^{6}} +\frac{7 \,P_{6,4}\, C_{6,4}\, \G M_P  R^6}{r_*^{8}} \nonumber \\
 V_{1,3} & = &  \frac{7\,P_{6,6}\, C_{6,6}\, \G M_P  R^6}{r_*^{8}} \nonumber \\ 
 V_{2,1} & = & -\, \frac{6\,P_{2,2}\, C_{2,2}\, \G M_P  R^2}{r_*^{5}}-\frac{15\,P_{4,2}\, C_{4,2}\, \G M_P  R^4}{r_*^{7}}  - \, \frac{28\,P_{6,2}\, C_{6,2}\, \G M_P  R^6}{r_*^{9}} \nonumber \\
 V_{2,2} & = & - \, \frac{15\,P_{4,4}\, C_{4,4}\, \G M_P  R^4}{r_*^{7}} -\frac{28 \,P_{6,4}\, C_{6,4}\, \G M_P  R^6}{r_*^{9}}\nonumber \\
  V_{2,3} & = & - \, \frac{28\,P_{6,6}\, C_{6,6}\, \G M_P  R^6}{r_*^{9}} 
  \nonumber \\
  V_{3,1} & = & \frac{10\,P_{2,2}\, C_{2,2}\, \G M_P  R^2}{r_*^{6}}+\frac{35\,P_{4,2}\, C_{4,2}\, \G M_P  R^4}{r_*^{8}} + \, \frac{84\,P_{6,2}\, C_{6,2}\, \G M_P  R^6}{r_*^{10}} \nonumber \\
  V_{3,2} & = &  \frac{35\,P_{4,4}\, C_{4,4}\, \G M_P  R^4}{r_*^{8}} +\frac{84 \,P_{6,4}\, C_{6,4}\, \G M_P  R^6}{r_*^{10}} \nonumber \\
  V_{3,3} & = & \frac{84\,P_{6,6}\, C_{6,6}\, \G M_P  R^6}{r_*^{10}} \nonumber \\
  V_{4,1} & = & -\, \frac{15\,P_{2,2}\, C_{2,2}\, \G M_P  R^2}{r_*^{7}}-\frac{70\,P_{4,2}\, C_{4,2}\, \G M_P  R^4}{r_*^{9}} - \, \frac{210\,P_{6,2}\, C_{6,2}\, \G M_P  R^6}{r_*^{11}} \nonumber \\
  V_{4,2} & = & -\, \frac{70\,P_{4,4}\, C_{4,4}\, \G M_P  R^4}{r_*^{9}} -\frac{210 \,P_{6,4}\, C_{6,4}\, \G M_P  R^6}{r_*^{11}} \nonumber \\
  V_{4,3} & = & -\,\frac{210\,P_{6,6}\, C_{6,6}\, \G M_P  R^6}{r_*^{11}} 
\eeqano

Let us now write explicitly the coefficient $\alpha_{2i,j},\beta_{2i,j}$ appearing in Eq.~\eqref{F_tilde_plus_F1_tilde}. Note that the index $2i$ is the coefficient of the angle $\theta$, while the index $j$ is the coefficient of the angle $\varphi$. In this Appendix, $|i|$ varies from 0 to 3 and $j$ from 0 to 4.
We report the expressions only for the non--zero coefficients. 
\beqano
\alpha_{-6,0} & = & -\, \frac{315 \, P_{6,6}\, C_{6,6} \, \G  M_P  R^6 J^{2}}{\kappa_*^{2} \,r_*^{11}} - \frac{28 \, P_{6,6}\, C_{6,6} \, \G  M_P  R^6 J}{ |\kappa_*| \,r_*^9} - \frac{P_{6,6}\, C_{6,6}\, \G  M_P  R^6 }{ r_*^7}  \nonumber \\
\beta_{-6,1} & = &  \frac{63 \sqrt{2}\, P_{6,6}\, C_{6,6} \, \G  M_P  R^6 J^{3/2}}{|\kappa_*|^{3/2} \,r_*^{10}} + \frac{7 \, P_{6,6}\, C_{6,6} \, \G  M_P  R^6 J^{1/2}}{\sqrt{2} \, |\kappa_*|^{1/2} \,r_*^8} \nonumber \\
\eeqano 

\beqano
\alpha_{-6,2} & = & \frac{210 \, P_{6,6}\, C_{6,6} \, \G  M_P  R^6 J^{2}}{\kappa_*^{2} \,r_*^{11}} + \frac{14 \, P_{6,6}\, C_{6,6} \, \G  M_P  R^6 J}{ |\kappa_*| \,r_*^9}  \nonumber \\
\beta_{-6,3} & = & -\, \frac{21 \sqrt{2}\, P_{6,6}\, C_{6,6} \, \G  M_P  R^6 J^{3/2}}{|\kappa_*|^{3/2} \,r_*^{10}}  \nonumber \\
\alpha_{-6,4} & = & -\, \frac{105 \, P_{6,6}\, C_{6,6} \, \G  M_P  R^6 J^{2}}{2\,\kappa_*^{2} \,r_*^{11}} \nonumber \\
\beta_{-6,4} & = &  0\nonumber \\
\alpha_{-4,0} & = & -\,\frac{315\, P_{6,4}\, C_{6,4} \, \G  M_P  R^6 J^2}{\kappa_*^2 \,r_*^{11}} - \frac{105 \, P_{4,4}\, C_{4,4} \, \G  M_P  R^4 J^2}{\kappa_*^2 \,r_*^9} - \frac{28 \, P_{6,4}\, C_{6,4} \, \G  M_P  R^6 J}{|\kappa_*|\, r_*^9} \nonumber \\ 
&&  - \, \frac{15\, P_{4,4}\, C_{4,4}\, \G  M_P  R^4 J}{|\kappa_*|\, r_*^7} - \frac{P_{6,4} \, C_{6,4} \,\G  M_P R^6 }{r_*^7} - \frac{P_{4,4} \, C_{4,4} \,  \G  M_P  R^4}{ r_*^5}  \nonumber \\
\beta_{-4,1} & = & \frac{63 \sqrt{2}\, P_{6,4}\, C_{6,4} \, \G  M_P  R^6 J^{3/2}}{|\kappa_*|^{3/2} \,r_*^{10}} + \frac{105 \, P_{4,4}\, C_{4,4} \, \G  M_P  R^4 J^{3/2}}{2 \sqrt{2} \, |\kappa_*|^{3/2} \,r_*^8} + \frac{7\, P_{6,4}\, C_{6,4}\, \G  M_P  R^6 J^{1/2}}{\sqrt{2} \, |\kappa_*|^{1/2}\,  r_*^8}  \nonumber \\
&& + \, \frac{5 \, P_{4,4}\, C_{4,4} \, \G  M_P  R^4 J^{1/2}}{ \sqrt{2} \, |\kappa_*|^{1/2} \,r_*^6}    \nonumber \\
\alpha_{-4,2} & = & \frac{210\, P_{6,4}\, C_{6,4} \, \G  M_P  R^6 J^2}{\kappa_*^2 \,r_*^{11}} + \frac{70 \, P_{4,4}\, C_{4,4} \, \G  M_P  R^4 J^2}{\kappa_*^2 \,r_*^9} + \frac{14 \, P_{6,4}\, C_{6,4} \, \G  M_P  R^6 J}{|\kappa_*|\, r_*^9} \nonumber \\ 
&&  + \, \frac{15\, P_{4,4}\, C_{4,4}\, \G  M_P  R^4 J}{2\,|\kappa_*|\, r_*^7}   \nonumber \\
\beta_{-4,3} & = &   -\, \frac{21 \sqrt{2}\, P_{6,4}\, C_{6,4} \, \G  M_P  R^6 J^{3/2}}{|\kappa_*|^{3/2} \,r_*^{10}}  - \frac{35 \, P_{4,4}\, C_{4,4} \, \G  M_P  R^4 J^{3/2}}{2 \sqrt{2}|\kappa_*|^{3/2} \,r_*^{8}}  \nonumber \\
\alpha_{-4,4} & = & -\, \frac{105\, P_{6,4}\, C_{6,4} \, \G  M_P  R^6 J^2}{2\, \kappa_*^2 \,r_*^{11}} - \frac{35 \, P_{4,4}\, C_{4,4} \, \G  M_P  R^4 J^2}{2 \, \kappa_*^2 \,r_*^9} \nonumber \\
\alpha_{-2,0} & = & -\,\frac{315\, P_{6,2}\, C_{6,2} \, \G  M_P  R^6 J^2}{\kappa_*^2 \,r_*^{11}} - \frac{105 \, P_{4,2}\, C_{4,2} \, \G  M_P  R^4 J^2}{\kappa_*^2 \,r_*^9} - \frac{45\, P_{2,2}\, C_{2,2}\, \G  M_P  R^2 J^2}{2 \, \kappa_*^2 \, r_*^7}\nonumber \\ 
&&  - \, \frac{28 \, P_{6,2}\, C_{6,2} \, \G  M_P  R^6 J}{|\kappa_*|\, r_*^9}  - \frac{15 P_{4,2} \, C_{4,2} \,\G  M_P R^4 J }{ |\kappa_*|\, r_*^7} - \frac{6 \, P_{2,2} \, C_{2,2} \,\G  M_P R^2 J }{ |\kappa_*|\, r_*^5} \nonumber \\
&&   - \,  \frac{P_{6,2} \, C_{6,2} \,  \G  M_P  R^6}{ r_*^7}  -  \frac{P_{4,2} \, C_{4,2} \,  \G  M_P  R^4}{ r_*^5}  -  \frac{P_{2,2} \, C_{2,2} \,  \G  M_P  R^2}{ r_*^3} \nonumber \\
\beta_{-2,0} & = & 0 \nonumber \\
\alpha_{-2,1} & = & 0 \nonumber \\
\beta_{-2,1} & = &  \frac{63 \sqrt{2}\, P_{6,2}\, C_{6,2} \, \G  M_P  R^6 J^{3/2}}{|\kappa_*|^{3/2} \,r_*^{10}} + \frac{105 \, P_{4,2}\, C_{4,2} \, \G  M_P  R^4 J^{3/2}}{2 \sqrt{2} \, |\kappa_*|^{3/2} \,r_*^8} + \frac{7\, P_{6,2}\, C_{6,2}\, \G  M_P  R^6 J^{1/2}}{\sqrt{2} \, |\kappa_*|^{1/2} r_*^8}  \nonumber \\
&& + \,  \frac{15 \, P_{2,2}\, C_{2,2} \, \G  M_P  R^2 J^{3/2}}{\sqrt{2}\, |\kappa_*|^{3/2} \,r_*^{6}} + \frac{5 \, P_{4,2}\, C_{4,2} \, \G  M_P  R^4 J^{1/2}}{ \sqrt{2} \, |\kappa_*|^{1/2} \,r_*^6} + \frac{3\, P_{2,2}\, C_{2,2}\, \G  M_P  R^2 J^{1/2}}{\sqrt{2} \, |\kappa_*|^{1/2} r_*^4}  \nonumber \\
\eeqano 

\beqano
\alpha_{-2,2} & = & \frac{210\, P_{6,2}\, C_{6,2} \, \G  M_P  R^6 J^2}{\kappa_*^2 \,r_*^{11}} + \frac{70 \, P_{4,2}\, C_{4,2} \, \G  M_P  R^4 J^2}{\kappa_*^2 \,r_*^9} + \frac{14 \, P_{6,2}\, C_{6,2} \, \G  M_P  R^6 J}{|\kappa_*| r_*^9} \nonumber \\ 
&&  + \, \frac{15\, P_{2,2}\, C_{2,2}\, \G  M_P  R^2 J^2}{\kappa_*^2 r_*^7} + \frac{15 \, P_{4,2} \, C_{4,2} \,\G  M_P R^4 J}{2 |\kappa_*| r_*^7} + \frac{3  \,P_{2,2} \, C_{2,2} \,  \G  M_P  R^2 J}{|\kappa_*| r_*^5} \nonumber \\
\beta_{-2,3} & = & -\, \frac{21 \sqrt{2}\, P_{6,2}\, C_{6,2} \, \G  M_P  R^6 J^{3/2}}{|\kappa_*|^{3/2} \,r_*^{10}} - \frac{35 \, P_{4,2}\, C_{4,2} \, \G  M_P  R^4 J^{3/2}}{2 \sqrt{2} \, |\kappa_*|^{3/2} \,r_*^8} - \frac{5\, P_{2,2}\, C_{2,2}\, \G  M_P  R^2 J^{3/2}}{\sqrt{2} \, |\kappa_*|^{3/2} r_*^6}  \nonumber \\
\alpha_{-2,4} & = &  -\,\frac{105\, P_{6,2}\, C_{6,2} \, \G  M_P  R^6 J^2}{2 \,\kappa_*^2 \,r_*^{11}} - \frac{35 \, P_{4,2}\, C_{4,2} \, \G  M_P  R^4 J^2}{2 \, \kappa_*^2 \,r_*^9} - \frac{15\, P_{2,2}\, C_{2,2}\, \G  M_P  R^2 J^2}{4 \, \kappa_*^2 r_*^7}  \nonumber \\
\beta_{0,1} & = & \frac{105 \sqrt{2}\, P_{6,0}\, C_{6,0} \, \G  M_P  R^6 J^{3/2}}{|\kappa_*|^{3/2} \,r_*^{10}} + \frac{75 \, P_{4,0}\, C_{4,0} \, \G  M_P  R^4 J^{3/2}}{\sqrt{2} \, |\kappa_*|^{3/2} \,r_*^8}  + \frac{6\, \sqrt{2}\, P_{2,0}\, C_{2,0}\, \G  M_P  R^2 J^{3/2}}{  |\kappa_*|^{3/2}\, r_*^6} \nonumber \\
&& - \, \frac{3  \, P_{0,0}\, C_{0,0} \, \G  M_P  J^{3/2}}{\sqrt{2}\, |\kappa_*|^{3/2} \,r_*^{4}} - \frac{3 \sqrt{2}\,  I^2 J^{3/2}}{ |\kappa_*|^{3/2} \,r_*^5} -  \frac{6 \sqrt{2}\,  n_* \, I \, J^{3/2}}{ |\kappa_*|^{3/2} \,r_*^3}- \frac{ \sqrt{2}\,  I^2 J^{1/2}}{ |\kappa_*|^{1/2} \,r_*^3} - \frac{2 \sqrt{2}\, n_* I J^{1/2}}{ |\kappa_*|^{1/2} \,r_*}\nonumber \\
\alpha_{0,2} & = &  \frac{385\, P_{6,0}\, C_{6,0} \, \G  M_P  R^6 J^2}{\kappa_*^2 \,r_*^{11}} + \frac{115 \, P_{4,0}\, C_{4,0} \, \G  M_P  R^4 J^2}{\kappa_*^2 \,r_*^9}    +  \frac{15\, P_{2,0}\, C_{2,0}\, \G  M_P  R^2 J^2}{\kappa_*^2\, r_*^7} \nonumber \\ 
&& - \, \frac{3 \, P_{0.0} \, C_{0,0} \,\G  M_P  J^2}{\kappa_*^2\, r_*^5} - \frac{5  \, I^2 J^2 }{\kappa_*^2\, r_*^6}- \frac{10\, n_*  \, I J^2 }{\kappa_*^2\, r_*^4}-\frac{3  \, I^2 J }{2 |\kappa_*|\, r_*^4}-\frac{3\,n_*  \, I J }{|\kappa_*|\, r_*^2}\nonumber \\
\beta_{0,3} & = & -\, \frac{35 \sqrt{2}\, P_{6,0}\, C_{6,0} \, \G  M_P  R^6 J^{3/2}}{|\kappa_*|^{3/2} \,r_*^{10}} - \frac{25 \, P_{4,0}\, C_{4,0} \, \G  M_P  R^4 J^{3/2}}{\sqrt{2} \, |\kappa_*|^{3/2} \,r_*^8}  - \frac{2\, \sqrt{2}\, P_{2,0}\, C_{2,0}\, \G  M_P  R^2 J^{3/2}}{  |\kappa_*|^{3/2}\, r_*^6} \nonumber \\
&& + \, \frac{  P_{0,0}\, C_{0,0} \, \G  M_P  J^{3/2}}{\sqrt{2}\, |\kappa_*|^{3/2} \,r_*^{4}} + \frac{ \sqrt{2}\,  I^2 J^{3/2}}{ |\kappa_*|^{3/2} \,r_*^5} +  \frac{2 \sqrt{2}\,  n_* \, I \, J^{3/2}}{ |\kappa_*|^{3/2} \,r_*^3} \nonumber \\
\alpha_{0,4} & = & -\,\frac{385\, P_{6,0}\, C_{6,0} \, \G  M_P  R^6 J^2}{4\, \kappa_*^2 \,r_*^{11}} - \frac{115 \, P_{4,0}\, C_{4,0} \, \G  M_P  R^4 J^2}{4 \, \kappa_*^2 \,r_*^9}    -  \frac{15\, P_{2,0}\, C_{2,0}\, \G  M_P  R^2 J^2}{4 \, \kappa_*^2\, r_*^7} \nonumber \\ 
&& + \, \frac{3 \, P_{0.0} \, C_{0,0} \,\G  M_P  J^2}{4\, \kappa_*^2\, r_*^5} + \frac{5  \, I^2 J^2 }{4\, \kappa_*^2\, r_*^6}+ \frac{5\, n_*  \, I J^2 }{2\, \kappa_*^2\, r_*^4}\nonumber\\
\alpha_{2,0} & = & -\,\frac{315\, P_{6,2}\, C_{6,2} \, \G  M_P  R^6 J^2}{\kappa_*^2 \,r_*^{11}} - \frac{105 \, P_{4,2}\, C_{4,2} \, \G  M_P  R^4 J^2}{\kappa_*^2 \,r_*^9} - \frac{45\, P_{2,2}\, C_{2,2}\, \G  M_P  R^2 J^2}{2 \, \kappa_*^2 \, r_*^7}\nonumber \\ 
&&  - \, \frac{28 \, P_{6,2}\, C_{6,2} \, \G  M_P  R^6 J}{|\kappa_*|\, r_*^9}  - \frac{15 P_{4,2} \, C_{4,2} \,\G  M_P R^4 J }{ |\kappa_*|\, r_*^7} - \frac{6 \, P_{2,2} \, C_{2,2} \,\G  M_P R^2 J }{ |\kappa_*|\, r_*^5} \nonumber \\
&&   - \,  \frac{P_{6,2} \, C_{6,2} \,  \G  M_P  R^6}{ r_*^7}  -  \frac{P_{4,2} \, C_{4,2} \,  \G  M_P  R^4}{ r_*^5}  -  \frac{P_{2,2} \, C_{2,2} \,  \G  M_P  R^2}{ r_*^3} \nonumber \\
\beta_{2,1} & = &  \frac{63 \sqrt{2}\, P_{6,2}\, C_{6,2} \, \G  M_P  R^6 J^{3/2}}{|\kappa_*|^{3/2} \,r_*^{10}} + \frac{105 \, P_{4,2}\, C_{4,2} \, \G  M_P  R^4 J^{3/2}}{2 \sqrt{2} \, |\kappa_*|^{3/2} \,r_*^8} + \frac{7\, P_{6,2}\, C_{6,2}\, \G  M_P  R^6 J^{1/2}}{\sqrt{2} \, |\kappa_*|^{1/2} r_*^8}  \nonumber \\
&& + \,  \frac{15 \, P_{2,2}\, C_{2,2} \, \G  M_P  R^2 J^{3/2}}{\sqrt{2}\, |\kappa_*|^{3/2} \,r_*^{6}} + \frac{5 \, P_{4,2}\, C_{4,2} \, \G  M_P  R^4 J^{1/2}}{ \sqrt{2} \, |\kappa_*|^{1/2} \,r_*^6} + \frac{3\, P_{2,2}\, C_{2,2}\, \G  M_P  R^2 J^{1/2}}{\sqrt{2} \, |\kappa_*|^{1/2} r_*^4}  \nonumber \\
\eeqano

\beqano
\alpha_{2,2} & = & \frac{210\, P_{6,2}\, C_{6,2} \, \G  M_P  R^6 J^2}{\kappa_*^2 \,r_*^{11}} + \frac{70 \, P_{4,2}\, C_{4,2} \, \G  M_P  R^4 J^2}{\kappa_*^2 \,r_*^9} + \frac{14 \, P_{6,2}\, C_{6,2} \, \G  M_P  R^6 J}{|\kappa_*| r_*^9} \nonumber \\ 
&&  + \, \frac{15\, P_{2,2}\, C_{2,2}\, \G  M_P  R^2 J^2}{\kappa_*^2 r_*^7} + \frac{15 \, P_{4,2} \, C_{4,2} \,\G  M_P R^4 J}{2 |\kappa_*| r_*^7} + \frac{3  \,P_{2,2} \, C_{2,2} \,  \G  M_P  R^2 J}{|\kappa_*| r_*^5} \nonumber \\
\beta_{2,3} & = &  -\, \frac{21 \sqrt{2}\, P_{6,2}\, C_{6,2} \, \G  M_P  R^6 J^{3/2}}{|\kappa_*|^{3/2} \,r_*^{10}} - \frac{35 \, P_{4,2}\, C_{4,2} \, \G  M_P  R^4 J^{3/2}}{2 \sqrt{2} \, |\kappa_*|^{3/2} \,r_*^8} - \frac{5\, P_{2,2}\, C_{2,2}\, \G  M_P  R^2 J^{3/2}}{\sqrt{2} \, |\kappa_*|^{3/2} r_*^6}  \nonumber \\
\alpha_{2,4} & = &  -\,\frac{105\, P_{6,2}\, C_{6,2} \, \G  M_P  R^6 J^2}{2 \,\kappa_*^2 \,r_*^{11}} - \frac{35 \, P_{4,2}\, C_{4,2} \, \G  M_P  R^4 J^2}{2 \, \kappa_*^2 \,r_*^9} - \frac{15\, P_{2,2}\, C_{2,2}\, \G  M_P  R^2 J^2}{4 \, \kappa_*^2 r_*^7}  \nonumber \\
\alpha_{4,0} & = & -\,\frac{315\, P_{6,4}\, C_{6,4} \, \G  M_P  R^6 J^2}{\kappa_*^2 \,r_*^{11}} - \frac{105 \, P_{4,4}\, C_{4,4} \, \G  M_P  R^4 J^2}{\kappa_*^2 \,r_*^9} - \frac{28 \, P_{6,4}\, C_{6,4} \, \G  M_P  R^6 J}{|\kappa_*| \, r_*^9} \nonumber \\ 
&&  - \, \frac{15\, P_{4,4}\, C_{4,4}\, \G  M_P  R^4 J}{|\kappa_*|\, r_*^7} - \frac{P_{6,4} \, C_{6,4} \,\G  M_P R^6 }{r_*^7} - \frac{P_{4,4} \, C_{4,4} \,  \G  M_P  R^4}{ r_*^5} \nonumber \\
\beta_{4,1} & = & \frac{63 \sqrt{2}\, P_{6,4}\, C_{6,4} \, \G  M_P  R^6 J^{3/2}}{|\kappa_*|^{3/2} \,r_*^{10}} + \frac{105 \, P_{4,4}\, C_{4,4} \, \G  M_P  R^4 J^{3/2}}{2 \sqrt{2} \, |\kappa_*|^{3/2} \,r_*^8} + \frac{7\, P_{6,4}\, C_{6,4}\, \G  M_P  R^6 J^{1/2}}{\sqrt{2} \, |\kappa_*|^{1/2}\,  r_*^8}  \nonumber \\
&& + \, \frac{5 \, P_{4,4}\, C_{4,4} \, \G  M_P  R^4 J^{1/2}}{ \sqrt{2} \, |\kappa_*|^{1/2} \,r_*^6}     \nonumber \\
\alpha_{4,2} & = & \frac{210\, P_{6,4}\, C_{6,4} \, \G  M_P  R^6 J^2}{\kappa_*^2 \,r_*^{11}} + \frac{70 \, P_{4,4}\, C_{4,4} \, \G  M_P  R^4 J^2}{\kappa_*^2 \,r_*^9} + \frac{14 \, P_{6,4}\, C_{6,4} \, \G  M_P  R^6 J}{|\kappa_*|\, r_*^9} \nonumber \\ 
&&  + \, \frac{15\, P_{4,4}\, C_{4,4}\, \G  M_P  R^4 J}{2\,|\kappa_*|\, r_*^7}\nonumber \\
\beta_{4,3} & = &  -\, \frac{21 \sqrt{2}\, P_{6,4}\, C_{6,4} \, \G  M_P  R^6 J^{3/2}}{|\kappa_*|^{3/2} \,r_*^{10}}  - \frac{35 \, P_{4,4}\, C_{4,4} \, \G  M_P  R^4 J^{3/2}}{2 \sqrt{2}|\kappa_*|^{3/2} \,r_*^{8}} \nonumber \\
\alpha_{4,4} & = & -\, \frac{105\, P_{6,4}\, C_{6,4} \, \G  M_P  R^6 J^2}{2\, \kappa_*^2 \,r_*^{11}} - \frac{35 \, P_{4,4}\, C_{4,4} \, \G  M_P  R^4 J^2}{2 \, \kappa_*^2 \,r_*^9} \nonumber \\
\alpha_{6,0} & = & -\, \frac{315 \, P_{6,6}\, C_{6,6} \, \G  M_P  R^6 J^{2}}{\kappa_*^{2} \,r_*^{11}} - \frac{28 \, P_{6,6}\, C_{6,6} \, \G  M_P  R^6 J}{ |\kappa_*| \,r_*^9} - \frac{P_{6,6}\, C_{6,6}\, \G  M_P  R^6 }{ r_*^7}  \nonumber \\
\beta_{6,1} & = &   \frac{63 \sqrt{2}\, P_{6,6}\, C_{6,6} \, \G  M_P  R^6 J^{3/2}}{|\kappa_*|^{3/2} \,r_*^{10}} + \frac{7 \, P_{6,6}\, C_{6,6} \, \G  M_P  R^6 J^{1/2}}{\sqrt{2} \, |\kappa_*|^{1/2} \,r_*^8} \nonumber \\
\alpha_{6,2} & = & \frac{210 \, P_{6,6}\, C_{6,6} \, \G  M_P  R^6 J^{2}}{\kappa_*^{2} \,r_*^{11}} + \frac{14 \, P_{6,6}\, C_{6,6} \, \G  M_P  R^6 J}{ |\kappa_*| \,r_*^9}  \nonumber \\\
\beta_{6,3} & = &  -\, \frac{21 \sqrt{2}\, P_{6,6}\, C_{6,6} \, \G  M_P  R^6 J^{3/2}}{|\kappa_*|^{3/2} \,r_*^{10}} \nonumber \\
\alpha_{6,4} & = & -\, \frac{105 \, P_{6,6}\, C_{6,6} \, \G  M_P  R^6 J^{2}}{2\,\kappa_*^{2} \,r_*^{11}}\ . 
\eeqano

\section{Values of the coefficients associated to the $1:2$ and $1:3$ resonances.}\label{app:appB}

The polynomials $\delta_{1},...,\delta_{5}$ for the $1:2$ resonance (see Eq.~\equ{res_ham_12_Gpsi}) for the AS case are given by the following expressions:

\beqano
\delta_1(G) & = & -\,2.1\cdot 10^{-7} + 1.2\cdot 10^{-6}\, G + 3.6\cdot 10^{-8}\, G^2  + 7.6\cdot 10^{-10}\, G^3 \nonumber \\ 
&& + \, 1.3\cdot 10^{-11}\, G^4+ 2.2\cdot 10^{-13}\, G^5 + 3.5\cdot 10^{-15}\, G^6  \nonumber \\ 
&& + \, 5.3\cdot 10^{-17}\, G^7 + 8.0\cdot 10^{-19} \, G^8 \, , \nonumber \\ 
\delta_2(G) & = & -\,4.8\cdot 10^{-12} + 5.6\cdot 10^{-11}\, G - 1.6\cdot 10^{-10}\, G^2-8.8\cdot 10^{-12}\, G^3\nonumber \\ 
&& - \, 2.8\cdot 10^{-13}\, G^4-7.3\cdot 10^{-15}\, G^5 -1.6\cdot 10^{-16}\, G^6  \nonumber \\ 
&& - \, 3.3\cdot 10^{-18}\, G^7 - 6.5\cdot 10^{-20} \, G^8 \, , \nonumber \\ 
\delta_3(G) & = & -\,1.8\cdot 10^{-16} + 3.2\cdot 10^{-15}\, G - 1.8\cdot 10^{-14}\, G^2+3.5\cdot 10^{-14}\, G^3\nonumber \\ 
&& + \, 2.7\cdot 10^{-15}\, G^4+1.2\cdot 10^{-16}\, G^5 +4.1\cdot 10^{-18}\, G^6  \nonumber \\ 
&& + \, 1.1\cdot 10^{-19}\, G^7 + 3.0\cdot 10^{-21} \, G^8 \, , \nonumber \\  
\delta_4(G) & = & -\,8.8\cdot 10^{-21} + 2.0\cdot 10^{-19}\, G - 1.8\cdot 10^{-18}\, G^2+6.9\cdot 10^{-18}\, G^3\nonumber \\ 
&& - \, 9.8\cdot 10^{-18}\, G^4-1.0\cdot 10^{-18}\, G^5 -5.6\cdot 10^{-20}\, G^6  \nonumber \\ 
&& - \, 2.3\cdot 10^{-21}\, G^7 - 8.3\cdot 10^{-23} \, G^8  \, ,\nonumber \\ 
\delta_5(G) & = & -\,4.6\cdot 10^{-25} + 1.3\cdot 10^{-23}\, G - 1.6\cdot 10^{-22}\, G^2+9.3\cdot 10^{-22}\, G^3\nonumber \\ 
&& - \, 2.6\cdot 10^{-21}\, G^4+2.9\cdot 10^{-21}\, G^5 +3.7\cdot 10^{-22}\, G^6  \nonumber \\ 
&& + \, 2.5\cdot 10^{-23}\, G^7 +1.3\cdot 10^{-24} \, G^8 \, , 
\eeqano
while for the HA case, they are given by 
\beqano
\delta_1(G) & = & -\,3.4\cdot 10^{-6} + 2.2\cdot 10^{-5}\, G + 8.0\cdot 10^{-7}\, G^2  + 2.1\cdot 10^{-8}\, G^3 \nonumber \\ 
&& + \, 5.4\cdot 10^{-10}\, G^4+ 1.3\cdot 10^{-11}\, G^5 + 3.3\cdot 10^{-13}\, G^6  \nonumber \\ 
&& + \, 8.0\cdot 10^{-15}\, G^7 + 2.0\cdot 10^{-16} \, G^8 \, , \nonumber \\ 
\delta_2(G) & = & -\,1.2\cdot 10^{-9} + 1.6\cdot 10^{-8}\, G - 5.1\cdot 10^{-8}\, G^2-3.3\cdot 10^{-9}\, G^3\nonumber \\ 
&& - \, 1.3\cdot 10^{-10}\, G^4-4.5\cdot 10^{-12}\, G^5 -1.3\cdot 10^{-13}\, G^6  \nonumber \\ 
&& - \, 3.7\cdot 10^{-15}\, G^7 - 1.0\cdot 10^{-16} \, G^8 \, , \nonumber \\ 
\eeqano

\beqano
\delta_3(G) & = & -\,6.8\cdot 10^{-13} + 1.3\cdot 10^{-11}\, G - 8.5\cdot 10^{-11}\, G^2+1.8\cdot 10^{-10}\, G^3\nonumber \\ 
&& + \, 1.6\cdot 10^{-11}\, G^4+8.6\cdot 10^{-13}\, G^5 +3.4\cdot 10^{-14}\, G^6  \nonumber \\ 
&& + \, 1.1\cdot 10^{-15}\, G^7 + 3.6\cdot 10^{-17} \, G^8 \, , \nonumber \\  
\delta_4(G) & = & -\,4.2\cdot 10^{-16} + 1.1\cdot 10^{-14}\, G - 1.0\cdot 10^{-13}\, G^2+4.5\cdot 10^{-13}\, G^3\nonumber \\ 
&& - \, 7.0\cdot 10^{-13}\, G^4-7.9\cdot 10^{-14}\, G^5 -4.9\cdot 10^{-15}\, G^6  \nonumber \\ 
&& - \, 2.2\cdot 10^{-16}\, G^7 - 8.8\cdot 10^{-18} \, G^8  \, ,\nonumber \\ 
\delta_5(G) & = & -\,2.3\cdot 10^{-19} + 7.5\cdot 10^{-18}\, G - 9.7\cdot 10^{-17}\, G^2+6.2\cdot 10^{-16}\, G^3\nonumber \\ 
&& - \, 1.9\cdot 10^{-15}\, G^4+2.4\cdot 10^{-15}\, G^5 +3.2\cdot 10^{-16}\, G^6  \nonumber \\ 
&& + \, 2.2\cdot 10^{-17}\, G^7 +1.2\cdot 10^{-18} \, G^8 \, . 
\eeqano

\vskip.1in 

The polynomials $\delta_1,...,\delta_5$ for the $1:3$ resonance (see Eq.~\equ{res_ham_13_Gpsi}) for the AS case are given by 
\beqano
\delta_1(G) & = & -\,1.1\cdot 10^{-10} - 2.4\cdot 10^{-9}\, G - 1.2\cdot 10^{-8}\, G^2  - 7.2\cdot 10^{-10}\, G^3 \nonumber \\ 
&& - \, 2.8\cdot 10^{-11}\, G^4-9.1\cdot 10^{-13}\, G^5 -2.6\cdot 10^{-14}\, G^6  \nonumber \\ 
&& - \, 7.4\cdot 10^{-16}\, G^7 -1.9\cdot 10^{-17} \, G^8 \, , \nonumber \\ 
\delta_2(G) & = & -\,2.2\cdot 10^{-18} -9.3\cdot 10^{-17}\, G - 1.4\cdot 10^{-15}\, G^2-9.8\cdot 10^{-15}\, G^3\nonumber \\ 
&& - \, 2.5\cdot 10^{-14}\, G^4-2.6\cdot 10^{-15}\, G^5 -1.6\cdot 10^{-16}\, G^6  \nonumber \\ 
&& - \, 7.7\cdot 10^{-18}\, G^7 - 3.0\cdot 10^{-19} \, G^8 \, , \nonumber \\ 
\delta_3(G) & = & -\,7.2\cdot 10^{-26} -4.4\cdot 10^{-24}\, G - 1.1\cdot 10^{-22}\, G^2-1.5\cdot 10^{-21}\, G^3\nonumber \\ 
&& - \, 1.2\cdot 10^{-20}\, G^4-4.9\cdot 10^{-20}\, G^5 -8.9\cdot 10^{-20}\, G^6  \nonumber \\ 
&& - \, 1.2\cdot 10^{-20}\, G^7 -1.0\cdot 10^{-21} \, G^8 \, , \nonumber \\  
\delta_4(G) & = & -\,2.8\cdot 10^{-33} -2.3\cdot 10^{-31}\, G - 8.2\cdot 10^{-30}\, G^2-1.6\cdot 10^{-28}\, G^3\nonumber \\ 
&& - \, 2.1\cdot 10^{-27}\, G^4-1.7\cdot 10^{-26}\, G^5 -9.0\cdot 10^{-26}\, G^6  \nonumber \\ 
&& - \, 2.6\cdot 10^{-25}\, G^7 - 3.3\cdot 10^{-25} \, G^8  \, ,
\eeqano
and for the HA case, they take the following expression
\beqano
\delta_1(G) & = & -\,1.9\cdot 10^{-9} - 4.2\cdot 10^{-8}\, G - 2.3\cdot 10^{-7}\, G^2  -1.5\cdot 10^{-8}\, G^3 \nonumber \\ 
&& - \, 7.0\cdot 10^{-10}\, G^4-2.8\cdot 10^{-11}\, G^5 -1.0\cdot 10^{-12}\, G^6  \nonumber \\ 
&& - \, 3.9\cdot 10^{-14}\, G^7 -1.4\cdot 10^{-15} \, G^8 \, , \nonumber \\ 
\delta_2(G) & = & -\,6.1\cdot 10^{-16} -2.6\cdot 10^{-14}\, G - 4.2\cdot 10^{-13}\, G^2-3.1\cdot 10^{-12}\, G^3\nonumber \\ 
&& - \, 8.5\cdot 10^{-12}\, G^4-1.0\cdot 10^{-12}\, G^5 -7.3\cdot 10^{-14}\, G^6  \nonumber \\ 
&& - \, 4.2\cdot 10^{-15}\, G^7 - 2.0\cdot 10^{-16} \, G^8 \, , \nonumber \\ 
\delta_3(G) & = & -\,2.9\cdot 10^{-22} -1.8\cdot 10^{-20}\, G - 5.1\cdot 10^{-19}\, G^2-7.3\cdot 10^{-18}\, G^3\nonumber \\ 
&& - \, 6.0\cdot 10^{-17}\, G^4-2.6\cdot 10^{-16}\, G^5 -4.9\cdot 10^{-16}\, G^6  \nonumber \\ 
&& - \, 8.0\cdot 10^{-17}\, G^7 -7.1\cdot 10^{-18} \, G^8 \, , \nonumber \\  
\delta_4(G) & = & -\,1.4\cdot 10^{-28} -1.2\cdot 10^{-26}\, G - 4.7\cdot 10^{-25}\, G^2-1.0\cdot 10^{-23}\, G^3\nonumber \\ 
&& - \, 1.3\cdot 10^{-22}\, G^4-1.1\cdot 10^{-21}\, G^5 -6.3\cdot 10^{-21}\, G^6  \nonumber \\ 
&& - \, 1.9\cdot 10^{-20}\, G^7 - 2.6\cdot 10^{-20} \, G^8  \, .
\eeqano


\printbibliography

\end{document}